\newlength{\defbaselineskip}
\newcommand{\setlinespacing}[1]%
{\setlength{\baselineskip}{#1 \defbaselineskip}}
\theoremstyle{plain}
\newtheorem{theorem}{Theorem}[section]
\newtheorem{lemma}[theorem]{Lemma}
\newtheorem{proposition}[theorem]{Proposition}
\newtheorem{corollary}[theorem]{Corollary}
\theoremstyle{definition}
\theoremstyle{remark}
\newtheorem{remark}[theorem]{Remark}
\numberwithin{equation}{section}
\DeclareMathOperator*{\esssup}{ess\,sup}
\DeclareMathOperator*{\essinf}{ess\,inf}
\begin{document}

\title{
Mean Field Portfolio Games\footnote{We thank the Co-Editor, anonymous Associate Editor, and anonymous referee for many valuable comments and suggestions, which have significantly improved the quality of the paper.}}
\author{Guanxing Fu\footnote{Department of Applied Mathematics, The Hong Kong Polytechnic University, Hung Hom, Kowloon, Hong Kong. Email: guanxing.fu@polyu.edu.hk. G. Fu's research is supported by the Start-up Fund P0035348 from The Hong Kong Polytechnic University.} \quad and \quad Chao Zhou\footnote{Department of Mathematics and Risk Management Institute, National University of Singapore. Email: matzc@nus.edu.sg. C. Zhou’s research is supported by Singapore MOE (Ministry of Educations) AcRF Grants R-146- 000-271-112 and R-146-000-284-114, as well as NSFC Grant No. 11871364.}}

\maketitle

\begin{abstract}
	We study mean field portfolio games with random market parameters, where each player is concerned with not only her own wealth but also  relative performance to her competitors. We use the martingale optimality principle approach to characterize the unique Nash equilibrium in terms of a mean field FBSDE with quadratic growth, which is solvable under a weak interaction assumption. Motivated by the weak interaction assumption, we establish an asymptotic expansion result in powers of the competition parameter. When the market parameters do not depend on the Brownian paths, we obtain the Nash equilibrium in closed form.
\end{abstract}
{\bf AMS Subject Classification:} 93E20, 91B70, 60H30

{\bf Keywords:}{ mean field game, portfolio game, martingale optimality principle, FBSDE.}

\section{Introduction}
Mean field games (MFGs) are a powerful tool to study large population games, where each player has negligible influence on the outcome of the game. Introduced independently  by Huang et al.  \cite{HMC-2006} and Lasry and Lions \cite{LL-2007}, MFGs have received considerable attention in the probability and financial mathematics literature. In this paper, we study a classs of mean field portfolio games with random market parameters by the martingale optimality principle (MOP) approach. 

Assume that there are $N$ risky assets in the market, with price dynamics of asset $i\in\{1,\cdots,N\}$ following
\begin{equation}\label{price-i-intro}
	dS^i_t= S^i_t\Big( h^i_t\,dt+\sigma^i_t\,W^i_t+ \sigma^{i0}_t\,dW^0_t		\Big),
\end{equation}
where the return rate  $h^i$ and the volatility $(\sigma^i,\sigma^{i0})$ are assumed to be bounded progressively measurable stochastic processes; $W^i$ is a Brownian motion describing the idiosyncratic noise to the asset $i$; and $W^0$ is a Brownian motion that is independent of $W^i$, describing  common noise to all risky assets. The interest rate of the risk-free asset is assumed to be zero for simplicity. Let $X^i$ be the wealth process of player $i$, who trades asset $i$, and $\overline X^{-i}$ be the  performance index of player $i$.   Each player solves a utility maximization problem and she is concerned with not only her own wealth $X^i$, but also the ``difference'' between her wealth and the performance index. 
 
 We further assume that the risk preference of players is characterized by power utility functions, i.e., player $i\in\{1,2,\cdots,N\}$ chooses the fraction of her wealth invested in the risky asset $i$ to maximize the objective function:
 \begin{equation}\label{PU-intro}
 	\max_{\pi^i}\quad  \mathbb E\left[	\frac{1}{\gamma^i}\left(X^i_T (  \overline X^{-i}_T )^{-\theta^i}	\right)^{\gamma^i}	\right],
 \end{equation}
 where the wealth process $X^i$ follows
 \begin{equation}\label{wealth-Pi-power}
 	\setlength{\abovedisplayskip}{3pt}
 	\setlength{\belowdisplayskip}{3pt}
 	dX^i_t=\pi^i_tX^i_t\Big( h^i_t\,dt+\sigma^i_t\,dW^i_t+\sigma^{i0}_t\,dW^0_t		\Big),\quad X^i_0=x^i,
 \end{equation}
 where $\overline X^{-i}= \left(	\Pi_{j\neq i} X^j	\right)^{\frac{1}{N-1}} $ is the geometric average of all players' wealth except for player $i$; $\gamma^i\in (-\infty,1)/\{0\}$ is the degree of risk aversion; and $\theta^i\in[0,1]$ is the relative competition parameter: player $i$ is more concerned with her own terminal wealth $X^i_T$ if $\theta^i$ is closer to $0$ and more concerned with the relative distance $\frac{X^i_T}{\overline X^{-i}_T}$ between her terminal wealth and the performance index if $\theta^i$ is closer to $1$.  The goal is to find a Nash equilibrium (NE) $(\pi^{1,*},\cdots,\pi^{N,*})$ such that $\pi^{i,*}$ is the optimal strategy for player $i$ and no one wants to change her strategy unilaterally.  
 
By MOP, as in Hu et al. \cite{HIM-2005} and Rouge and El Karoui \cite{REK-2000} for single player's utility maximization problems, the unique NE for the $N$-player game \eqref{PU-intro}-\eqref{wealth-Pi-power} can be characterized by a multidimensional FBSDE with quadratic growth; see Section \ref{sec:model-N-player}. Although such FBSDE is solvable, the equations are tedious. The analysis can be significantly simplified by studying the corresponding MFG:
\begin{equation}\label{model-MFG-power}
	\left\{ \begin{split}
		1.&~\textrm{Fix }\mu\textrm{ in some suitable space};\\
		2.&~\textrm{Solve the optimization problem: }\\
		&~\mathbb E\left[    \frac{1}{\gamma} (X_T\mu^{-\theta}_T)^\gamma 		\right]\rightarrow\max \textrm{ over }\pi\\
		&~\textrm{such that } dX_t=\pi_tX_t(h_t\,dt+\sigma_t\,dW_t+\sigma^0_t\,dW^0_t),~X_0=x_{};\\
		3.&~\textrm{Search for the fixed point }\mu_t=\exp\left(\mathbb E[\log X^*_t|\mathcal F^0_t]\right),~t\in[0,T],\\
		&~X^*\textrm{ is the optimal wealth from }2 \textrm{ and }\mathcal F^0\textrm{ is the filtration generated by }W^0.
	\end{split}\right.
\end{equation}
In the MFG \eqref{model-MFG-power}, by the approach introduced in \cite{HMC-2006,LL-2007}, it is only necessary to consider a representative player's utility maximization problem with $\mu$ fixed, which in turn should be consistent with the aggregation of  optimal wealth. 

In the mathematical finance literature, the first result on portfolio games with relative performance concerns including $N$-player games and MFGs was obtained by Espinosa and Touzi \cite{ET-2015}: in the context of a complete market, the unique NE was established for general utility functions; in the  context of an incomplete market where each player has a heterogeneous portfolio constraint, by assuming that the drift and volatility of the log price are deterministic, using a BSDE approach, \cite{ET-2015}  obtained a unique NE for exponential utility functions, which was called \textit{deterministic Nash equilibrium} in that paper. Moreover, the convergence from $N$-player games to MFGs was also studied in \cite{ET-2015}. Later, Frei and dos Reis \cite{FR-2011} studied similar portfolio games to \cite{ET-2015} from a different perspective: they constructed counterexamples where no NE exists. In contrast to \cite{ET-2015,FR-2011}, where all players trade common stocks, Lacker and Zariphopoulou \cite{LZ-2019} investigated $N$-player and mean field portfolio games where  the stock price follows \eqref{price-i-intro}, but with constant market parameters. Using a PDE approach, \cite{LZ-2019} established a \textit{constant equilibrium} that was proven to be unique among all constant ones. In addition, portfolio games with mean field interaction have been examined in \cite{RP-2021,RP-2021b, HZ-2021,LS-2020}, where in \cite{RP-2021,RP-2021b} dos Reis and Platonov studied $N$-player games and MFGs with forward utilities; and in \cite{LS-2020}, Lacker and Soret extended the CRRA model in \cite{LZ-2019} to include consumption, by using PDE approaches. In \cite{HZ-2021}, Hu and Zariphopoulou studied   portfolio games in an It\^o-diffusion environment.

%


In this paper, we study portfolio games with power utility functions. The games with exponential utility functions and log utility functions can be studied in the same manner; refer to Remark \ref{remark:exponential} and Remark \ref{remark:log}. 
Our paper makes two main contributions. The first one is the wellposedness result of NE for the portfolio game \eqref{model-MFG-power}. We first establish a one-to-one correspondence between the NE of \eqref{model-MFG-power} with random market parameters and some mean field FBSDE with quadratic growth. 
Such correspondence result is key to prove the uniqueness of the NE result. We then solve the FBSDE under a weak interaction assumption, i.e., the competition parameter $\theta$ is assumed to be small. Such assumption is widely used in the game theory and financial mathematics literature; see \cite{CS-2015,CS-2017,FGHP-2018,FH-2018,HMP-2019,Horst-2005}, among others. In order to achieve this, our idea is to first consider the difference between the FBSDE with the benchmark one when $\theta=0$ to cancel out non-homogenous terms, and then transform the resulting FBSDE into a mean field quadratic BSDE. It is worth noting that although a transformation argument from FBSDE to BSDE was also used in \cite{ET-2015}, there is an essential difference from our transformation: the terminal condition of our resulting BSDE is bounded, which makes it convenient to apply the theory of quadratic BSDE. To solve the BSDE, our idea is to decompose the driver into two parts. Specifically, one part does not depend on $\theta$ and the other part depends on $\theta$, and all mean field terms belong to the second part so that they can be controlled by $\theta$.
To the best of our knowledge, both our existence and uniqueness results are new in the literature. In particular, the contribution of our uniqueness result is two-fold. On the one hand, our uniqueness result partially generalizes the uniqueness result in \cite{ET-2015} in the sense that our utility functions can be beyond exponential ones, our stock price is driven by both idiosyncratic and common noise, and both the market parameters and the admissible strategies can be random; however, there is no trading constraint in our paper. On the other hand, when the market parameters are independent of the Brownian paths, we construct a unique NE  in $L^\infty$ in closed form, which is beyond constant strategies.  This result completely generalizes the result in \cite{LZ-2019}, where only constant NE was studied. Such generalization strongly relies on our FBSDE approach, which shares a similar idea to \cite{CD-2013,ET-2015,FR-2011} and is more powerful than the PDE approach in \cite{LZ-2019}.

Our second contribution is an asymptotic expansion result. Motivated by the weak interaction assumption, we provide an approximation in any order of the value function and the optimal investment for the model with competition in terms of the solutions to the benchmark model without competition when the investor is only concerned with her own wealth. These results enable us to obtain the value function and the optimal investment based only on the benchmark model in the case of a small competition parameter. In order to obtain the asymptotic expansion results, our idea is to start with the FBSDE charaterization of the NE, and establish the expansion of the solution to the FBSDE by studying the iterative system of FBSDEs with coefficients in the BMO space. Our analysis relies on the application of energy inequality and reverse H\"older's inequality for BMO martingales. Asymptotic expansion results in stochastic optimization and game setting were also studied in \cite{Horst2022} and \cite{CS-2015,CS-2017} by PDE analysis. In \cite{Horst2022}, Horst et al. investigated a single-player optimal liquidation problem under ambiguity with respect to price impact parameters. They established a first-order approximation result of the robust model for small uncertainty factors, while our approximation is in any order. In \cite{CS-2015,CS-2017}, Chan and Sircar analyzed continuous time Bertrand and Cournot competitions as MFGs. Instead of solving the forward-backward PDE characterizing the NE, they established a formal asymptotic expansion result in powers of the competition parameter; no rigorous proof of the asymptotic result was provided there.
It is worth noting that the market parameters in \cite{CS-2015,CS-2017,Horst2022} are deterministic, while ours are random.




The remainder of this paper is organized as follows: \\
$\bullet$ After the introduction of notation, in Section \ref{sec: characterization}, we establish an equivalent relationship between the existence of NE of the MFG \eqref{model-MFG-power} and the solvability of some mean field FBSDE. \\
$\bullet$ In Section \ref{sec:wellposedness}, we study the MFG \eqref{model-MFG-power} in detail; in particular, Section \ref{sec:wellposedness} addresses the wellposedness of the MFG with general market parameters by solving the FBSDE introduced in Section \ref{sec: characterization}. Moreover, when the market parameters do not depend on the Brownian paths, we find the NE in closed form.\\
$\bullet$ In Section \ref{sec:asymptotic}, the asymptotic result is established. Specifically, the logarithm of value function and the optimal investment are expanded into any order in powers of $\theta$. \\
$\bullet$
In Section \ref{sec:model-N-player}, we comment on the result of $N$-player games. 
\paragraph{Notation} In the probability space $(\Omega,\mathbb P,\mathbb K=(\mathcal K_t)_{0\leq t\leq T})$, a two-dimensional Brownian motion $\overline W=(W,W^0)^\top$ is defined, where $W$ is the idiosyncratic noise for the representative player, and $W^0$ is the common noise for all players. In addition, $\mathbb G=\{ \mathcal G_t, t\in[0,T]\}$ is assumed to be the augmented natural filtration of $\overline W$. The augmented natural filtration of $W^0$ is denoted by $\mathbb F^0=\{\mathcal F^0_t,t\in[0,T]\}$. Let $\mathcal A$ be a $\sigma$-algebra that is independent of $\mathbb G$. Let $\mathbb F=\{\mathcal F_t, t\in[0,T]	\}$ be the $
\sigma$-algebra generated by $\mathcal A$ and $\mathbb G$.

For a random variable $\xi$, let $\|\xi\|$ be the essential supremum of its absolute value $|\xi|$, and let $\underline\xi$ be its essential infimum. For a sub $\sigma$-algebra $\mathcal H$ of $\mathbb F$, let $\textrm{Prog}(\Omega\times[0,T];\mathcal H)$ be the space of all stochastic processes that are $\mathcal H$-progressively measurable. For each $\eta\in \textrm{Prog}(\Omega\times[0,T];\mathcal H)$, define $\|\eta\|_\infty=\esssup_{\omega\in\Omega,t\in[0,T]} |\eta_t(\omega)| $. Let $L^\infty_{\mathcal H}$ be the space of all essentially bounded stochastic processes, i.e.:
\[
L^\infty_{\mathcal H}=\{	\eta\in\textrm{Prog}(\Omega\times[0,T];\mathcal H): \|\eta\|_\infty<\infty	\}.
\] 
Let $S^\infty_{\mathcal H}$ be the subspace of $L^\infty_{\mathcal H}$, where the trajectories of all processes are continuous. Furthermore, for a probability measure $\mathbb Q$ and for $p>1$, define
\[
S^p_{\mathbb Q,\mathcal H}=\left\{\eta\in \textrm{Prog}(\Omega\times[0,T];\mathcal H): \eta\textrm{ has continuous trajectory and }\|\eta\|_{S^p_{\mathbb Q,\mathcal H}}:=\left(\mathbb E^{\mathbb Q}\left[\sup_{0\leq t\leq T}|\eta_t|^p\right]\right)^{1/p}<\infty			\right\}
\]
and
\[
M^p_{\mathbb Q,\mathcal H}=\left\{	\eta\in\textrm{Prog}(\Omega\times[0,T];\mathcal H): \|\eta\|_{M^p_{\mathbb Q,\mathcal H}}:= \left(\mathbb E^{\mathbb Q}\left[	\left(\int_0^T|\eta_t|^2\,dt\right)^{\frac{p}{2}}	\right]\right)^{1/p}	<\infty	\right\}.
\]
Define the BMO space under $\mathbb Q$ as
\[
H^2_{BMO,\mathbb Q,\mathcal H}=\left\{ \eta\in\textrm{Prog}(\Omega\times[0,T];\mathcal H):		\|\eta\|_{BMO,\mathbb Q,\mathcal H}^2:= \sup_{\tau:\mathcal H-\textrm{stopping time}}\left\|\mathbb E^{\mathbb Q}\left[ \left.	\int_\tau^T |\eta_t|^2\,dt	\right|\mathcal F_\tau	\right]\right\|_\infty	<\infty	\right\}.
\]
In particular, if $\mathbb Q=\mathbb P$, which is the physical measure, and/or $\mathcal H=\mathbb F$, we drop the dependence on $\mathbb Q$ and/or $\mathcal H$ in the definition of the above spaces.
\paragraph{Assumption 1.} 
The initial wealth $x$, risk aversion parameter $\gamma$, and competition parameter $\theta$ of the population are assumed to be bounded  $\mathcal A$-measurable random variables. In addition, $x$ is valued in $(0,\infty)$,  $\gamma$ is valued in $(-\infty,1)/\{ 0\}$, and $\theta$ is valued in $[0,	1]$.

Assume the return rate $h\in L^\infty$ and the volatility $\widetilde\sigma:=(\sigma,\sigma^0)^\top\in L^\infty\times L^\infty$. Furthermore, $|\gamma|$, $|\sigma|+|\sigma^0|$ are bounded away from $0$, i.e., $|\gamma|\geq \underline\gamma>0$ a.s. and $\essinf_{\omega\in\Omega}\inf_{t\in[0,T]}\widetilde\sigma^\top\widetilde\sigma>0$.

\paragraph{Space of Admissible Strategies.} We assume that the space of admissible strategies for the representative player is $H^2_{BMO}$.

\textbf{Definition of NE.} We say that the pair $(\mu^*,\pi^*)$ is an NE of \eqref{model-MFG-power}, if $\pi^*\in H^2_{BMO}$, $\mu^*_t=\exp\left(\mathbb E[\log X^*_t|\mathcal F^0_t]\right)$, for $t\in[0,T]$, and $\mathbb E\left[    \frac{1}{\gamma} \Big(X^{\pi^*}_T(\mu^*_T)^{-\theta}\Big)^\gamma 		\right]\geq \mathbb E\left[    \frac{1}{\gamma} \Big(X^\pi_T(\mu^*_T)^{-\theta}\Big)^\gamma 		\right]$ for each admissible strategy $\pi$. Specifically, $(\mu^*_t)_{0\leq t\leq T}$ is called a solution to \eqref{model-MFG-power}.

\begin{remark}
	(1) In \textbf{Assumption 1}, the assumption that ($x$, $\gamma$, $\theta$) is  $\mathcal A$-measurable is consistent with the formulation in \cite[Remark 5.10]{ET-2015} and the \textit{random type} introduced in \cite{LZ-2019};
	
	(2) The space of admissible strategies is consistent with \cite{FR-2011} and \cite{MPZ-2015};
	
	(3) The NE in \textbf{Definition of NE} is of open-loop type. We make a remark on the closed-loop NE in Remark \ref{rmk:closed-lopp}.    
\end{remark}

\section{MFGs and Mean Field FBSDEs Are Equivalent}\label{sec: characterization}
In this section, we prove that the solvability of some mean field FBSDE is sufficient and necessary for the solvability of the MFG \eqref{model-MFG-power}. This equivalent result is key to establish the uniqueness result of NE. The sufficient part is proven by MOP in \cite{HIM-2005}, and the necessary part is proven by the dynamic programming principle in \cite[Theorem 4.7]{ET-2015} and \cite[Lemma 3.2]{FR-2011}, where the $N$-player game with exponential utility functions and trading constraint, but without idiosyncratic noise, was studied. In the next  proposition, we adapt the argument to our MFG  \eqref{model-MFG-power} with power utility functions. 

\begin{proposition}\label{lem:NE-BSDE-power}
$\bm{(1)}$ If an NE $(\mu^*,\pi^*)$ of the MFG \eqref{model-MFG-power} exists, with $(\widehat\mu^*,\pi^*)\in S^2_{\mathbb F^0}\times H^2_{BMO} $ and 
	\begin{equation}\label{necessary:reverse-power}
		\mathbb E\left[	\frac{1}{\gamma} e^{\gamma (\widehat X^{\pi^*}_T-\theta\widehat\mu^*_T)} \Big|\mathcal F_\cdot	\right] \textrm{ satisfying }R_p\textrm{ for some }p>1, \footnote{Refer to Appendix \ref{app:reverse} for the definition of the condition $R_p$.}
	\end{equation}	
where $\widehat X^{\pi^*}=\log X^{\pi^*}$ is the log-wealth and $\widehat\mu^*=\log\mu^*$, 
	then the following mean field FBSDE admits a solution, such that $(Z,Z^0)\in H^2_{BMO}\times H^2_{BMO}$,
	\begin{equation}\label{FBSDE-power-2}
		\left\{\begin{split}
			d\widehat X_t=&~	\frac{	h_t+\sigma_tZ_t+\sigma^0_tZ^0_t	}{(1-\gamma)( \sigma^2_t+(\sigma^0_t)^2 )}\left\{ \left(	h_t-\frac{h_t+\sigma_tZ_t+\sigma^0_tZ^0_t}{2(1-\gamma)}\right)\,dt+\sigma_t\,dW_t+\sigma^0_t\,dW^0_t		\right\}, 		\\
			-dY_t=&~\left(\frac{Z^2_t+(Z^0_t)^2}{2}+\frac{\gamma}{2(1-\gamma)}\frac{ (h_t+\sigma_t Z_t+\sigma^0_tZ^0_t)^2    }{\sigma^2_t+(\sigma^0_t)^2}\right)\,dt-Z_t\,dW_t-Z_t^0\,dW^0_t,\\
			\widehat X_0=&~\log(x_{}),~Y_T=-\gamma\theta\mathbb E[\widehat X_T|\mathcal F^0_T].
		\end{split}\right.
	\end{equation}
	
$\bm{(2)}$ If the FBSDE \eqref{FBSDE-power-2} admits a solution, such that $(Z,Z^0)\in H^2_{BMO}\times H^2_{BMO}$, then the MFG \eqref{model-MFG-power} admits an NE $(\mu^*,\pi^*)$, such that $(\widehat\mu^*,\pi^*)\in S^2_{\mathbb F^0}\times H^2_{BMO}$ and \eqref{necessary:reverse-power} holds. 
	
	The relationship is given by $\pi^*=\frac{h+\sigma Z+\sigma^0Z^0}{(1-\gamma)(  \sigma^2+(\sigma^0)^2  )}$.
\end{proposition}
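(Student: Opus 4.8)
The plan is to prove both directions through a single martingale optimality principle (MOP) built on the multiplicative ansatz $R^\pi_t = \frac{1}{\gamma}\exp\big(\gamma\widehat X^\pi_t + Y_t\big)$, where $\widehat X^\pi = \log X^\pi$ and $(Y,Z,Z^0)$ is the backward component of \eqref{FBSDE-power-2}. First I would apply It\^o's formula to the log-wealth to get $d\widehat X^\pi_t = \big(\pi_t h_t - \tfrac12\pi_t^2(\sigma_t^2+(\sigma^0_t)^2)\big)\,dt + \pi_t\sigma_t\,dW_t + \pi_t\sigma^0_t\,dW^0_t$, and then apply It\^o's formula to $R^\pi$. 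A direct computation shows that the drift of $R^\pi$ equals $R^\pi_t\big(g_t(\pi_t)-f_t\big)$, where $f_t$ is the driver of the $Y$-equation and $g_t(\pi) = \gamma\pi(h_t+\sigma_t Z_t+\sigma^0_t Z^0_t) + \tfrac{\gamma(\gamma-1)}{2}\pi^2(\sigma_t^2+(\sigma^0_t)^2) + \tfrac12(Z_t^2+(Z^0_t)^2)$ is a quadratic in $\pi$. Since $\gamma<1$ and $\gamma\neq 0$, the leading coefficient $\gamma(\gamma-1)$ is negative for $\gamma\in(0,1)$ and positive for $\gamma<0$; in either case the pointwise optimization of $g_t$ over $\pi$ has the unique critical point $\pi^*_t = \frac{h_t+\sigma_t Z_t+\sigma^0_t Z^0_t}{(1-\gamma)(\sigma_t^2+(\sigma^0_t)^2)}$, and substituting it back yields exactly the driver $f_t$ appearing in \eqref{FBSDE-power-2}. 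This reverse-engineers the backward equation and simultaneously identifies the candidate feedback $\pi^*$, which is the relationship claimed in the statement.

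For the sufficiency in $\bm{(2)}$, given an FBSDE solution with $(Z,Z^0)\in H^2_{BMO}\times H^2_{BMO}$, I would define $\pi^*$ by the stated formula, set $\mu^*_t = \exp(\mathbb E[\widehat X_t\mid\mathcal F^0_t])$ so that the forward $\widehat X$ is the optimal log-wealth and $Y_T=-\gamma\theta\mathbb E[\widehat X_T\mid\mathcal F^0_T]$ reads $Y_T=-\gamma\theta\widehat\mu^*_T$, and then check three things. First, $\pi^*\in H^2_{BMO}$ follows from the boundedness of $h,\sigma,\sigma^0,1/\gamma$ together with $\sigma^2+(\sigma^0)^2$ being bounded away from zero and $(Z,Z^0)\in H^2_{BMO}$. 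Second, by the pointwise optimization above the drift of $R^\pi$ has the correct sign (nonpositive for $\gamma\in(0,1)$ since $R^\pi>0$, and the corresponding product of signs for $\gamma<0$ where $R^\pi<0$), so $R^\pi$ is a local supermartingale for every admissible $\pi$ and a local martingale for $\pi^*$; upgrading these to true (super)martingales is where the integrability input is needed. Third, the terminal identity $R^\pi_T = \frac{1}{\gamma}\big(X^\pi_T(\mu^*_T)^{-\theta}\big)^\gamma$ and the fact that $R^\pi_0=\frac{1}{\gamma}\exp(\gamma\log x + Y_0)$ is independent of $\pi$ give $\mathbb E[R^{\pi}_T]\le R^\pi_0 = R^{\pi^*}_0 = \mathbb E[R^{\pi^*}_T]$, i.e.\ the MOP inequality, which is precisely optimality of $\pi^*$ for the fixed $\mu^*$. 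Since $\mu^*$ is a fixed point by construction, $(\mu^*,\pi^*)$ is an NE, and the reverse H\"older property \eqref{necessary:reverse-power} is recovered along the way from the BMO bounds.

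For the necessity in $\bm{(1)}$, I would run the argument of \cite[Theorem 4.7]{ET-2015} and \cite[Lemma 3.2]{FR-2011} adapted to power utility. Fixing the equilibrium $\mu^*$, consider the dynamic value process $V_t=\esssup_{\pi}\mathbb E\big[\frac{1}{\gamma}(X^\pi_T(\mu^*_T)^{-\theta})^\gamma\mid\mathcal F_t\big]$, which is strictly signed according to $\mathrm{sgn}(\gamma)$ so that $\gamma V_t>0$, and define $Y_t:=\log(\gamma V_t)-\gamma\widehat X^{\pi^*}_t$. By construction $R^{\pi^*}=V$, which the dynamic programming principle renders a true martingale, while $R^\pi$ is a supermartingale for every $\pi$; applying the martingale representation theorem to the martingale part of $Y$ produces $(Z,Z^0)$, and matching drifts using the supermartingale-to-martingale transition at $\pi=\pi^*$ forces the driver to equal $f_t$, recovering \eqref{FBSDE-power-2}. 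The terminal condition follows from $V_T=\frac{1}{\gamma}\exp(\gamma(\widehat X_T-\theta\widehat\mu^*_T))$ together with the fixed-point identity $\widehat\mu^*_T=\mathbb E[\widehat X_T\mid\mathcal F^0_T]$, and $(Z,Z^0)\in H^2_{BMO}$ is read off from the assumed $R_p$-property of $V$ via the reverse H\"older inequality.

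The main obstacle, in both directions, is the integrability bookkeeping needed to turn local (super)martingales into genuine (super)martingales and to make the MOP inequality rigorous. This is exactly where the BMO framework and the reverse H\"older condition $R_p$ enter: for $\pi^*$ one needs the stochastic exponential generating $R^{\pi^*}$ to be uniformly integrable (Kazamaki's criterion for BMO martingales), while for a general admissible $\pi\in H^2_{BMO}$ one controls the exponential moments of $\gamma\widehat X^\pi + Y$ through energy inequalities and the John--Nirenberg estimate. A secondary but genuine difficulty is the sign split in $\gamma$: for $\gamma<0$ the utility is negative and $R^\pi$ is bounded above rather than below, so the passage from local supermartingale to supermartingale cannot rely on nonnegativity and must instead be justified through the $R_p$-integrability of the terminal value. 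Handling the mean field terminal datum $-\gamma\theta\mathbb E[\widehat X_T\mid\mathcal F^0_T]$ --- in particular ensuring that $\widehat\mu^*\in S^2_{\mathbb F^0}$ and that the fixed-point relation is preserved under the equivalence --- is the remaining point requiring care.
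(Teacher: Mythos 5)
Your proposal follows essentially the same route as the paper: part $\bm{(2)}$ via the martingale optimality principle with the ansatz $R^\pi_t=\frac{1}{\gamma}e^{\gamma\widehat X^\pi_t+Y_t}$ and the pointwise quadratic optimization in $\pi$ (which correctly reproduces both the feedback formula for $\pi^*$ and the driver of the $Y$-equation, with the right sign discussion for $\gamma\in(0,1)$ versus $\gamma<0$), and part $\bm{(1)}$ via the dynamic value process and the supermartingale-to-martingale characterization adapted from Espinosa--Touzi and Frei--dos Reis, with $Y$ defined from the value process and $(Z,Z^0)$ obtained by martingale representation and the $R_p$ condition. The computations and the role of the BMO/reverse H\"older machinery match the paper's argument, so the proposal is correct and essentially identical in approach.
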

\begin{proof}
$\bm{(1)}$ Let $(\mu^*,\pi^*)$ be an NE of \eqref{model-MFG-power} such that \eqref{necessary:reverse-power} holds. Define
\[
M^\pi_t=e^{\gamma\widehat X^\pi_t} \esssup_{\kappa\in H^2_{BMO}} \mathbb E\left[ \frac{1}{\gamma}e^{\gamma(\widehat X_T^\kappa-\widehat X^\kappa_t-\theta\widehat\mu^*_T)}	\Big|\mathcal F_t		\right],
\]
where $\widehat X^\kappa$ is the log-wealth associated with the strategy $\kappa$.
Following the argument in \cite[Theorem 4.7]{ET-2015} and \cite[Lemma 3.2]{FR-2011}, $M^\pi$ has a continuous version which is a supermartingale for all $\pi$ and a martingale for $\pi^*$, and there exists a $\breve Z\in H^2_{BMO}\times H^2_{BMO}$, such that
\[
M^{\pi^*}_t=M^{\pi^*}_0e^{\int_0^t\breve Z^\top_s\,d\overline W_s-\frac{1}{2}\int_0^t\breve Z^\top_s\breve Z_s\,ds}.
\]
Straightforward calculation implies that
\begin{equation}
	\begin{split}
		M^{\pi}_t=&~e^{-\gamma( \widehat X^{\pi^*}_t-\widehat X^\pi_t )}M^{\pi^*}_t\\
		=&~M^{\pi^*}_0\mathcal E\Big(\int_0^t ( -\gamma\pi^*_s\widetilde\sigma^\top_s+\gamma\pi_s\widetilde\sigma^\top_s+\breve Z^\top_s  )\,d\overline W_s		\Big)\exp\Big(\int_0^t\widetilde f_s\,ds	\Big),
	\end{split}
\end{equation}
where
\[
\widetilde f=-\gamma\pi^*h+\frac{\gamma}{2}(\pi^*)^2\widetilde\sigma^\top\widetilde\sigma +\gamma\pi h-\frac{\gamma}{2}\pi^2\widetilde\sigma^\top\widetilde\sigma+\frac{1}{2}\Big(	\gamma\pi^*\widetilde\sigma^\top-
\gamma\pi\widetilde\sigma^\top \Big)\Big(	\gamma\pi^*\widetilde\sigma-
\gamma\pi\widetilde\sigma \Big) +(-\gamma\pi^*\widetilde\sigma^\top+\gamma\pi\widetilde\sigma^\top )\breve Z.
\]
Let $\mathring Z=\breve Z-\gamma\pi^*\widetilde\sigma$. Then $\widetilde f$ can be rewritten as
\[
\widetilde f=\frac{\gamma}{2(1-\gamma)\widetilde\sigma^\top\widetilde\sigma}\Big|(1-\gamma)\widetilde\sigma^\top\widetilde\sigma\pi^*-h-\widetilde\sigma^\top\mathring Z		\Big|^2-  \frac{\gamma}{2(1-\gamma)\widetilde\sigma^\top\widetilde\sigma}\Big|(1-\gamma)\widetilde\sigma^\top\widetilde\sigma\pi-h-\widetilde\sigma^\top\mathring Z		\Big|^2.
\]
Since $M^\pi$ is a supermartingale, $\exp\Big(\int_0^\cdot f_s\,ds	\Big)$ is nonincreasing if $\gamma>0$ and nondecreasing if $\gamma<0$. As a result, $\frac{f}{\gamma}$ is nonpositive. Thus,
$\pi^*=\frac{h+\widetilde\sigma^\top\mathring Z}{(1-\gamma)\widetilde\sigma^\top\widetilde\sigma}$. Define $Y=\log\Big(	M^{\pi^*}\exp(-\gamma\widehat X^{\pi^*})\Big)$. Then $(\log X^{\pi^*},Y,\mathring Z)$ satisfies \eqref{FBSDE-power-2}.

$\bm{(2)}$ For each strategy $\pi\in H^2_{BMO}$, define $R^{\pi}_t=\frac{1}{\gamma}X^{\gamma}_t e^{Y_t}$, where $dY_t=f_t(Z_t,Z^0_t)\,dt+Z_t\,dW_t+Z^0_t\,dW^0_t$, $Y_T=-\gamma\theta\mu^*_T$ with $f$ to be determined, such that
\begin{equation}\label{3points-power}
	\begin{split}
		\bullet&~ R^{\pi}_0\textrm{ is independent of }\pi;\\
		\bullet&~ R^\pi\textrm{ is a supermartingale for all }\pi\textrm{ and a martingale for some }\pi^*;\\
		\bullet&~ R^\pi_T=\frac{1}{\gamma}(X_T (\mu_T^*)^{-\theta} )^\gamma.
	\end{split}
\end{equation}
The three points in \eqref{3points-power} indicate that $\mathbb E[R^{\pi^*}_T]=\mathbb E[R^{\pi^*}_0]=\mathbb E[R^\pi_0]\geq \mathbb E[R^\pi_T]$ for all $\pi$.

Note that
\begin{equation*}
	\begin{split}
		R^\pi_t
		=&~\frac{1}{\gamma}x_{}^\gamma\exp(Y_0)\exp\left(	\int_0^t	\left(  \gamma\pi_sh_s-\frac{\gamma\pi^2_s}{2}(\sigma^2_s+(\sigma^0_s)^2	 )	+f_s(Z_s,Z^0_s)	 +\frac{1}{2}( \gamma\pi_s\sigma_s+Z_s )^2 +\frac{1}{2}(  \gamma\pi_s\sigma^0_s+Z^0_s )^2 \right)	\,ds 	\right)\\
		&~\times \mathcal E\left( \int_0^t(\gamma\pi_s\sigma_s+Z_s)\,dW_s+\int_0^t(\gamma\pi_s\sigma^0_s+Z^0_s)\,dW^0_s		\right)\\
		:=&~  \frac{1}{\gamma}x_{}^\gamma\exp(Y_0)\exp\left( \int_0^t\widetilde f_s(\pi_s,Z_s,Z^0_s)\,ds	\right)\mathcal E\left( \int_0^t(\gamma\pi_s\sigma_s+Z_s)\,dW_s+\int_0^t(\gamma\pi_s\sigma^0_s+Z^0_s)\,dW^0_s		\right).
	\end{split}
\end{equation*}
Since $\pi\in H^2_{BMO}$, $\mathcal E\left( \int_0^t(\gamma\pi_s\sigma_s+Z_s)\,dW_s+\int_0^t(\gamma\pi_s\sigma^0_s+Z^0_s)\,dW^0_s		\right)$ is a martingale. In order to make $R^\pi$ satisfy the second point of \eqref{3points-power}, we choose $f$, such that $\widetilde f(\pi,Z,Z^0)$ is nonpositive for all $\pi$ and zero for some $\pi^*$. By rearranging terms we have the following equation
\begin{equation*}
	\begin{split}
		\widetilde f(\pi,Z,Z^0)=  -\frac{\gamma-\gamma^2}{2}(\sigma^2+(\sigma^0)^2		) \pi^2	+(	\gamma h+\sigma\gamma Z+\gamma\sigma^0 Z^0	)\pi+\frac{Z^2+(Z^0)^2}{2}+f(Z,Z^0).
	\end{split}
\end{equation*}
By choosing
\[
\pi^*=\frac{h+\sigma Z+\sigma^0Z^0}{(1-\gamma)(  \sigma^2+(\sigma^0)^2  )}
\]
and
\[
f(Z,Z^0)=-\frac{Z^2+(Z^0)^2}{2}-\frac{\gamma}{2(1-\gamma)}\frac{ (h+\sigma Z+\sigma^0Z^0)^2    }{\sigma^2+(\sigma^0)^2},
\]
it holds that $\widetilde f$ is nonpositive for all $\pi$, and $\widetilde f(\pi^*,Z,Z^0)=0$. 
Thus, by \eqref{3points-power}, an NE of \eqref{model-MFG-power} exists if the following FBSDE admits a solution with $(Z,Z^0)\in H^2_{BMO}\times H^2_{BMO}$
\begin{equation}\label{FBSDE-power-1}
	\left\{\begin{split}
		dX_t=&~	\frac{	h_t+\sigma_tZ_t+\sigma^0_tZ^0_t	}{(1-\gamma)( \sigma^2_t+(\sigma^0_t)^2 )}X_t( h_t\,dt+\sigma_t\,dW_t+\sigma^0_t\,dW^0_t		)		\\
		-dY_t=&~\left(\frac{Z^2_t+(Z^0_t)^2}{2}+\frac{\gamma}{2(1-\gamma)}\frac{ (h_t+\sigma_t Z_t+\sigma^0_tZ^0_t)^2    }{\sigma^2_t+(\sigma^0_t)^2}\right)\,dt-Z_t\,dW_t-Z_t^0\,dW^0_t,\\
		X_0=&~x,~Y_T=-\gamma\theta \mathbb E[\log X_T|\mathcal F^0_T]
	\end{split}\right.
\end{equation}
Let $\widehat X= \log X$. \eqref{FBSDE-power-1} is equivalent to \eqref{FBSDE-power-2}.	
\end{proof}

\begin{remark}
	The proof of $\bm{(2)}$ in Proposition \ref{lem:NE-BSDE-power} relies on MOP in \cite{HIM-2005}. The essential difference between our proof and \cite{HIM-2005} is  the choice of strategies; we consider the fraction of the
	wealth invested in stock $\pi$ as our strategy, while in \cite{HIM-2005} Hu et al.  considered the scaled one $\widetilde\pi:=\pi (\sigma,\sigma^0)$ as a strategy. We claim that the choice in \cite{HIM-2005} is not appropriate in the game-theoretic version of utility maximization problems. The reason is that $\sigma$ and $\sigma^0$ do not have symmetric status in the game; the former is the volatility of the idiosyncratic noise, while the latter is the volatility of the common noise.  To illustrate the difference resulting from the choice of strategies, we assume that all coefficients are  $\mathcal A$-measurable random variables. Following the argument in \cite[Section 3]{HIM-2005}, the optimal strategy is given by 
	\[
			\widetilde\pi^* =\frac{1}{1-\gamma} (\widetilde {\bm Z}+\Theta), 
	\] 
	where $\Theta=\left(\frac{\sigma h}{\sigma^2+(\sigma^0)^2},\frac{\sigma^0h}{\sigma^2+(\sigma^0)^2}		\right)$, and $\widetilde {\bm Z}=(Z,Z^0)$ together with some $(\widetilde {\bm X},\widetilde {\bm Y})$ satisfies the   FBSDE 
	\begin{equation}\label{FBSDE:HIM}
		\left\{\begin{split}
			d\widetilde {\bm X}_t=&~\widetilde \pi^*_t\left(\Theta^\top -\frac{1}{2}(\widetilde\pi^*_t)^\top		\right)\,dt+\widetilde\pi^*_t\,d\overline W_t,\\
			-d\widetilde {\bm Y}_t= &~ \frac{\gamma| \widetilde {\bm Z}_t+\Theta |^2}{2(1-\gamma)} + \frac{|\widetilde {\bm Z}_t|^2}{2} \,dt-\widetilde {\bm Z}_t\,d\overline W_t,\\
			\widetilde {\bm X}_0=&~\log(x),\quad \widetilde{\bm Y}_T=-\gamma\theta\widehat\mu^*_T,
		\end{split}\right.
	\end{equation}
	with $\widehat\mu^*_T=\mathbb E[\widetilde {\bm X}_T|\mathcal F^0_T]$.  One can verify directly that the $Z$-component of the solution to \eqref{FBSDE:HIM} is
	\[
	\widetilde{\bm Z}=	\left( 0,   -\frac{ \theta\gamma\mathbb E\left[ \frac{\sigma^0 h}{(1-\gamma)( \sigma^2+(\sigma^0)^2 )}			\right] }{  1+\mathbb E\left[  \frac{\theta\gamma}{1-\gamma}		\right]    }				\right). 
	\]
	Thus, the optimal strategy is 
	\begin{equation}\label{pi-HIM}
		\widetilde \pi^*=\frac{1}{1-\gamma}\left(	 \frac{\sigma h}{\sigma^2+(\sigma^0)^2}	,  	~  \frac{\sigma^0h}{\sigma^2+(\sigma^0)^2}-\frac{ \theta\gamma\mathbb E\left[ \frac{\sigma^0 h}{(1-\gamma)( \sigma^2+(\sigma^0)^2 )}			\right] }{  1+\mathbb E\left[  \frac{\theta\gamma}{1-\gamma}		\right]    }		\right):=\pi^*(\sigma,\sigma^0). 
	\end{equation}
Multiplying $(\sigma,\sigma^0)^\top$ on both sides of \eqref{pi-HIM}, we get
	\[
		\pi^*=\frac{1}{(1-\gamma)( \sigma^2+(\sigma^0)^2 )} \left\{	h-\frac{\theta\gamma\sigma^0\mathbb E\left[\frac{\sigma^0h}{(1-\gamma)(\sigma^2+(\sigma^0)^2)}	\right]}{1+\mathbb E\left[\frac{\theta\gamma}{1-\gamma}	\right]}			\right\},
	\]
which is surprisingly different from our Theorem	 \ref{thm:constant-equilibrium-MFG} and Corollary \ref{corollary:LZ}, unless in the following two special cases:      

$\bullet$ $\sigma=0:$ in this case, all players trade a common stock; this case was considered in \cite{ET-2015,FR-2011}.
\newline	$\bullet$ $\theta=0$: in this case, there is no competition; this is the single-player utility maximization problem considered in \cite{HIM-2005}.						

Given the above argument, we provided the detailed proof in Proposition \ref{lem:NE-BSDE-power}$\bm{(2)}$ for readers' convenience.
\end{remark}

\begin{remark}[MFGs with exponential utility functions] 	\label{remark:exponential}
If each player uses an exponential utility criterion, then the MFG becomes
	\begin{equation}\label{model-MFG-exp}
		\left\{ \begin{split}
			1.&~\textrm{Fix }\mu\textrm{ in some suitable space};\\
			2.&~\textrm{Solve the optimization problem: }\\
			&~\mathbb E\left[    - e^{-\alpha(X_T-\theta\mu_T)} 		\right]\rightarrow\max \textrm{ over }\pi\\
			&~\textrm{such that } dX_t=\pi_t(h_t\,dt+\sigma_t\,dW_t+\sigma^0_t\,dW^0_t),~X_0=x_{exp};\\
			3.&~\textrm{Search for the fixed point }\mu_t=\mathbb E[X^*_t|\mathcal F^0_t],~t\in[0,T],\\
			&~X^*\textrm{ is the optimal wealth from }2.
		\end{split}\right.
	\end{equation}
	By the same analysis as in Proposition \ref{lem:NE-BSDE-power}, the existence of NE of the MFG \eqref{model-MFG-exp} is equivalent to the solvability of the following FBSDE
	\begin{equation}\label{MF-FBSDE}
		\left\{\begin{split}
			dX_t=&~\frac{\alpha\sigma_t Z_t+\alpha\sigma^0_tZ^0_t+h_t  }{\alpha( \sigma^2_t+(\sigma^0_t)^2 )}(h_t\,dt+\sigma_t\,dW_t+\sigma^0_t\,dW^0_t),\\
			dY_t=&~\left(\frac{(\alpha\sigma_t Z_t+\alpha\sigma^0_t Z^0_t+h_t)^2}{ 2\alpha( \sigma^2_t+(\sigma^0_t)^2 )  }-\frac{\alpha}{2}( Z_t^2+(Z^0_t)^2 )\right)\,dt+Z_t\,dW_t+Z^0_t\,dW^0_t,\\
			X_0=&~x_{exp},~Y_T=\theta\mathbb E[X_T|\mathcal F^0_T].
		\end{split}\right.
	\end{equation}
	The relationship is given by $\pi^*=\frac{  \alpha\sigma Z+\alpha\sigma Z^0+h			}{\alpha\big( \sigma^2+(\sigma^0)^2\big)}$.  The FBSDE \eqref{MF-FBSDE} can be solved in exactly the same manner as \eqref{FBSDE-power-2}. Consequently, in this paper, we will only consider the game with power utility functions, and the analysis of the exponential case is available upon request. 				
\end{remark}

\begin{remark}[MFGs with log utility functions]\label{remark:log}
	If each player uses a log utility criterion, then the MFG becomes
	\begin{equation}\label{model-MFG-log}
		\left\{ \begin{split}
			1.&~\textrm{Fix }\mu\textrm{ in some suitable space};\\
			2.&~\textrm{Solve the optimization problem: }\\
			&~\mathbb E\left[    \log \big(X_T\mu^{-\theta}_T\big) 		\right]\rightarrow\max \textrm{ over }\pi\\
			&~\textrm{such that } dX_t=\pi_tX_t(h_t\,dt+\sigma_t\,dW_t+\sigma^0_t\,dW^0_t),~X_0=x_{log};\\
			3.&~\textrm{Search for the fixed point }\mu_t=\exp\left(\mathbb E[\log X^*_t|\mathcal F^0_t]\right),~t\in[0,T],\\
			&~X^*\textrm{ is the optimal wealth from }2.
		\end{split}\right.
	\end{equation}
Note that $\arg\max_\pi\mathbb E\left[    \log \big(X^\pi_T\mu^{-\theta}_T\big) 		\right]=\arg\max_\pi\mathbb E[\log X^\pi_T]$. Therefore, the MFG with log utility criterion is decoupled; each player makes her decision by disregarding her competitors. By \cite{HIM-2005}, the NE of \eqref{model-MFG-log} is given by
\begin{equation}\label{NE-log}
	\mu^*=\exp\Big(	\mathbb E[\log X_T|\mathcal F^0_T]\Big),					\qquad \pi^*=\frac{h}{\sigma^2+(\sigma^0)^2},
\end{equation}
where $X$ together with some $(Y,Z)$ is the unique solution to the (trivially solvable) FBSDE 
\begin{equation}\label{FBSDE-log}
	\left\{\begin{split}
		dX_t=&~\frac{h_t}{\sigma^2_t+(\sigma^0_t)^2}X_t(	h_t\,dt+\sigma_t\,dW_t+\sigma^0_t\,dW^0_t	),\\
		-dY_t=&~\frac{h^2_t}{2(\sigma^2_t+(\sigma^0_t)^2)}\,dt+Z_t\,dW_t+Z^0_t\,dW^0_t,\\
		X_0=&~x_{log},~Y_T=-\theta\mathbb E[\log X_T|\mathcal F^0_T].
	\end{split}\right.
\end{equation}
Let $(\mu',\pi')$ be any other NE of \eqref{model-MFG-log}. Given $\mu'$, by MOP in \cite{HIM-2005}, the optimal response is $\frac{h}{\sigma^2+(\sigma^0)^2}$, which is unique since the log utility function is concave. Thus, $\pi'=\frac{h}{\sigma^2+(\sigma^0)^2}$ and $\mu'_t=\exp\Big(\mathbb	E[\log X_t|\mathcal F^0_t]\Big)$, $t\in[0,T]$ and $X$ is the unique solution to \eqref{FBSDE-log}. Therefore, $(\mu^*,\pi^*)=(\mu',\pi')$ and the NE of \eqref{model-MFG-log} is unique. 
\end{remark}

In Section \ref{sec:wellposedness}, we will study the MFG \eqref{model-MFG-power} by examining the FBSDE   \eqref{FBSDE-power-2}. In particular, we will  consider the difference between the FBSDE \eqref{FBSDE-power-2} with the benchmark one when the competition parameter $\theta=0$, and then transform the resulting FBSDE into some BSDE.

\section{Wellposedness of the FBSDE \eqref{FBSDE-power-2} and the MFG   \eqref{model-MFG-power}}\label{sec:wellposedness}

\subsection{The Adjusted Mean Field FBSDE}
By Proposition \ref{lem:NE-BSDE-power}, to solve the MFG  \eqref{model-MFG-power}, it is equivalent to solve the mean field FBSDE  \eqref{FBSDE-power-2}.  
In order to solve \eqref{FBSDE-power-2}, we 
compare \eqref{FBSDE-power-2} with the benchmark FBSDE associated with the single player's utility maximization problem, i.e., the utility game with $\theta=0$. When $\theta=0$, \eqref{FBSDE-power-2} is decoupled into
\begin{equation}\label{MF-FBSDE-0}
	\left\{\begin{split}
		d\widehat X_t=&~	\frac{	h_t+\sigma_tZ_t+\sigma^0_tZ^0_t	}{(1-\gamma)( \sigma^2_t+(\sigma^0_t)^2 )}\left\{ \left(	h_t-\frac{h_t+\sigma_tZ_t+\sigma^0_tZ^0_t}{2(1-\gamma)}\right)\,dt+\sigma_t\,dW_t+\sigma^0_t\,dW^0_t		\right\}, 		\\
	-dY_t=&~\left(\frac{Z^2_t+(Z^0_t)^2}{2}+\frac{\gamma}{2(1-\gamma)}\frac{ (h_t+\sigma_t Z_t+\sigma^0_tZ^0_t)^2    }{\sigma^2_t+(\sigma^0_t)^2}\right)\,dt-Z_t\,dW_t-Z_t^0\,dW^0_t,\\
	\widehat X_0=&~\log(x_{}),~Y_T=0.
	\end{split}\right.
\end{equation}
The solvability of the FBSDE \eqref{MF-FBSDE-0} is summarized in the following proposition.
	\begin{proposition}\label{lem:YZ-o}
	The FBSDE \eqref{MF-FBSDE-0} has a unique solution in $\bigcap_{p>1}S^p\times S^\infty\times H^2_{BMO}\times H^2_{BMO}$. 
\end{proposition}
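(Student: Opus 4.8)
The plan is to exploit the special structure of \eqref{MF-FBSDE-0}: because the terminal condition is $Y_T=0$ (and not a functional of $\widehat X_T$ as in the full system \eqref{FBSDE-power-2}), the system is triangular. The backward triple $(Y,Z,Z^0)$ solves a BSDE whose driver and terminal value depend only on $(Z,Z^0)$ and on the bounded coefficients $h,\sigma,\sigma^0,\gamma$, with no dependence on $\widehat X$ or on $Y$ itself. I would therefore solve the BSDE first to obtain $(Y,Z,Z^0)\in S^\infty\times H^2_{BMO}\times H^2_{BMO}$, and then recover $\widehat X$ by direct integration, since the forward equation has no $\widehat X$ on its right-hand side.

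First I would analyze the driver, writing $z=(Z,Z^0)$ and $\widetilde\sigma_t=(\sigma_t,\sigma^0_t)$,
\[
f(t,z)=\tfrac12|z|^2+\frac{\gamma}{2(1-\gamma)}\frac{(h_t+\widetilde\sigma_t\cdot z)^2}{|\widetilde\sigma_t|^2}.
\]
Its quadratic part is the form with matrix $\tfrac12 I+\frac{\gamma}{2(1-\gamma)|\widetilde\sigma|^2}\widetilde\sigma\widetilde\sigma^\top$, whose eigenvalues are $\tfrac12$ (orthogonal to $\widetilde\sigma$) and $\frac{1}{2(1-\gamma)}$ (along $\widetilde\sigma$). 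Since $\gamma<1$ and $\gamma$ is bounded, both are strictly positive and bounded away from $0$, so $f$ is a coercive, convex quadratic in $z$ with bounded coefficients, and $f(t,0)$ is bounded. With the bounded terminal value $0$, this places us squarely in the theory of quadratic BSDEs as in \cite{HIM-2005}: a solution with $Y\in S^\infty$ exists by the Kobylanski-type monotone stability (or convex-driver) construction. The $H^2_{BMO}$ property is then obtained cleanly from the identity $Y_\tau=\mathbb E[\int_\tau^T f(s,z_s)\,ds\mid\mathcal F_\tau]$ (valid because $Y_T=0$ and the stochastic integral is a martingale): combined with the coercivity $f(t,z)\ge c_0|z|^2-C$, this yields $\mathbb E[\int_\tau^T|z_s|^2\,ds\mid\mathcal F_\tau]\le c_0^{-1}(\|Y\|_\infty+CT)$ uniformly over stopping times $\tau$, which is exactly the BMO bound for $(Z,Z^0)$.

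For uniqueness in $S^\infty\times H^2_{BMO}\times H^2_{BMO}$ I would take two solutions, form the difference, and linearize: since $f$ is locally Lipschitz with local constant affine in $|z|$ and the $Z$-parts are BMO, the difference of the two drivers can be written as $\langle\beta_t,(Z-Z')\rangle$ with $\beta\in H^2_{BMO}$; a Girsanov change of measure driven by $\beta$ then turns the difference equation into a martingale with zero terminal value, forcing $Y=Y'$ and $(Z,Z^0)=(Z',Z^{0\prime})$. (Convexity of $f$ provides an alternative $\theta$-difference argument.) Finally, with $(Z,Z^0)\in H^2_{BMO}$ fixed, the forward component is given explicitly by $\widehat X_t=\log x+\int_0^t b_s\,ds+\int_0^t a_s\,d\overline W_s$; the diffusion coefficient $a$ is (bounded)$\times$(bounded $+$ BMO), hence in $H^2_{BMO}$, so its stochastic integral is a BMO martingale and lies in $S^p$ for all $p>1$, while the drift satisfies $\int_0^T|b_s|\,ds\lesssim\int_0^T|Z_s|^2+|Z^0_s|^2\,ds$, which has moments of every order by the energy and reverse-Hölder (John–Nirenberg) inequalities for BMO martingales. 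Thus $\widehat X\in\bigcap_{p>1}S^p$, uniquely determined by $(Z,Z^0)$.

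The main obstacle is the existence half of the quadratic BSDE well-posedness — carrying out the approximation/stability argument for a quadratic, $Y$-independent driver with the random parameter $\gamma$ of both signs — together with establishing that the coercivity constant $c_0=\min\{\tfrac12,\tfrac1{2(1-\gamma)}\}$ stays bounded away from $0$ using the boundedness of $\gamma$. Once the a priori $S^\infty$ bound on $Y$ and the attendant $H^2_{BMO}$ estimate on $(Z,Z^0)$ are in hand, uniqueness and the $\bigcap_{p>1}S^p$-integrability of $\widehat X$ follow routinely from the standard BMO toolkit.
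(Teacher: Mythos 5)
Your proposal is correct and follows essentially the same route as the paper: exploit the decoupling at $\theta=0$ to solve the quadratic BSDE first (the paper simply cites \cite[Theorem 7]{HIM-2005} for existence, uniqueness, and the $S^\infty\times H^2_{BMO}$ regularity that you re-derive via the coercivity/Girsanov arguments), and then obtain $\widehat X\in\bigcap_{p>1}S^p$ from the energy inequality for BMO martingales exactly as you describe. The only difference is one of detail versus citation, not of method.
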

\begin{proof}
	Theorem 7 in \cite{HIM-2005} implies that there exists a unique $(Y,Z,Z^0)\in S^\infty\times H^2_{BMO}\times H^2_{BMO}$ satisfying the BSDE in \eqref{MF-FBSDE-0}. By the energy inequality (\cite[P.26]{Kazamaki-2006}), it holds that $(Z,Z^{0})\in \bigcap\limits_{p>1} M^p\times\bigcap\limits_{p>1} M^p$, which implies that $\widehat X\in\bigcap\limits_{p>1}S^p$. \\[-10mm] 

\end{proof}
From now on, we
denote the unique solution to \eqref{MF-FBSDE-0} by $(X^{\bar o}, Y^{\bar o}, Z^{\bar o}, Z^{0,\bar o})$.
Let $(\widehat X,Y,Z,Z^0)$ be a solution to \eqref{FBSDE-power-2} and we consider the difference 
\begin{equation}\label{eq:diff_XYZ}
	(\overline X,\overline Y,\overline Z,\overline Z^0):=(\widehat X-X^{\bar o},Y-Y^{\bar o},Z-Z^{\bar o},Z^0-Z^{0,\bar o}),
\end{equation}
which satisfies
\begin{equation}\label{MF-FBSDE-diff}
	\left\{\begin{split}
		d\overline X_t=&~\bigg\{\frac{	 \sigma_t\overline Z_t+ \sigma^0_t\overline Z^0_t	}{(1-\gamma)(\sigma^2_t+(\sigma^0_t)^2	)}	\left(	h_t-	\frac{1}{1-\gamma}(	h_t+\sigma_tZ^{\bar o}_t+\sigma^0_tZ^{0,\bar o}_t	)	\right)	\\
		&~-\frac{	\left(	\sigma_t\overline Z_t+\sigma^0_t\overline Z^0_t	\right)^2	}{2(1-\gamma)^2(	\sigma^2_t+(\sigma^0_t)^2	)} \bigg\}\,dt +\frac{\sigma_t\overline Z_t+\sigma^0_t\overline Z^0_t}{(1-\gamma)(\sigma^2_t+(\sigma^0_t)^2)}  (\sigma_t\,dW_t+\sigma^0_t\,dW^0_t)  \\
		-d\overline Y_t=&~\left\{ \frac{( 2Z^{\bar o}_t+\overline Z_t )\overline Z_t+( 2Z^{0,\bar o}_t+\overline Z^0_t )\overline Z^0_t}{2}  \right.\\
		&~\left. +  \frac{	\gamma(	2h_t+2\sigma_tZ^{\bar o}_t+2\sigma^0_tZ^{0,\bar o}_t+\sigma_t\overline Z_t+\sigma^0_t\overline Z^0_t		)(	\sigma_t\overline Z_t+\sigma^0_t\overline Z^0_t		)		}{2(1-\gamma)(\sigma^2_t+(\sigma^0_t)^2)}  \right\}\,dt\\
		&~-\overline Z_t\,dW_t-\overline Z^0_t\,dW^0_t,\\
		\overline X_0=&~0,~\overline Y_T=-\theta\gamma\mathbb E[\overline X_T|\mathcal F^0_T]-\theta\gamma\mathbb E[X^{\bar o}_T|\mathcal F^0_T].
	\end{split}\right.
\end{equation}

\begin{remark}
	There are two reasons why we consider the difference instead of the original FBSDE. First, we want to solve \eqref{FBSDE-power-2} under a weak interaction assumption. In order to avoid any unreasonable assumptions on $(h,\gamma,\sigma,\sigma^0)$, we need to drop the non-homogenous terms without $\theta$. This can be done by considering the difference. Second, \eqref{MF-FBSDE-diff} is the starting point of the asymptotic expansion result in Section \ref{sec:asymptotic}.
\end{remark}

\subsection{The Equivalent BSDE}\label{sec:equivalent-BSDE}
In this section, we will show that the FBSDE \eqref{MF-FBSDE-diff} is equivalent to some BSDE.  
First, we will prove a general result on the equivalence between a class of FBSDEs and BSDEs. Subsequently, we will show that our FBSDE \eqref{MF-FBSDE-diff} is a special case and thus can be transformed into some BSDE.

We consider the following FBSDE
\begin{equation}\label{general-FBSDE-transformation}
	\left\{\begin{split}
		d\mathcal X_t=&~\textit{dr}(t,\mathcal Y_t,\mathcal Z_t,\mathcal Z^0_t)\,dt+\textit{ diff}(t,\mathcal Y_t,\mathcal Z_t,\mathcal Z^0_t)dW_t+{\it diff}^0(t,\mathcal Z_t,\mathcal Z^0_t)\,dW^0_t,\\
		-d\mathcal Y_t=&~{\it Dr}(t,\mathcal Y_t,\mathcal Z_t,\mathcal Z^0_t)\,dt-\mathcal Z_t\,dW_t-\mathcal Z^0_t\,dW^0_t,\\
		\mathcal X_0=&~\bm{x},\quad \mathcal Y_T=\nu\mathbb E[\mathcal X_T|\mathcal F^0_T],
	\end{split}\right.
\end{equation}
where the coefficients are assumed to satisfy the following conditions.
\begin{itemize}
	\item {\bf AS-1.}  For any given  $\widetilde{\mathcal Y}$, $\mathcal Z$ and $\mathcal Z^0$, the following mean field SDE for $\mathcal Y$ has a unique solution
	\begin{equation}\label{condition:Y}
		\begin{split}
		\widetilde{\mathcal Y}_t=\mathcal Y_t-\int_0^t\nu\mathbb E\left[	{\it dr}(s,\mathcal Y_s,\mathcal Z_s,\mathcal Z^0_s)|\mathcal F^0_s	\right]\,ds-\int_0^t\nu\mathbb E\left[{\it diff}^0(s,\mathcal Z_s,\mathcal Z^0_s)|\mathcal F^0_s	\right]\,dW^0_s.
		\end{split}
	\end{equation}
The unique solution is denoted by $\mathcal Y_t=g_1(t, \widetilde{\mathcal Y}_{\cdot\wedge t},\mathcal Z_{\cdot\wedge t},\mathcal Z^0_{\cdot\wedge t} )$.
\item {\bf AS-2.} For each given $\widetilde{\mathcal Z}^0$ and $\mathcal Z$, the following equation for $\mathcal Z^0$ is uniquely solvable 
\begin{equation}\label{condition:Z}
	\widetilde{\mathcal Z}^0_t=\mathcal Z^0_t-	 \nu\mathbb E[ {\it diff}^0(t,\mathcal Z_t,\mathcal Z^0_t)|\mathcal F^0_t].
\end{equation}
The unique solution is denoted by $\mathcal Z^0_t=g_2(t,\mathcal Z_t,\widetilde{\mathcal Z}^0_t)$.
\end{itemize}

The next proposition shows that the solution to the FBSDE \eqref{general-FBSDE-transformation} satisfying the above conditions has a one-to-one correspondence with the solution to some BSDE. Such correspondence relies on the fact that $\mathcal Y$ depends on $\mathcal X$ in a linear way and only through the terminal condition, as well as the unique solvability of \eqref{condition:Y} and \eqref{condition:Z}. The linear dependence allows us to rewrite and split the terminal value into an  integral on $[0,t]$ and an integral on $[t,T]$, where the former integral can be merged into the solution to a new BSDE and the latter integral can be merged into the coefficient of the new BSDE.	The unique solvability of \eqref{condition:Y} and \eqref{condition:Z} yields the one-to-one correspondence.	
\begin{proposition}\label{prop:transformation}
Under {\bf AS-1} and {\bf AS-2}, there is a one-to-one correspondence between the solution to the FBSDE \eqref{general-FBSDE-transformation} and the solution to the following BSDE  
\begin{equation}\label{general-equivalent-BSDE}
	\begin{split}
		\widetilde{\mathcal Y}_t
		=&~\nu\mathbb E[\bm{x}]-\int_t^T\widetilde{\mathcal Z}_s\,dW_s-\int_t^T\widetilde{\mathcal Z}^0_s\,dW^0_s\\
		&~+ \int_t^T\bigg\{\nu\mathbb E\left[\left.{\it dr}\left(s,g_1(s,\widetilde{
			\mathcal Y}_{\cdot\wedge s}, \widetilde{\mathcal Z}_{\cdot\wedge s},g_2(\cdot\wedge s,\widetilde{\mathcal Z}_{\cdot\wedge s},\widetilde{\mathcal Z}^0_{\cdot\wedge s})),\widetilde{\mathcal Z}_s,g_2(s,\widetilde{\mathcal Z}_s,\widetilde{\mathcal Z}^0_s)\right)\right|\mathcal F^0_s  \right]\\
		&~+{\it Dr}\left(s,g_1(s,\widetilde{
			\mathcal Y}_{\cdot\wedge s}, \widetilde{\mathcal Z}_{\cdot\wedge s},g_2(\cdot\wedge s,\widetilde{\mathcal Z}_{\cdot\wedge s},\widetilde{\mathcal Z}^0_{\cdot\wedge s})),\widetilde{\mathcal Z}_s,g_2(s,\widetilde{\mathcal Z}_s,\widetilde{\mathcal Z}^0_s)\right)\bigg\}\,ds.
	\end{split}
\end{equation}
Let the solution to \eqref{general-FBSDE-transformation} and the solution to \eqref{general-equivalent-BSDE} be $(\mathcal X,\mathcal Y,\mathcal Z,\mathcal Z^0)$ and $(\widetilde{\mathcal Y},\widetilde{\mathcal Z},\widetilde{\mathcal Z}^0)$, respectively.
The relationship is given by for each $t\in[0,T]$
\begin{equation}\label{relation-widetilde-XYZ}
	\left\{\begin{split}
		\widetilde{\mathcal Y}_t=&~\mathcal Y_t- \int_0^t\nu\mathbb E[{\it dr}(s,\mathcal Y_s,\mathcal Z_s,\mathcal Z^0_s)|\mathcal F^0_s  ]\,ds-\int_0^t \nu\mathbb E[ {\it diff}^0(s,\mathcal Z_s,\mathcal Z^0_s)|\mathcal F^0_s ]	\,dW^0_s\\
		\widetilde{\mathcal Z}_t=&~\mathcal Z_t\\
		\widetilde{\mathcal Z}^0_t=&~\mathcal Z^0_t-	 \nu\mathbb E[ {\it diff}^0(t,\mathcal Z_t,\mathcal Z^0_t)|\mathcal F^0_t ].
	\end{split}\right.
\end{equation}
\end{proposition}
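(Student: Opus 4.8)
The plan is to eliminate the forward component $\mathcal X$ altogether, exploiting that it enters the system only through the terminal datum $\mathcal Y_T=\nu\mathbb E[\mathcal X_T|\mathcal F^0_T]$ and only linearly. I read \eqref{relation-widetilde-XYZ} as the definition of the forward map $(\mathcal X,\mathcal Y,\mathcal Z,\mathcal Z^0)\mapsto(\widetilde{\mathcal Y},\widetilde{\mathcal Z},\widetilde{\mathcal Z}^0)$. The first observation is that integrating the forward SDE and projecting onto $\mathbb F^0$ rewrites the terminal datum as
\[
\nu\mathbb E[\mathcal X_T|\mathcal F^0_T]=\nu\mathbb E[\bm x]+\int_0^T\nu\mathbb E[{\it dr}(s,\mathcal Y_s,\mathcal Z_s,\mathcal Z^0_s)|\mathcal F^0_s]\,ds+\int_0^T\nu\mathbb E[{\it diff}^0(s,\mathcal Z_s,\mathcal Z^0_s)|\mathcal F^0_s]\,dW^0_s,
\]
using that $\bm x$ is independent of the common noise. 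Evaluating \eqref{relation-widetilde-XYZ} at $t=T$ then shows $\widetilde{\mathcal Y}_T=\nu\mathbb E[\bm x]$: the entire influence of the forward equation has been absorbed into the drift and $W^0$-integral correction terms, leaving a deterministic terminal value.

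For the forward direction I would differentiate the first line of \eqref{relation-widetilde-XYZ} and substitute the backward dynamics $-d\mathcal Y_t={\it Dr}(t,\cdots)\,dt-\mathcal Z_t\,dW_t-\mathcal Z^0_t\,dW^0_t$. The $\nu\mathbb E[{\it dr}|\mathcal F^0]$ drift correction combines with $-{\it Dr}$, the $dW$-coefficient is unchanged (giving $\widetilde{\mathcal Z}=\mathcal Z$), and the $dW^0$-coefficient becomes $\mathcal Z^0-\nu\mathbb E[{\it diff}^0|\mathcal F^0]=\widetilde{\mathcal Z}^0$, exactly the last two lines of \eqref{relation-widetilde-XYZ}. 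Writing the result in backward integral form with terminal value $\nu\mathbb E[\bm x]$ and then re-expressing the old variables through the inverses furnished by {\bf AS-1} and {\bf AS-2}, namely $\mathcal Z=\widetilde{\mathcal Z}$, $\mathcal Z^0=g_2(\cdot,\widetilde{\mathcal Z},\widetilde{\mathcal Z}^0)$ and $\mathcal Y=g_1(\cdot,\widetilde{\mathcal Y},\widetilde{\mathcal Z},g_2(\cdot,\widetilde{\mathcal Z},\widetilde{\mathcal Z}^0))$, reproduces precisely the driver of \eqref{general-equivalent-BSDE}.

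For the converse and for bijectivity I would run the construction backwards: given $(\widetilde{\mathcal Y},\widetilde{\mathcal Z},\widetilde{\mathcal Z}^0)$ solving \eqref{general-equivalent-BSDE}, set $\mathcal Z:=\widetilde{\mathcal Z}$, use {\bf AS-2} to define $\mathcal Z^0:=g_2(\cdot,\widetilde{\mathcal Z},\widetilde{\mathcal Z}^0)$ and {\bf AS-1} to define $\mathcal Y:=g_1(\cdot,\widetilde{\mathcal Y},\widetilde{\mathcal Z},\mathcal Z^0)$ as the unique solution of \eqref{condition:Y}, and let $\mathcal X$ solve the forward SDE with initial value $\bm x$. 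Differentiating \eqref{condition:Y} shows the two correction terms cancel against the BSDE dynamics, so $\mathcal Y$ obeys the stated backward equation; the terminal identity $\mathcal Y_T=\nu\mathbb E[\mathcal X_T|\mathcal F^0_T]$ then follows from the same projection formula applied to this freshly constructed $\mathcal X$. Since {\bf AS-1} and {\bf AS-2} guarantee that $g_1$ and $g_2$ are genuine inverses of the maps in \eqref{relation-widetilde-XYZ}, the two constructions are mutually inverse, giving the claimed one-to-one correspondence.

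The main obstacle is the projection step underlying the terminal-condition rewriting. I must justify that conditioning the integrated forward SDE on $\mathcal F^0_T$ (i) kills the idiosyncratic martingale, $\mathbb E[\int_0^T{\it diff}\,dW_s|\mathcal F^0_T]=0$, because that integral has zero covariation with $W^0$ and $W,W^0$ are independent; (ii) commutes with the $ds$-integral and replaces $\mathcal F^0_T$ by $\mathcal F^0_s$, since ${\it dr}(s,\cdots)$ is $\mathcal F_s$-measurable while the future increments of $W^0$ are independent of $\mathcal F_s$; and (iii) commutes with the $dW^0$-integral, yielding the $\mathbb F^0$-optional projection $\mathbb E[{\it diff}^0|\mathcal F^0_s]$ as the new integrand. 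These facts are what make $\widetilde{\mathcal Y}_T$ collapse to $\nu\mathbb E[\bm x]$; the subsequent identification of drivers and of the coefficients $\widetilde{\mathcal Z},\widetilde{\mathcal Z}^0$ is routine It\^o calculus once {\bf AS-1}--{\bf AS-2} are invoked for invertibility.
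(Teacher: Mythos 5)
Your proposal is correct and follows essentially the same route as the paper: rewrite the terminal datum $\nu\mathbb E[\mathcal X_T|\mathcal F^0_T]$ via the $\mathbb F^0$-projection of the integrated forward SDE, absorb the $[0,t]$-portions of the correction integrals into the definition of $\widetilde{\mathcal Y}$ (your differentiation of \eqref{relation-widetilde-XYZ} is the same computation as the paper's splitting of $\int_0^T$ into $\int_0^t+\int_t^T$), and invoke \textbf{AS-1}--\textbf{AS-2} to invert the change of variables via $g_1,g_2$ in both directions. Your closing paragraph justifying the projection step (vanishing of the idiosyncratic martingale under $\mathbb E[\cdot|\mathcal F^0_T]$ and the replacement of $\mathcal F^0_T$ by $\mathcal F^0_s$ in the integrands) is in fact more explicit than the paper, which asserts this identity without comment.
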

\begin{proof}
	From the dynamics of \eqref{general-FBSDE-transformation}, we have the following:
	\begin{align*}
			\mathcal Y_t=&~\nu\mathbb E[\mathcal X_T|\mathcal F^0_T]+\int_t^T {\it Dr}(s,\mathcal Y_s,\mathcal Z_s,\mathcal Z^0_s)\,ds-\int_t^T\mathcal Z_s\,dW_s-\int_t^T\mathcal Z^0_s\,dW^0_s\\
			=&~\nu\mathbb E[\bm{x}]+ \int_0^T\nu\mathbb E[{\it dr}(s,\mathcal Y_s,\mathcal Z_s,\mathcal Z^0_s)|\mathcal F^0_s  ]\,ds +\int_0^T \nu\mathbb E[ {\it diff}^0(s,\mathcal Y_s,\mathcal Z_s,\mathcal Z^0_s)|\mathcal F^0_s ]	\,dW^0_s\\
			&~+\int_t^T{\it Dr}(s,\mathcal Y_s,\mathcal Z_s,\mathcal Z^0_s)\,ds-\int_t^T\mathcal Z_s\,dW_s-\int_t^T\mathcal Z^0_s\,dW^0_s\\
			=&~\nu\mathbb E[\bm{x}]+ \int_0^t\nu\mathbb E[{\it dr}(s,\mathcal Y_s,\mathcal Z_s,\mathcal Z^0_s)|\mathcal F^0_s  ]\,ds +\int_0^t \nu\mathbb E[ {\it diff}^0(s,\mathcal Z_s,\mathcal Z^0_s)|\mathcal F^0_s ]	\,dW^0_s\\
			&~+ \int_t^T\left\{\nu\mathbb E[{\it dr}(s,\mathcal Y_s,\mathcal Z_s,\mathcal Z^0_s)|\mathcal F^0_s  ]+{\it Dr}(s,\mathcal Y_s,\mathcal Z_s,\mathcal Z^0_s)\right\}\,ds\\
			&~-\int_t^T\mathcal Z_s\,dW_s-\int_t^T\left\{ \mathcal Z^0_s-	 \nu\mathbb E[ {\it diff}^0(s,\mathcal Z_s,\mathcal Z^0_s)|\mathcal F^0_s ]		\right\}\,dW^0_s.
	\end{align*}
Define for each $t\in[0,T]$
\begin{equation}\label{widetilde-Y}
	\begin{split}
	\widetilde{\mathcal Y}_t=&~\mathcal Y_t- \int_0^t \nu\mathbb E[{\it dr}(s,\mathcal Y_s,\mathcal Z_s,\mathcal Z^0_s)|\mathcal F^0_s  ]\,ds -\int_0^t \nu\mathbb E[ {\it diff}^0(s,\mathcal Z_s,\mathcal Z^0_s)|\mathcal F^0_s ]\,dW^0_s
	\end{split}
\end{equation}
and
\begin{equation}\label{widetilde-Z}
\left\{\begin{split}
	\widetilde{\mathcal Z}_t=&~\mathcal Z_t\\
	\widetilde{\mathcal Z}^0_t=&~\mathcal Z^0_t-	 \nu\mathbb E[ {\it diff}^0(t,\mathcal Z_t,\mathcal Z^0_t)|\mathcal F^0_t ].
	\end{split}\right.
\end{equation}
From \eqref{widetilde-Y} and \eqref{widetilde-Z}, conditions \eqref{condition:Y} and \eqref{condition:Z} imply that
\begin{equation}\label{eq:general-Z0}
	\mathcal Z^0_t=g_2(t,\widetilde{\mathcal Z}_t,\widetilde{\mathcal Z}^0_t)
\end{equation}
and
\begin{equation}\label{eq:general-Y}
	\mathcal Y_t=g_1(t, \widetilde{\mathcal Y}_{\cdot\wedge t},\mathcal Z_{\cdot\wedge t},\mathcal Z^0_{\cdot\wedge t} )=g_1(t,\widetilde{
	\mathcal Y}_{\cdot\wedge t}, \widetilde{\mathcal Z}_{\cdot\wedge t},g_2(\cdot\wedge t,\widetilde{\mathcal Z}_{\cdot\wedge t},\widetilde{\mathcal Z}^0_{\cdot\wedge t})).
\end{equation}
Then, $(\widetilde{\mathcal Y},\widetilde{\mathcal Z},\widetilde{\mathcal Z}^0)$ satisfies the BSDE \eqref{general-equivalent-BSDE}. 

Moreover, if there exists some $(\widetilde{\mathcal Y},\widetilde{\mathcal Z},\widetilde{\mathcal Z}^0)$ satisfying \eqref{general-equivalent-BSDE}, by equations \eqref{widetilde-Z}, \eqref{eq:general-Z0} and \eqref{eq:general-Y}, we can construct $(\mathcal X,\mathcal Y,\mathcal Z,\mathcal Z^0)$ satisfying \eqref{general-FBSDE-transformation}.
\end{proof}
We introduce the following BSDE
\begin{equation}\label{MF-BSDE}
	\begin{split}
		\widetilde Y_t=&~\theta\mathbb E[\log x]+\int_t^T\left\{ \mathcal J_1(s,\widetilde Z_s,\widetilde Z^0_s)+\mathcal J_2(s;\widetilde Z,\widetilde Z^0,\theta)\right\} \,ds-\int_t^T\widetilde Z_s\,dW_s-\int_t^T\widetilde Z^0_s\,dW^0_s,
	\end{split}
\end{equation}
where $\mathcal J_1(\cdot;\widetilde Z,\widetilde Z^0)$ are terms that do not depend on $\theta$
\begin{equation*}
	\begin{split}
		\mathcal J_1(\cdot,\widetilde Z,\widetilde Z^0)=&~\frac{1}{2}\left\{ 1+\frac{\gamma\sigma^2}{(1-\gamma)( \sigma^2+(\sigma^0)^2 )}	\right\}\widetilde  Z^2+\frac{1}{2}\left\{  1+\frac{\gamma(\sigma^0)^2}{ (1-\gamma)( \sigma^2+(\sigma^0)^2  ) }   \right\}(\widetilde Z^0)^2\\
		&~+\left\{	Z^{\bar o}+\frac{\gamma\sigma(h+\sigma Z^{\bar o}+\sigma^0 Z^{0,\bar o})}{(1-\gamma)(\sigma^2+(\sigma^0)^2)}				\right\}\widetilde Z+\left\{  Z^{0,\bar o} + \frac{\gamma\sigma^0( h+\sigma Z^{\bar o}+\sigma^0 Z^{0,\bar o}  )}{(1-\gamma)(\sigma^2+(\sigma^0)^2)}  \right\}\widetilde Z^0\\
		&~+\frac{\gamma\sigma\sigma^0}{(1-\gamma)(\sigma^2+(\sigma^0)^2)}\widetilde Z\widetilde Z^0,
	\end{split}
\end{equation*}
and $\mathcal J_2(\cdot;\widetilde Z,\widetilde Z^0,\theta)$ are terms that depend on $\theta$. The expression of $\mathcal J_2$ is cumbersome and we summarize it in Appendix \ref{sec:J2} due to the convenience of the statement in the main text. 

As a corollary of Proposition \ref{prop:transformation}, we can show that the FBSDE \eqref{MF-FBSDE-diff} and the BSDE \eqref{MF-BSDE} are equivalent.

\begin{corollary}\label{lem:equiv-BSDE}
	The wellposedness of the FBSDE \eqref{MF-FBSDE-diff} is equivalent to the wellposedness of the BSDE \eqref{MF-BSDE}.  The relationship is given by for each $t\in[0,T]$
	\begin{equation}\label{eq:widetilde-Y}
		\begin{split}
			\widetilde Y_t=&~ \overline Y_t+\theta\gamma \int_0^t \mathbb E\left[\left.    \frac{	\sigma_s\overline Z_s+\sigma^0_s\overline Z^0_s	}{(1-\gamma)(\sigma^2_s+(\sigma^0_s)^2	)}	\left(	h_s-	\frac{1}{1-\gamma}(	h_s+\sigma_sZ^{\bar o}_s+\sigma^0_sZ^{0,\bar o}_s	)	\right)			\right|\mathcal F^0_s				\right]\,ds\\
			&~-\theta\gamma\int_0^t  \mathbb E\left[ \left. \frac{	\left(	\sigma_s\overline Z_s+\sigma^0_s\overline Z^0_s	\right)^2	}{2(1-\gamma)^2(	\sigma^2_s+(\sigma^0_s)^2	)} \right|\mathcal F^0_s\right]\,ds\\
			&~	+\theta\gamma \int_0^t  \mathbb E\left[\left.	 \frac{h_s+\sigma_sZ^{\bar o}_s+\sigma^0_sZ^{0,\bar o}_s}{ (1-\gamma)( \sigma^2_s+(\sigma^0_s)^2 )  } 	\left(	h_s-\frac{h_s+\sigma_sZ^{\bar o}_s+\sigma^{0}_sZ^{0,\bar o}_s}{2(1-\gamma)}	\right)	\right|\mathcal F^0_s\right]  \,ds			\\
			&~+\theta\gamma \int_0^t\left\{\mathbb E\left[\left.	\frac{ \sigma_s\overline Z_s+\sigma^0_s\overline Z^0_s }{(1-\gamma)( \sigma^2_s+(\sigma^0_s)^2 )}\sigma^0_s	\right|\mathcal F^0_s\right]+\mathbb E\left[\left. \frac{h_s+\sigma^0_sZ^{\bar o}_s+\sigma^0_sZ^{0,\bar o}_s		}{(1-\gamma)(  \sigma^2_s+(\sigma^0_s)^2 )}\sigma^0_s	\right|\mathcal F^0_s	\right] \right\} \,dW^0_s
		\end{split}
	\end{equation}
and
	\begin{equation}\label{tranform-Z}
	\widetilde Z_t=\overline Z_t,\quad \widetilde Z^0_t=\overline Z^0_t+\theta\gamma\mathbb E\left[\left. \frac{ \sigma\overline Z_t+\sigma^0_t\overline Z^0_t}{(1-\gamma)(  \sigma^2_t+(\sigma^0_t)^2  )}\sigma^0_t\right|\mathcal F^0_t  \right]+\theta\gamma\mathbb E\left[\left.\frac{  h_t+\sigma Z^{\bar o}_t+\sigma^0_tZ^{0,\bar o}_t }{(1-\gamma)( \sigma^2_t+(\sigma^0_t)^2  )}\sigma^0_t\right|\mathcal F^0_t\right].
\end{equation}
\end{corollary}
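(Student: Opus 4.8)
The plan is to recognize the FBSDE \eqref{MF-FBSDE-diff} as a special case of the abstract FBSDE \eqref{general-FBSDE-transformation} and then invoke Proposition \ref{prop:transformation}. The only non-obvious point in the identification is the terminal condition: \eqref{general-FBSDE-transformation} requires a terminal value of the pure form $\nu\mathbb E[\mathcal X_T|\mathcal F^0_T]$, whereas \eqref{MF-FBSDE-diff} carries the extra inhomogeneous piece $-\theta\gamma\mathbb E[X^{\bar o}_T|\mathcal F^0_T]$. I would absorb this piece into the state by setting $\mathcal X:=\overline X+X^{\bar o}=\widehat X$, so that $\overline Y_T=-\theta\gamma\mathbb E[\overline X_T+X^{\bar o}_T|\mathcal F^0_T]=\nu\mathbb E[\mathcal X_T|\mathcal F^0_T]$ with $\nu=-\theta\gamma$. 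Accordingly I set $\mathcal Y=\overline Y$, $(\mathcal Z,\mathcal Z^0)=(\overline Z,\overline Z^0)$, $\bm x=\log x$, and read off $\textit{dr}$ and $\textit{diff}^0$ as the sums of the corresponding drift and $W^0$-diffusion coefficients of $\overline X$ in \eqref{MF-FBSDE-diff} and of $X^{\bar o}$ in \eqref{MF-FBSDE-0}, while $\textit{Dr}$ is the driver of $\overline Y$. Crucially, both $\textit{dr}$ and $\textit{diff}^0$ are affine in $(\mathcal Z,\mathcal Z^0)$ and independent of $\mathcal Y$.

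With this identification I would verify the two structural hypotheses of Proposition \ref{prop:transformation}. Hypothesis \textbf{AS-1} is immediate: since $\textit{dr}$ does not depend on $\mathcal Y$, equation \eqref{condition:Y} is already an explicit formula for $\mathcal Y$ in terms of $(\widetilde{\mathcal Y},\mathcal Z,\mathcal Z^0)$, so $g_1$ exists and is unique with no fixed point to solve. Hypothesis \textbf{AS-2} is the substantive step. Writing $a:=(\sigma^0)^2/[(1-\gamma)(\sigma^2+(\sigma^0)^2)]$, the equation \eqref{condition:Z} for $\mathcal Z^0$ is affine but couples $\mathcal Z^0_t$ to its conditional mean $\mathbb E[a\mathcal Z^0_t|\mathcal F^0_t]$ through the factor $\nu$. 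Applying the operator $\mathbb E[a\,\cdot\,|\mathcal F^0_t]$ reduces it to a scalar $\mathcal F^0_t$-measurable identity for $P:=\mathbb E[a\mathcal Z^0_t|\mathcal F^0_t]$, of the form $P\big(1-\mathbb E[a\nu|\mathcal F^0_t]\big)=(\text{known data})$, whence $\mathcal Z^0_t=g_2(t,\mathcal Z_t,\widetilde{\mathcal Z}^0_t)$ is recovered explicitly.

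The main obstacle is therefore to show that the scalar factor $1-\mathbb E[a\nu|\mathcal F^0_t]$ is bounded away from zero. Here $a\nu=-\theta\gamma(\sigma^0)^2/[(1-\gamma)(\sigma^2+(\sigma^0)^2)]$, and I would argue by the sign of $\gamma$: since $(\sigma^0)^2/(\sigma^2+(\sigma^0)^2)\in[0,1]$ and $\theta\in[0,1]$, one has $a\nu\le 0$ when $\gamma\in(0,1)$, while $0\le a\nu\le \theta\,|\gamma|/(1+|\gamma|)$ when $\gamma<0$; in either case $a\nu\le \|\gamma\|/(1+\|\gamma\|)<1$ a.s., using that $\gamma$ is bounded by Assumption 1. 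Hence $1-\mathbb E[a\nu|\mathcal F^0_t]\ge 1/(1+\|\gamma\|)>0$, so $g_2$ is well defined; moreover, being affine with bounded coefficients, both $g_1$ and $g_2$ preserve the $H^2_{BMO}$ membership required for the correspondence (and hence for the claimed wellposedness equivalence).

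Finally I would apply Proposition \ref{prop:transformation}: substituting the explicit $g_1$, $g_2$ and the identified coefficients into \eqref{general-equivalent-BSDE} yields a BSDE whose constant is $\nu\mathbb E[\bm x]$ and whose driver is $\nu\mathbb E[\textit{dr}\,|\mathcal F^0]+\textit{Dr}$ evaluated at $(\widetilde Z,g_2(t,\widetilde Z,\widetilde Z^0))$, while the abstract transformation \eqref{relation-widetilde-XYZ} specializes precisely to \eqref{eq:widetilde-Y}--\eqref{tranform-Z}. It then remains to carry out the routine but lengthy algebra of expanding this driver and collecting terms: those free of $\theta$ organize into $\mathcal J_1$, whereas all terms carrying a factor of $\theta$ (which are exactly the mean field contributions entering through $\nu=-\theta\gamma$ and through $g_2$) organize into $\mathcal J_2$, thereby recovering the BSDE \eqref{MF-BSDE} and completing the equivalence.
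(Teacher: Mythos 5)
Your proposal is correct and follows essentially the same route as the paper: identify \eqref{MF-FBSDE-diff} as an instance of \eqref{general-FBSDE-transformation} (the paper takes $\mathcal X=(\overline X,X^{\bar o})^\top$ with $\nu=(-\theta\gamma,-\theta\gamma)$, while you equivalently sum the components into $\widehat X$), note that \textbf{AS-1} is trivial since the drift is $\mathcal Y$-free, and verify \textbf{AS-2} by projecting \eqref{condition:Z} with $\mathbb E[a\,\cdot\,|\mathcal F^0_t]$ to solve the affine conditional-mean-field equation for $\overline Z^0$. Your explicit check that $1-\mathbb E[a\nu|\mathcal F^0_t]\ge 1/(1+\|\gamma\|)>0$ is a welcome addition the paper leaves implicit.
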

\begin{proof}
	Let $\nu=(-\theta\gamma,-\theta\gamma)$ and $\mathcal X=(\overline X,X^{\bar o})^\top$, where $\overline X$ and $X^{\bar o}$ satisfy the forward dynamics of \eqref{MF-FBSDE-diff} and \eqref{MF-FBSDE-0}, respectively. Because the drift of $\mathcal X$ does not depend on the $\mathcal Y$-component, \textbf{AS-1} trivially holds. To verify \textbf{AS-2}, it is sufficient to solve a unique $\overline Z^0$ from \eqref{tranform-Z}. 
	

	Multiplied by $\frac{(\sigma^0)^2}{(1-\gamma)(\sigma^2+(\sigma^0)^2 )}$ on both sides of the second equality in \eqref{tranform-Z} and taking conditional expectations $\mathbb E[\cdot|\mathcal F^0_t]$, we obtain an equality for $\mathbb E\left[\left.\frac{(\sigma^0_t)^2}{(1-\gamma)(\sigma_t^2+(\sigma^0_t)^2 )}\widetilde Z^0_t\right|\mathcal F^0_t  \right]$ in terms of $\mathbb E\left[\left. \frac{(\sigma^0_t)^2}{(1-\gamma)(\sigma^2_t+(\sigma^0_t)^2 )} \overline Z^0_t\right|\mathcal F^0_t \right]$, from which we get for each $t\in[0,T]$
	\begin{equation}\label{transform-Z-2}
		\begin{split}
			&~\mathbb E\left[\left.\frac{(\sigma^0_t)^2}{(1-\gamma)(\sigma_t^2+(\sigma^0_t)^2 )}\overline Z^0_t\right|\mathcal F^0_t  \right]\\
			=&~\frac{\mathbb E\left[\left.\frac{(\sigma^0_t)^2}{(1-\gamma)(\sigma_t^2+(\sigma^0_t)^2 )}\widetilde Z^0_t\right|\mathcal F^0_t \right]-\mathbb E\left[\left.\frac{\theta\gamma(\sigma^0_t)^2}{(1-\gamma)(\sigma_t^2+(\sigma^0_t)^2 )} \right|\mathcal F^0_t\right]\mathbb E\left[\left.\frac{\sigma_t\sigma^0_t}{ (1-\gamma)( \sigma^2_t+(\sigma^0_t)^2 ) }\widetilde Z_t \right|\mathcal F^0_t\right]  }{1+\mathbb E\left[	 \left.	 \frac{\theta\gamma(\sigma^0_t)^2}{ (1-\gamma)( \sigma^2_t+(\sigma^0_t)^2  ) }			\right|\mathcal F^0_t	\right]}\\
					&~-\frac{\mathbb E\left[\left.\frac{\theta\gamma(\sigma^0_t)^2}{(1-\gamma)(\sigma_t^2+(\sigma^0_t)^2 )} \right|\mathcal F^0_t \right]\mathbb E\left[\left. \frac{ h_t\sigma^0_t+\sigma_t\sigma^0_t Z^{\bar o}_t+(\sigma^0_t)^2 Z^{0,\bar o} }{ (1-\gamma)(\sigma^2_t+(\sigma^0_t)^2) } \right|\mathcal F^0_t\right]}{1+\mathbb E\left[\left. \frac{\theta\gamma(\sigma^0_t)^2}{ (1-\gamma)( \sigma^2_t+(\sigma^0_t)^2  ) }  \right|\mathcal F^0_t\right]}.
		\end{split}
	\end{equation}
	Taking \eqref{transform-Z-2} back into \eqref{tranform-Z} and rearranging terms, we obtain $\overline Z^0$ in terms of $\widetilde Z^0$ and $\widetilde Z$
\begin{equation}\label{eq:barZ0-intermsof-tildeZ0}
	\begin{split}
	\overline Z^0_t=&~\widetilde Z^0_t-\frac{\theta\gamma\mathbb E\left[\left.\frac{\sigma_t\sigma^0_t}{(1-\gamma)( \sigma^2_t+(\sigma^0_t)^2 )}\widetilde Z_t\right|\mathcal F^0_t\right]+\theta\gamma\mathbb E\left[\left.\frac{(\sigma^0_t)^2}{ (1-\gamma)( \sigma^2_t+(\sigma^0_t)^2 ) }\widetilde Z^0_t\right|\mathcal F^0_t\right]}{1+\mathbb E\left[\left. \frac{\theta\gamma(\sigma^0_t)^2}{ (1-\gamma)( \sigma^2_t+(\sigma^0_t)^2  ) }  \right|\mathcal F^0_t\right]}\\
	&~-\frac{\theta\gamma\mathbb E\left[\left.\frac{  h_t\sigma^0_t+\sigma_t\sigma^0_t Z^{\bar o}_t+(\sigma^0_t)^2 Z^{0,\bar o}_t    }{(1-\gamma)(  \sigma^2_t+(\sigma^0_t)^2  )} \right|\mathcal F^0_t\right]}{1+\mathbb E\left[\left. \frac{\theta\gamma(\sigma^0_t)^2}{ (1-\gamma)( \sigma^2_t+(\sigma^0_t)^2  ) }  \right|\mathcal F^0_t\right]},\quad t\in[0,T].
	\end{split}
	\end{equation}
Thus, \textbf{AS-2} is verified, and Proposition \ref{prop:transformation} implies that the FBSDE \eqref{MF-FBSDE-diff} is equivalent to the BSDE \eqref{MF-BSDE}.
\end{proof}

\subsection{Wellposedness of the BSDE \eqref{MF-BSDE} and  the FBSDE \eqref{FBSDE-power-2}}
The BSDE \eqref{MF-BSDE} is a quadratic one of conditional mean field type, and it does not satisfy the assumptions in \cite{Hibon-Hu-Tang-2017}. In particular, the quadratic growth in \eqref{MF-BSDE} comes from both $(\widetilde Z,\widetilde Z^0)$ and the conditional expectation of $(\widetilde Z,\widetilde Z^0)$; refer to the expression for $\mathcal J_2$ in Appendix \ref{sec:J2}. In addition, the feature of $\mathcal J_2$ is that it includes all mean field terms, and each term in $\mathcal J_2$ can be controlled by $\theta$. This observation motivates us to solve \eqref{MF-BSDE} under a weak interaction assumption, where the competition parameter $\theta$ is assumed to be sufficiently small.

To make it convenient to apply the theory of quadratic BSDE, by a change of measure, we can transform $\mathcal J_1$ into a pure quadratic term. Indeed, 
define
\begin{equation}\label{P-o}
	\begin{split}
	\frac{d\mathbb P^{\bar o}}{d\mathbb P}
	=&~\mathcal E\left(	\int_0^\cdot  \left\{ Z^{\bar o}_s+\frac{\gamma\sigma_s(h_s+\sigma_s Z^{\bar o}_s+\sigma^0_sZ^{0,\bar o}_s   )}{(1-\gamma)(\sigma^2_s+(\sigma^0_s)^2)} \right\}\,dW_s+\int_0^\cdot \left\{Z^{0,\bar o}_s+  \frac{\gamma\sigma^0_s( h_s+\sigma_s Z^{\bar o}_s+\sigma^0_s Z^{0,\bar o}_s )}{(1-\gamma)(\sigma^2_s+(\sigma^0_s)^2) } \right\}\,dW^0_s	\right)\\
	:=&~\mathcal E\left(\int_0^\cdot \mathcal M_s\,d(W_s,W^0_s)^\top \right),
	\end{split}
\end{equation}
where
\begin{equation}\label{def:M}
	\mathcal M=\left(  Z^{\bar o}+\frac{\gamma\sigma(h+\sigma Z^{\bar o}+\sigma^0Z^{0,\bar o}   )}{(1-\gamma)(\sigma^2+(\sigma^0)^2)},~Z^{0,\bar o}+  \frac{\gamma\sigma^0( h+\sigma Z^{\bar o}+\sigma^0 Z^{0,\bar o} )}{(1-\gamma)(\sigma^2+(\sigma^0)^2) }\right).
\end{equation}
By Proposition \ref{lem:YZ-o} and \cite[Theorem 2.3]{Kazamaki-2006}, it holds that $\mathbb P^{\bar o}$ is a probability measure, and the Girsanov theorem yields that
\begin{equation}\label{MF-BSDE-P-o}
	\widetilde Y_t=-\theta\gamma\mathbb E[\log(x)]+\int_t^T\left(\mathcal J^{\bar o}_1(s,\widetilde Z_s,\widetilde Z^0_s)+\mathcal J_2(s;\widetilde Z,\widetilde Z^0,\theta)\right)\,ds-\int_t^T\widetilde Z_s\,d\widetilde W_s-\int_t^T\widetilde Z^0_s\,d\widetilde W^0_s,
\end{equation}
where 
\begin{equation}\label{widetilde-W}
		\begin{split}
	&~W^{\bar o}_t:=(\widetilde W_t,\widetilde W^0_t)^\top
	=\\
	&~\left(W_t-\int_0^t\left\{Z^{\bar o}_s+\frac{\gamma\sigma_s(h_s+\sigma_s Z^{\bar o}_s+\sigma^0_sZ^{0,\bar o}_s   )}{(1-\gamma)(\sigma^2_s+(\sigma^0_s)^2)}\right\}\,ds,~W^0_t-\int_0^t\left\{Z^{0,\bar o}_s+  \frac{\gamma\sigma^0_s( h_s+\sigma_s Z^{\bar o}_s+\sigma^0_s Z^{0,\bar o}_s )}{(1-\gamma)(\sigma^2_s+(\sigma^0_s)^2) }\right\}\,ds\right)^\top
		\end{split}
\end{equation}
 is a two-dimensional Brownian motion under $\mathbb P^{\bar o}$, and
\begin{equation}\label{def:J1-o}
	\begin{split}
	\mathcal J^{\bar o}_1(\cdot,\widetilde Z,\widetilde Z^0)=&~\frac{1}{2}\left\{ 1+\frac{\gamma\sigma^2}{(1-\gamma)( \sigma^2+(\sigma^0)^2 )}	\right\}\widetilde  Z^2+\frac{1}{2}\left\{  1+\frac{\gamma(\sigma^0)^2}{ (1-\gamma)( \sigma^2+(\sigma^0)^2  ) }   \right\}(\widetilde Z^0)^2\\
	&~+\frac{\gamma\sigma\sigma^0}{(1-\gamma)(\sigma^2+(\sigma^0)^2)}\widetilde Z\widetilde Z^0.
	\end{split}
\end{equation}
Therefore, solving \eqref{MF-BSDE} under $\mathbb P$ is equivalent to solving \eqref{MF-BSDE-P-o} under $\mathbb P^{\bar o}$. To do so, we will use a fixed point argument as in \cite{Tevzadze2008} to study a general BSDE with \eqref{MF-BSDE-P-o} as a special case. 
\begin{lemma}\label{wellposedness-general-QBSDE}
 Define a BSDE 
	\begin{equation}\label{eq:general-QBSDE}
		-d\widehat{\mathcal Y}_t=\left\{		f_1(t,\widehat{\mathcal Z}_t)+f_2(t;\widehat{\mathcal Z},\theta)	\right\}\,dt-\widehat{\mathcal Z}_t\,d  W^{\bar o}_t,\quad\mathcal Y_T=\theta\bf{x},
	\end{equation}
The random coefficients $f_1$ and $f_2$ are assumed to satisfy the following conditions: there exists an $\mathcal A$-measurable, positive and bounded random variable $c_1$, an increasing positive locally bounded function $C_1$, and a positive constant $C_2$ such that for any $\widehat z, \widehat z'\in H^2_{BMO,\mathbb P^{\bar o}}$ it holds that:

\textbf{ASS-1.} \quad
$
|f_1(t,\widehat z)|\leq c_1|\widehat z|^2
$ \qquad and\qquad 
	$\left\|	 \left|	 2c_1 f_2(\cdot;\widehat{z},\theta)	\right|^{\frac{1}{2}}			\right\|_{BMO,\mathbb P^{\bar o}}^2\leq \|\theta\|C_1(\|\widehat z\|_{BMO,\mathbb P^{\bar o}});
$

\textbf{ASS-2.}  $
	|f_1(t,\widehat z)	-f_1(t,\widehat z' )	|\leq c_1|\widehat z-\widehat z'|( |\widehat z|+|\widehat z'| )$, and  $\|	f_2(\cdot;\widehat z,\theta)-f_2(\cdot;\widehat z',\theta)	\| _{BMO,\mathbb P^{\bar o}}\leq \|\theta\| C_2\|\widehat z-\widehat z'  \|_{BMO,\mathbb P^{\bar o}} \left(	1+	\|\widehat z \|_{BMO,\mathbb P^{\bar o}}	+\|\widehat z'  \|_{BMO,\mathbb P^{\bar o}}	\right).
$

Then, for each fixed $0<R\leq\frac{1}{4\sqrt{2}\|c_1\|}$, there exists a constant $\theta^*$ that is small enough and only depends on $R$, ${\bf x}$, $c_1$, $C_1$ and $C_2$, such that for each $0\leq \|\theta\|\leq\theta^*$
the BSDE \eqref{eq:general-QBSDE} admits a unique solution $(\widehat{\mathcal Y},\widehat{\mathcal Z})$ with $\widehat{\mathcal Z}$ located in the $R$-ball of $H^2_{BMO,\mathbb P^{\bar o}}$. Furthermore, the solution satisfies the following estimate
\begin{equation}\label{estimate:Y-general-QBSDE}
	\|\widehat{\mathcal Y}\|_\infty\leq \|\theta{\bf x}\| - \frac{1}{2\underline{c_1}}\log\bigg\{ 1-\left\|	\left|2 c_1 f_2(\cdot;\widehat{\mathcal Z},\theta) \right|^{\frac{1}{2}}	\right\|^2_{BMO,\mathbb P^{\bar o}}		\bigg\}
\end{equation}
and
\begin{equation}\label{estimate:Z-general-QBSDE}
	\|\widehat{\mathcal Z}\|^2_{BMO,\mathbb P^{\bar o}}\leq  \frac{\frac{1}{2\underline{ c_1}}e^{2\|c_1\| \|\theta{\bf x}\|}-\frac{1}{2\underline {c_1}}-\|\theta{\bf x}\|}{\underline {c_1}}+\frac{e^{2\|c_1\| \|\widehat{\mathcal Y}\|_\infty}-1}{\underline{c_1}}\left\|  |f_2(\cdot;\widehat{\mathcal Z},\theta)|^{\frac{1}{2}} \right\|^2_{BMO,\mathbb P^{\bar o}},
\end{equation}
where we recall that $\underline{c_1}$ is the essential infimum of $c_1$.
\end{lemma}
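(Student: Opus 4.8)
The plan is to solve \eqref{eq:general-QBSDE} under $\mathbb P^{\bar o}$ by a fixed-point argument in the spirit of \cite{Tevzadze2008}, restricted to a small ball of $H^2_{BMO,\mathbb P^{\bar o}}$. Concretely, I would fix $0<R\le\frac{1}{4\sqrt2\|c_1\|}$ and, for each frozen $\widehat z$ in the $R$-ball $B_R\subset H^2_{BMO,\mathbb P^{\bar o}}$, define $(\widehat{\mathcal Y},\widehat{\mathcal Z})$ through the martingale representation of
\[
\widehat{\mathcal Y}_t=\mathbb E^{\mathbb P^{\bar o}}\Big[\theta\bm x+\int_t^T\big(f_1(s,\widehat z_s)+f_2(s;\widehat z,\theta)\big)\,ds\,\Big|\,\mathcal F_t\Big],
\]
which is well posed because $\widehat z\in H^2_{BMO,\mathbb P^{\bar o}}$ and \textbf{ASS-1} (together with $f_1(t,0)=0$, forced by $|f_1(t,z)|\le c_1|z|^2$) make the conditional integrals of $|f_1(\cdot,\widehat z)|\le c_1|\widehat z|^2$ and of $|f_2|$ bounded. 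Setting $\Phi(\widehat z):=\widehat{\mathcal Z}$, a fixed point $\widehat z=\widehat{\mathcal Z}$ is exactly a solution of \eqref{eq:general-QBSDE}, so it suffices to show that $\Phi$ maps $B_R$ into itself and is a contraction there once $\|\theta\|\le\theta^*$.

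I would first derive the two a priori bounds \eqref{estimate:Y-general-QBSDE}--\eqref{estimate:Z-general-QBSDE} for the solution, as they power the self-map step. For \eqref{estimate:Y-general-QBSDE} I would use the one-sided bound $f_1\le c_1|\widehat{\mathcal Z}|^2$ to dominate $\widehat{\mathcal Y}$ by the linearized equation and apply the exponential transform, which turns the quadratic term into a pure drift and yields
\[
e^{2c_1\widehat{\mathcal Y}_t}\le \mathbb E^{\mathbb P^{\bar o}}\Big[e^{2c_1\theta\bm x}\exp\Big(\int_t^T 2c_1|f_2(s;\widehat{\mathcal Z},\theta)|\,ds\Big)\,\Big|\,\mathcal F_t\Big];
\]
the exponential moment of $\int_\tau^T 2c_1|f_2|\,ds$ is then controlled by the John--Nirenberg / energy inequality for BMO (\cite{Kazamaki-2006}), giving it the bound $\big(1-\big\||2c_1f_2|^{1/2}\big\|_{BMO,\mathbb P^{\bar o}}^2\big)^{-1}$ provided $\big\||2c_1f_2|^{1/2}\big\|^2_{BMO,\mathbb P^{\bar o}}<1$, which by \textbf{ASS-1} holds once $\|\theta\|$ is small; taking logarithms and bounding $\tfrac{1}{2c_1}\le\tfrac{1}{2\underline{c_1}}$ produces \eqref{estimate:Y-general-QBSDE}. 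For \eqref{estimate:Z-general-QBSDE} I would apply It\^o's formula to a convex exponential-type function $\phi$ of $\widehat{\mathcal Y}$ chosen so that $\tfrac12\phi''-c_1|\phi'|$ is bounded below by a positive constant of order $\underline{c_1}$; the second-order term $\tfrac12\phi''|\widehat{\mathcal Z}|^2$ then absorbs $|\phi'f_1|\le c_1|\phi'||\widehat{\mathcal Z}|^2$ and leaves a genuinely positive multiple of $\int_\tau^T|\widehat{\mathcal Z}|^2\,ds$, which after conditioning is bounded by the terminal value $\phi(\theta\bm x)$ and by $\sup|\phi'|\cdot\big\||f_2|^{1/2}\big\|^2_{BMO,\mathbb P^{\bar o}}$, yielding \eqref{estimate:Z-general-QBSDE} after using $\|\widehat{\mathcal Y}\|_\infty$ to control $\sup|\phi'|$ via $e^{2\|c_1\|\|\widehat{\mathcal Y}\|_\infty}$.

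With these estimates the self-map property follows: applying the $\phi$-computation to the frozen linear problem defining $\Phi$ and invoking \textbf{ASS-1} (so $\big\||2c_1f_2|^{1/2}\big\|^2\le\|\theta\|C_1(R)$) shows $\|\widehat{\mathcal Z}\|^2_{BMO,\mathbb P^{\bar o}}\le R^2$ for $\|\theta\|\le\theta^*$, since every $f_2$-contribution carries a factor $\|\theta\|$ while the remaining $f_1$-contribution is $O(\|c_1\|^2R^4)\le\tfrac{R^2}{16}$ by the choice of $R$. For the contraction I would take $\widehat z,\widehat z'\in B_R$, write the linear BSDE solved by $\widehat{\mathcal Z}-\widehat{\mathcal Z}'$, and estimate its BMO norm: the $f_1$-increment is handled by \textbf{ASS-2} together with the Cauchy--Schwarz (energy) inequality for BMO, $\mathbb E^{\mathbb P^{\bar o}}[\int_\tau^T|\widehat z-\widehat z'|(|\widehat z|+|\widehat z'|)\,ds|\mathcal F_\tau]\le\|\widehat z-\widehat z'\|_{BMO,\mathbb P^{\bar o}}(\|\widehat z\|_{BMO,\mathbb P^{\bar o}}+\|\widehat z'\|_{BMO,\mathbb P^{\bar o}})\le 2R\|\widehat z-\widehat z'\|_{BMO,\mathbb P^{\bar o}}$, so its contribution to the Lipschitz constant is of order $\|c_1\|R\le\tfrac14$; the $f_2$-increment is bounded via \textbf{ASS-2} by $\|\theta\|C_2(1+2R)\|\widehat z-\widehat z'\|_{BMO,\mathbb P^{\bar o}}$ and is made arbitrarily small by shrinking $\theta^*$. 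Choosing $\theta^*$ (depending only on $R,\bm x,c_1,C_1,C_2$) so that the total constant is $<1$ makes $\Phi$ a contraction, and the Banach fixed-point theorem furnishes the unique solution with $\widehat{\mathcal Z}\in B_R$, for which \eqref{estimate:Y-general-QBSDE}--\eqref{estimate:Z-general-QBSDE} hold.

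The main obstacle I anticipate is the contraction step rather than the self-map step, because the quadratic driver $f_1$ is only locally Lipschitz: controlling $f_1(t,\widehat z_t)-f_1(t,\widehat z'_t)$ requires the BMO product (energy) inequality and forces the radius threshold $R\le\frac{1}{4\sqrt2\|c_1\|}$ to keep the $f_1$-contribution to the Lipschitz constant below one, exactly as in \cite{Tevzadze2008}. The mean-field term $f_2$ is genuinely non-local, containing conditional expectations of $\widehat{\mathcal Z}$, and its BMO-Lipschitz constant grows with $\|\widehat z\|_{BMO,\mathbb P^{\bar o}}$; the only thing taming it is the prefactor $\|\theta\|$ in \textbf{ASS-1}--\textbf{ASS-2}, so the delicate point is to pick $\theta^*$ for which the a priori, self-map, and contraction inequalities close simultaneously. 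A secondary technical nuisance is that $c_1$ is a genuine $\mathcal A$-measurable random variable, so every exponential-transform computation must keep track of $\underline{c_1}$ and $\|c_1\|$ separately, which is precisely the source of the asymmetric constants appearing in \eqref{estimate:Y-general-QBSDE} and \eqref{estimate:Z-general-QBSDE}.
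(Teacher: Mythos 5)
Your proposal is correct, but it implements the fixed point differently from the paper. You freeze the \emph{entire} driver at $\widehat z$, so the inner problem is a driver-frozen BSDE solved by martingale representation, and the contraction must absorb the increment of the quadratic term $f_1$ through the input difference $\|\widehat z-\widehat z'\|_{BMO,\mathbb P^{\bar o}}$ via the Cauchy--Schwarz inequality for BMO --- this is the pure \cite{Tevzadze2008} scheme, and it is exactly what forces the threshold $R\le\frac{1}{4\sqrt2\|c_1\|}$ to do double duty in the self-map and contraction steps. The paper instead freezes \emph{only} the non-local term $f_2(\cdot;\widehat z,\theta)$ and keeps $f_1(t,\widehat{\mathcal Z}_t)$ live in the inner equation, which is then a genuine quadratic BSDE solved by \cite[Proposition 3]{BH-2008} (the exponential integrability of $\int 2c_1|f_2|\,ds$ needed there is supplied by the John--Nirenberg inequality of \cite{Kazamaki-2006}, exactly as in your $\widehat{\mathcal Y}$-estimate); its contraction inequality then carries $\|\Delta\widehat{\mathcal Z}\|_{BMO}$ on both sides and is closed by rearrangement under the same condition $8R^2\|c_1\|^2\le\frac14$, while the only term involving the input difference $\|\Delta\widehat z\|_{BMO}$ carries the factor $\|\theta\|$. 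Your route is more elementary (no quadratic-BSDE existence theorem is invoked for the inner step), at the cost of having to run the Lipschitz estimate for $f_1$ through the iteration; the paper's route isolates the quadratic difficulty in an off-the-shelf result so that the iteration only sees the $\theta$-small mean-field part. The a priori bounds \eqref{estimate:Y-general-QBSDE} and \eqref{estimate:Z-general-QBSDE} are obtained essentially identically in both arguments (exponential transform for $\widehat{\mathcal Y}$, It\^o with a convex exponential-type function for $\widehat{\mathcal Z}$). One presentational slip you should fix: with full freezing, the exponential-transform computation does \emph{not} apply to the frozen linear problem, because $c_1|\phi'(\widehat{\mathcal Y})|\,|\widehat z|^2$ cannot be absorbed by $\frac12\phi''(\widehat{\mathcal Y})|\widehat{\mathcal Z}|^2$ when $\widehat z\neq\widehat{\mathcal Z}$; the self-map bound must come from the It\^o-on-$\widehat{\mathcal Y}^2$ estimate (which is in fact what your quoted bound $O(\|c_1\|^2R^4)$ reflects), and the estimates \eqref{estimate:Y-general-QBSDE}--\eqref{estimate:Z-general-QBSDE} in their stated form should be derived a posteriori for the fixed point, where $f_1$ is evaluated at $\widehat{\mathcal Z}$ itself.
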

\begin{proof}
{\bf Step 1.} In the first step, we prove that for each fixed $R>0$ and $\widehat z\in H^2_{BMO,\mathbb P^{\bar o}}$ with $\|\widehat z\|^2_{BMO,\mathbb P^{\bar o}}\leq R$, there exists a unique solution to
\begin{equation}\label{eq:QBSDE-z-fixed}
-d\widehat{\mathcal Y}_t=\left\{		f_1(t,\widehat{\mathcal Z}_t)+f_2(t;\widehat{z},\theta)	\right\}\,dt-\widehat{\mathcal Z}_t\,d W^{\bar o}_t,\quad\mathcal Y_T=\theta\bf{x}.
\end{equation}
In addition, the solution satisfies the estimates \eqref{estimate:Y-general-QBSDE} and \eqref{estimate:Z-general-QBSDE} with $\widehat{\mathcal Z}$ on the right side replaced by $\widehat z.$

In order to prove the claim in {\bf Step 1}, we choose $\theta_1^*$, such that $\theta_1^* C_1(R)<1$. Then for all $0\leq\|\theta\|\leq \theta_1^*$, we have by {\bf ASS-1}
\[
	\left\|	 \left|	 2c_1 f_2(\cdot;\widehat{z},\theta)	\right|^{\frac{1}{2}}			\right\|_{BMO,\mathbb P^{\bar o}}^2\leq C_1(R)\|\theta\|<1,
\] 
which implies that by \cite[Theorem 2.2]{Kazamaki-2006}
\[
	\mathbb E^{\mathbb P^{\bar o}}\left[\left. e^{ 2c_1\int_t^T f_2(s;\widehat z,\theta) }\,ds\right|\mathcal F_t		\right]\leq \frac{1}{ 1-  	\left\|	 \left|	 2c_1 f_2(\cdot;\widehat{z},\theta)	\right|^{\frac{1}{2}}			\right\|_{BMO,\mathbb P^{\bar o}}^2  }<\infty.
\]
Thus, all conditions in  \cite[Proposition 3]{BH-2008} are satisfied. It yields a solution $(\widehat{\mathcal Y},\widehat{\mathcal Z})$ of \eqref{eq:QBSDE-z-fixed}, such that the estimate for  $\widehat{\mathcal Y}$ holds. In order to obtain the estimate for $\widehat{\mathcal Z}$, we define 
$\mathcal T(y)=\frac{\frac{1}{2c}e^{2c_1|y|}-\frac{1}{2c_1}-|y|}{c_1}$. It\^o's formula implies that
\begin{equation*}
	\begin{split}
		\mathcal T(\widehat{\mathcal Y}_t)=&~\mathcal T(\theta{\bf x})+\int_t^T\mathcal T'(\widehat{\mathcal Y}_s)\left(f_1(s,\widehat {\mathcal Z}_s)+f_2(s;\widehat z,\theta)\right)\,ds\\
		&~-\frac{1}{2}\int_t^T\mathcal T''(\widehat{\mathcal Y}_s)|\widehat Z_s|^2\,ds-\int_t^T\mathcal T'(\mathcal Y_s)\widehat{\mathcal Z}_s\,d  W^{\bar o}_s\\
		\leq&~\mathcal T(\theta{\bf x})+\int_t^T\left(c_1|\mathcal T'(\widehat{\mathcal Y}_s)|-\frac{1}{2}\mathcal T''(\widehat{\mathcal Y}_s)\right)|\widehat{\mathcal Z}_s|^2\,ds\qquad(\textrm{by }\textbf{ASS{\rm-}1})\\
		&~+\int_t^T\mathcal T'(\widehat{\mathcal Y}_s)f_2(s;\widehat z,\theta)\,ds-\int_t^T\mathcal T'(\widehat{\mathcal Y}_s)\widehat{\mathcal Z}_s\,d W^{\bar o}_s\\
		=&~\mathcal T(\theta{\bf x})-\int_t^T|\widehat{\mathcal Z}_s|^2\,ds+\int_t^T\mathcal T'(\widehat{\mathcal Y}_s)f_2(s;\widehat z,\theta)\,ds-\int_t^T\mathcal T'(\widehat{\mathcal Y}_s)\widehat{\mathcal Z}_s\,dW^{\bar o}_s.
	\end{split}
\end{equation*}
Note that $\mathcal T$ is increasing for $y>0$. Therefore, for any stopping time $
\tau$, it holds that
\begin{equation}
	\begin{split}
		&~\mathbb E^{\mathbb P^{\bar o}}\left[\left.\int_\tau^T|\widehat{\mathcal Z}_s|^2\,ds \right|\mathcal F_\tau \right] \\
		\leq&~ \mathcal T( \|\theta {\bf x}\| )+\int_\tau^T\mathcal T'(\widehat{\mathcal Y}_s)f_2(s;\widehat z,\theta)\,ds\\
		\leq&~\frac{\frac{1}{2\underline {c_1}}e^{2\|c_1\| \|\theta {\bf x}\|}-\frac{1}{2\underline {c_1}}-\|\theta{\bf x}\|}{\underline {c_1}}+\frac{e^{2\|c_1\| \|\widehat{\mathcal Y}\|_\infty}-1}{\underline{ c_1}}\left\| | f_2(\cdot;\widehat z,\theta)|^{\frac{1}{2}} \right\|^2_{BMO,\mathbb P^{\bar o}},
	\end{split}
\end{equation}
which implies the estimate for $\widehat{\mathcal Z}$.

Moreover, for any two solutions $(\widehat{\mathcal Y},\widehat{\mathcal Z})$ and $(\widehat{\mathcal Y}',\widehat{\mathcal Z}')$ in $S^\infty\times H^2_{BMO,\mathbb P^{\bar o}}$, $(\Delta{\widehat{\mathcal Y}}, \Delta {\mathcal{\widehat Z}}):=(\widehat{\mathcal Y}-\widehat{\mathcal Y}',\mathcal {\widehat Z}-\mathcal{\widehat Z}')$ follows for some stochastic process $ L$
\begin{equation*}
	\Delta\widehat{\mathcal Y}_t=\int_t^TL_s\Delta\widehat{\mathcal Z}_s\,ds-\int_t^T\Delta\widehat{\mathcal Z}_s\,d W^{\bar o}_s,
\end{equation*}
where $|L|\leq c_1 (|\widehat{\mathcal Z}|+|\widehat{\mathcal Z}'|)$. Define $\mathbb Q'$ by $\frac{d\mathbb Q'}{d\mathbb P^{\bar o}}=\mathcal E\left( \int_0 L_s\,d W^{\bar o}_s \right)$. \cite[Theorem 2.3]{Kazamaki-2006} implies that $\mathbb Q'$ is a probability measure. Consequently, we can rewrite the above equation as
$
\Delta{\widehat{\mathcal Y}}_t=-\int_t^T\Delta{\widehat{ \mathcal Z}}_s\,d  W_s^{\mathbb Q'},
$
where $ W^{\mathbb Q'}=W^{\bar o}-\int_0L_s\,ds$ is a Brownian motion under $\mathbb Q'$. Obviously it holds that $\Delta \widehat{\mathcal Y}=\Delta\widehat{\mathcal Z}=0$ and the uniqueness result follows.

\textbf{Step 2.} For each fixed $R$ and each $\widehat z$ in the $R$-ball of $H^2_{BMO,\mathbb P^{\bar o}}$, {\bf Step 1} yields a unique $(\widehat{\mathcal Y},\widehat{\mathcal Z})$, such that \eqref{eq:QBSDE-z-fixed} holds. According to the estimate for $(\widehat{\mathcal Y},\widehat{\mathcal Z})$ in {\bf Step 1}, we can choose a small $\theta_2^*\leq\theta_1^*$ depending on $R$, ${\bf x}$, $c_1$, $C_1$ and $C_2$, such that for any $0\leq\|\theta\|\leq\theta_2^*$, it holds that $\|\widehat{\mathcal Z}\|^2_{BMO,\mathbb P^{\bar o}}\leq R$. Thus, the mapping $\widehat z\mapsto\widehat{\mathcal Z}$ from the $R$-ball of $H^2_{BMO,\mathbb Q}$ to itself is well-defined. It remains to prove that this mapping is a contraction.

For any two $\widehat z$ and $\widehat z'$ in the $R$-ball of $H^2_{BMO,\mathbb P^{\bar o}}$, {\bf Step 1} yields a unique solution $(\widehat{\mathcal Y},\widehat{\mathcal Z})$ and $(\widehat{\mathcal Y}',\widehat{\mathcal Z}')$ of \eqref{eq:QBSDE-z-fixed}, corresponding to $\widehat z$ and $\widehat z'$, respectively.  Applying It\^o's formula to the BSDE of $(\Delta{\widehat{\mathcal Y}}, \Delta {\mathcal{\widehat Z}}):=(\widehat{\mathcal Y}-\widehat{\mathcal Y}',\mathcal {\widehat Z}-\mathcal{\widehat Z}')$, by {\bf ASS-2} and Young's inequality we have 
\begin{equation*}
	\begin{split}
	&~	\|\Delta\widehat{\mathcal Y}	\|_\infty^2+\|\widehat{\mathcal Z}  \|_{BMO,\mathbb P^{\bar o}}^2\\
	\leq&~2\| \Delta\widehat{\mathcal Y} \|_\infty \|c_1\| \| \Delta\widehat{\mathcal Z} \|_{BMO,\mathbb P^{\bar o}} \left( \| \widehat{\mathcal Z} \|_{BMO,\mathbb P^{\bar o}}+\|\widehat{\mathcal Z}'\|_{BMO,\mathbb P^{\bar o}}			\right)\\
	&~+2C_2\|	\Delta\widehat{\mathcal Y}	\|_\infty\|\theta\|  \|\Delta\widehat{ z}\|_{BMO,\mathbb P^{\bar o}} \left(	1+\|\widehat z\|_{BMO,\mathbb P^{\bar o}} + \|\widehat z'\|_{BMO,\mathbb P^{\bar o}}		\right)\\
	\leq&~4 R\|c_1\|\|\Delta\widehat{\mathcal Y}\|_\infty\|\Delta\widehat{\mathcal Z}\|_{BMO,\mathbb P^{\bar o}} + 2C_2\|\theta\|(1+2R)\|\Delta \widehat{\mathcal Y}	\|_\infty\|	\Delta\widehat z	\|_{BMO,\mathbb P^{\bar o}}\\
	\leq&~ \frac{1}{2}\|\Delta\widehat{\mathcal Y}\|_\infty^2 + 8R^2\|c_1\|^2\|\Delta\widehat{\mathcal Z}\|^2_{BMO,\mathbb P^{\bar o}}+\frac{1}{2}\|\Delta\widehat{\mathcal Y}\|_\infty^2+2C_2^2\|\theta||^2(1+2R)^2\| \Delta\widehat z \|_{BMO,\mathbb P^{\bar o}}^2.
	\end{split}
\end{equation*}
First, choose $R$, such that $8R^2\|c_1\|^2\leq\frac{1}{4}$. Second, choose $\theta^*\leq\theta_2^*$, such that $2C_2^2(\theta^*)^2(1+2R)\leq\frac{1}{4}$. Then, for all $0\leq\|\theta\|\leq\theta^*$, it holds that $\|\Delta \widehat{\mathcal Z} \|_{BMO,\mathbb P^{\bar o}}\leq \frac{1}{2}\|\Delta\widehat z\|^2_{BMO,\mathbb P^{\bar o}}$, which implies a contraction. 
\end{proof}

As a corollary of Lemma \ref{wellposedness-general-QBSDE}, we get the wellposedness result of \eqref{MF-BSDE-P-o}.


\begin{theorem}\label{thm:wellpos-MF-BSDE}
Let \textbf{Assumption 1} hold. Let $c_1=\frac{1}{1-\gamma}$ and choose $R$, such that $R\leq \frac{1}{4\sqrt{2}\|c_1\|}$.
Then, there exists a positive constant $\theta^*$ only depending on $R,~T,~\gamma,~\sigma,~\sigma^0$ and $h$, such that for all $0\leq\|\theta\|\leq\theta^*$, there exists a unique $(\widetilde Y,\widetilde Z,\widetilde Z^0)\in S^\infty\times H^2_{BMO,\mathbb P^{\bar o}}\times H^2_{BMO,\mathbb P^{\bar o}}$ satisfying \eqref{MF-BSDE-P-o}, and the estimates \eqref{estimate:Y-general-QBSDE} and \eqref{estimate:Z-general-QBSDE}. 
\end{theorem}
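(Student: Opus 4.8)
The plan is to read off \eqref{MF-BSDE-P-o} as a special case of the abstract quadratic BSDE \eqref{eq:general-QBSDE} and then to apply Lemma \ref{wellposedness-general-QBSDE} verbatim. Concretely, I would take the two-dimensional control $\widehat{\mathcal Z}=(\widetilde Z,\widetilde Z^0)$ driven by $W^{\bar o}=(\widetilde W,\widetilde W^0)^\top$, set $\widehat{\mathcal Y}=\widetilde Y$, identify the drivers $f_1=\mathcal J_1^{\bar o}$ and $f_2=\mathcal J_2$, and rewrite the terminal datum $-\theta\gamma\,\mathbb E[\log x]$ in the form $\theta\mathbf x$ with $\mathbf x:=-\gamma\,\mathbb E[\log x]$, a bounded $\mathcal A$-measurable random variable under \textbf{Assumption 1}. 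With $c_1=\frac{1}{1-\gamma}$ (bounded and bounded away from $0$ since $|\gamma|\ge\underline\gamma$ and $\gamma<1$), and having noted that $\mathbb P^{\bar o}$ is already a genuine probability measure by Proposition \ref{lem:YZ-o} and \cite[Theorem 2.3]{Kazamaki-2006}, the whole content of the proof reduces to checking the structural hypotheses \textbf{ASS-1} and \textbf{ASS-2} for the pair $(\mathcal J_1^{\bar o},\mathcal J_2)$.

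For the quadratic part, I would exploit that, by \eqref{def:J1-o}, $\mathcal J_1^{\bar o}=\tfrac12\,\widehat{\mathcal Z}^\top\big(I+\tfrac{\gamma}{1-\gamma}P\big)\widehat{\mathcal Z}$, where $P$ is the orthogonal projection onto $\mathrm{span}(\sigma,\sigma^0)$ (well defined because $\widetilde\sigma^\top\widetilde\sigma$ is bounded away from $0$). Since $P$ has eigenvalues $0$ and $1$, the form $\tfrac12(I+\tfrac{\gamma}{1-\gamma}P)$ has spectral radius $\tfrac12\max\{1,\tfrac{1}{1-\gamma}\}$, which is dominated by $c_1=\tfrac{1}{1-\gamma}$; this gives $|\mathcal J_1^{\bar o}(t,\widehat z)|\le c_1|\widehat z|^2$ in \textbf{ASS-1}, and the elementary polarization $\tfrac12 z^\top M z-\tfrac12 z'^\top M z'=\tfrac12(z-z')^\top M(z+z')$ gives the local Lipschitz estimate for $f_1$ in \textbf{ASS-2} with the same constant. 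The crux is then the two bounds on $\mathcal J_2$. Here I would use two features: first, every summand of $\mathcal J_2$ (recorded in Appendix \ref{sec:J2}) carries an explicit factor $\theta\gamma$; second, each summand is a product of coefficients that are bounded under \textbf{Assumption 1}, of the benchmark processes $Z^{\bar o},Z^{0,\bar o}$ which lie in $H^2_{BMO}$ by Proposition \ref{lem:YZ-o}, of $\widehat z$, and of their conditional expectations $\mathbb E[\,\cdot\mid\mathcal F^0_\cdot]$.

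The key technical step, which I expect to be the main obstacle, is controlling the conditional-expectation (mean-field) terms in the $\|\cdot\|_{BMO,\mathbb P^{\bar o}}$ norm. I would use that taking $\mathbb E[\,\cdot\mid\mathcal F^0_\cdot]$ is a contraction for the relevant norms, so that the mean-field terms inherit $BMO$ bounds from their arguments, together with the fact that the change of measure \eqref{P-o} preserves the $BMO$ class: because $\mathcal M$ in \eqref{def:M} is $BMO$ by Proposition \ref{lem:YZ-o}, the reverse H\"older and energy inequalities give equivalence of the $H^2_{BMO,\mathbb P}$ and $H^2_{BMO,\mathbb P^{\bar o}}$ norms up to constants depending only on the data. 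Absorbing the bounded coefficients and the fixed benchmark $BMO$ norms into constants, each summand of $\mathcal J_2$ is bounded in $\|\cdot\|_{BMO,\mathbb P^{\bar o}}$ by $\|\theta\|$ times a polynomial in $\|\widehat z\|_{BMO,\mathbb P^{\bar o}}$; this yields the second bound of \textbf{ASS-1} with an increasing locally bounded $C_1$, and the corresponding difference estimate yields the Lipschitz-in-$\theta$ condition \textbf{ASS-2} with constant $C_2$.

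With \textbf{ASS-1} and \textbf{ASS-2} verified, Lemma \ref{wellposedness-general-QBSDE} applies directly: fixing $R\le\frac{1}{4\sqrt2\|c_1\|}$ produces a threshold $\theta^*$ such that for $0\le\|\theta\|\le\theta^*$ the BSDE \eqref{MF-BSDE-P-o} has a unique solution $(\widetilde Y,\widetilde Z,\widetilde Z^0)\in S^\infty\times H^2_{BMO,\mathbb P^{\bar o}}\times H^2_{BMO,\mathbb P^{\bar o}}$ obeying \eqref{estimate:Y-general-QBSDE} and \eqref{estimate:Z-general-QBSDE}. Since $\mathbf x$, $c_1$, $C_1$ and $C_2$ are themselves determined by $\gamma,\sigma,\sigma^0,h,T$ (and the benchmark $BMO$ norms from Proposition \ref{lem:YZ-o}), the dependence of $\theta^*$ on $R,T,\gamma,\sigma,\sigma^0,h$ asserted in the statement is exactly what the lemma delivers, completing the proof.
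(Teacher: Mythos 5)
Your proposal follows essentially the same route as the paper: cast \eqref{MF-BSDE-P-o} as an instance of \eqref{eq:general-QBSDE} with $f_1=\mathcal J_1^{\bar o}$, $f_2=\mathcal J_2$, verify \textbf{ASS-1} and \textbf{ASS-2} (the paper delegates the BMO bounds on the conditional-expectation terms to \cite[Lemma A.1]{FSZ-2021}, which is exactly the contraction/measure-change step you flag as the main obstacle), and then invoke Lemma \ref{wellposedness-general-QBSDE}. The extra detail you supply (the projection form of $\mathcal J_1^{\bar o}$, the polarization identity) is consistent with, and somewhat more explicit than, the paper's "straightforward to verify" treatment.
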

\begin{proof}
It is sufficient to verify that the driver of \eqref{MF-BSDE-P-o} satisfies {\bf ASS-1} and {\bf ASS-2} in Lemma \ref{wellposedness-general-QBSDE}. Let $\widehat{\mathcal Z}=(\widetilde Z,\widetilde Z^0)$.
First, by the definition of $\mathcal J_1^{\bar o}$ in \eqref{def:J1-o}, it is straightforward to verify that
\[
	|\mathcal J^{\bar o}_1(t,\widehat{\mathcal Z})|\leq c_1|\widehat{\mathcal Z}|^2,\quad\textrm{  and  }\quad      |	\mathcal J_1^{\bar o}(t,\widehat{\mathcal Z}) -\mathcal J_1^{\bar o}(t,\widehat{\mathcal Z}')		|\leq c_1 | \widehat{\mathcal Z} - \widehat{\mathcal Z}' | (	|  \widehat{\mathcal Z} |+| \widehat{\mathcal Z}’  |	).
\]
Furthermore, by the expression of $\mathcal J_2$ in Appendix \ref{sec:J2} and \cite[Lemma A.1]{FSZ-2021}, there exists a positive constant $C_2$ that only depends on $\gamma$, $\sigma$, $\sigma^0$, $h$, $\|Z^{\bar o}\|_{BMO,\mathbb P^{\bar o}}$ and $\| Z^{0,\bar o} \|_{BMO,\mathbb P^{\bar o}}$, such that
\[
	\|	\mathcal J_2(t;\widehat{\mathcal Z},\theta)- \mathcal J_2(t;\widehat{\mathcal Z}',\theta)		\|_{BMO,\mathbb P^{\bar o}} \leq \|\theta\| C_2\| \widehat{\mathcal Z}-\widehat{\mathcal Z}' \|_{BMO,\mathbb P^{\bar o}}\left(	1+	\| \widehat{\mathcal Z}\|_{BMO,\mathbb P^{\bar o}}+\| \widehat{\mathcal Z}' \|_{BMO,\mathbb P^{\bar o}}	\right).
\]
Again using Appendix \ref{sec:J2} and \cite[Lemma A.1]{FSZ-2021}, there exists a positive locally bounded function $C_1$ depending on $\gamma$, $\sigma$, $\sigma^0$ and $h$, such that
\[
	\left\||	2c_1\mathcal J_2(\cdot;\widehat{\mathcal Z})		|^{\frac{1}{2}}	\right\|^2_{BMO,\mathbb P^{\bar o}}\leq \|\theta\| C_1(\|\widehat{\mathcal Z}\|_{BMO,\mathbb P^{\bar o}}  ).
\]
Therefore, {\bf ASS-1} and {\bf ASS-2} are satisfied. From Lemma \ref{wellposedness-general-QBSDE}, we obtain the desired results. 	
\end{proof}
The following corollary implies that the unique solution to \eqref{MF-BSDE} can be controlled by $\theta$. In particular, the triple $\left(\|\widetilde Y\|_\infty, \|\widetilde Z\|_{BMO}, \|\widetilde Z^0\|_{BMO}\right)$ goes to $0$ as $\|\theta\|$ goes to $0$. This result is used to establish the convergence result in Corollary \ref{coro:wellpos-FBSDEs}, which will be used in Section \ref{sec:asymptotic}.
\begin{corollary}\label{coro:YZ-go2-0}
Let \textbf{Assumption 1} hold and	let $(\widetilde Y, \widetilde Z,\widetilde Z^0)$ be the unique solution to \eqref{MF-BSDE}. Then, it holds that
	\[
		\lim_{\|\theta\|\rightarrow 0}\left(\|\widetilde Y\|_\infty+\|\widetilde Z\|_{BMO}+ \|\widetilde Z^0\|_{BMO}\right)=0.
	\]
	and
	\[
		\lim_{\|\theta\|\rightarrow 0} \left( \|\widetilde Z\|_{M^p}+\|\widetilde Z^0\|_{M^p}\right)=0.
	\]
\end{corollary}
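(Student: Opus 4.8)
The plan is to obtain both limits by reading the two a priori estimates \eqref{estimate:Y-general-QBSDE} and \eqref{estimate:Z-general-QBSDE} of Theorem \ref{thm:wellpos-MF-BSDE} as $\|\theta\|\to 0$, and then transferring the $H^2_{BMO}$ conclusion from the auxiliary measure $\mathbb P^{\bar o}$ back to $\mathbb P$. Recall that the solution $(\widetilde Y,\widetilde Z,\widetilde Z^0)$ to \eqref{MF-BSDE} is precisely the solution to \eqref{MF-BSDE-P-o} produced by Theorem \ref{thm:wellpos-MF-BSDE}, in which, for all $\|\theta\|\leq\theta^*$, the pair $\widehat{\mathcal Z}:=(\widetilde Z,\widetilde Z^0)$ stays inside the fixed $R$-ball of $H^2_{BMO,\mathbb P^{\bar o}}$. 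Since $\|\widehat{\mathcal Z}\|_{BMO,\mathbb P^{\bar o}}\leq R$ and $C_1$ is increasing, {\bf ASS-1} (applied here with $f_2=\mathcal J_2$ and $c_1=\frac{1}{1-\gamma}$) yields the single uniform input that drives everything: $\bigl\||2c_1\mathcal J_2(\cdot;\widehat{\mathcal Z},\theta)|^{1/2}\bigr\|^2_{BMO,\mathbb P^{\bar o}}\leq\|\theta\|\,C_1(R)\to 0$ as $\|\theta\|\to 0$, and consequently $\bigl\||\mathcal J_2(\cdot;\widehat{\mathcal Z},\theta)|^{1/2}\bigr\|^2_{BMO,\mathbb P^{\bar o}}=O(\|\theta\|)$ as well, using $\underline{c_1}>0$.

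Next I would substitute this into the two estimates. In \eqref{estimate:Y-general-QBSDE}, boundedness of $\bf x$ gives $\|\theta{\bf x}\|\to 0$, and by monotonicity of $-\log$ the term $-\frac{1}{2\underline{c_1}}\log\bigl(1-\|\theta\|C_1(R)\bigr)\to 0$; hence $\|\widetilde Y\|_\infty\to 0$. Note $\|\widetilde Y\|_\infty$ is an essential supremum, so it coincides under $\mathbb P$ and $\mathbb P^{\bar o}$ and needs no adjustment. Feeding $\|\theta{\bf x}\|\to 0$ and $\|\widetilde Y\|_\infty\to 0$ into \eqref{estimate:Z-general-QBSDE}, the first summand is a fixed continuous function of $\|\theta{\bf x}\|$ vanishing at $0$, while in the second summand $e^{2\|c_1\|\|\widetilde Y\|_\infty}-1\to 0$ and the remaining BMO factor is $O(\|\theta\|)$; therefore $\|\widehat{\mathcal Z}\|_{BMO,\mathbb P^{\bar o}}\to 0$, i.e. $\|\widetilde Z\|_{BMO,\mathbb P^{\bar o}}+\|\widetilde Z^0\|_{BMO,\mathbb P^{\bar o}}\to 0$.

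Then I would pass from $\mathbb P^{\bar o}$ to $\mathbb P$. The density $\frac{d\mathbb P^{\bar o}}{d\mathbb P}=\mathcal E\bigl(\int_0^\cdot\mathcal M_s\,d(W_s,W^0_s)^\top\bigr)$ is the stochastic exponential of a \emph{fixed} BMO martingale: by Proposition \ref{lem:YZ-o} we have $(Z^{\bar o},Z^{0,\bar o})\in H^2_{BMO}\times H^2_{BMO}$, and together with the boundedness of $h,\gamma,\sigma,\sigma^0$ this gives $\mathcal M\in H^2_{BMO}$ with a norm independent of $\theta$. By the reverse H\"older inequality for BMO martingales \cite{Kazamaki-2006}, the norms $\|\cdot\|_{BMO,\mathbb P}$ and $\|\cdot\|_{BMO,\mathbb P^{\bar o}}$ are equivalent with constants depending only on $\|\mathcal M\|_{BMO}$; hence $\|\widetilde Z\|_{BMO}+\|\widetilde Z^0\|_{BMO}\to 0$ under $\mathbb P$, which combined with $\|\widetilde Y\|_\infty\to 0$ gives the first assertion.

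Finally, the $M^p$ statement follows from the energy inequality \cite[P.26]{Kazamaki-2006}, exactly as in Proposition \ref{lem:YZ-o}: for $p>1$ pick an integer $n\geq p/2$, so that $\mathbb E\bigl[(\int_0^T|\widetilde Z_t|^2\,dt)^{p/2}\bigr]\leq (n!)^{p/(2n)}\|\widetilde Z\|_{BMO}^{p}$ by the energy bound and Jensen's inequality, whence $\|\widetilde Z\|_{M^p}\leq (n!)^{1/(2n)}\|\widetilde Z\|_{BMO}$ and likewise for $\widetilde Z^0$; letting $\|\theta\|\to 0$ and invoking the third step yields $\|\widetilde Z\|_{M^p}+\|\widetilde Z^0\|_{M^p}\to 0$. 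The only genuinely non-mechanical point is the third step, namely verifying that the Girsanov change of measure does not destroy the BMO control; this is standard for stochastic exponentials of BMO martingales, and crucially the equivalence constants are uniform in $\theta$, so they do not interfere with taking the limit.
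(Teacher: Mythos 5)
Your proof is correct and follows essentially the same route as the paper: read off $\|\widetilde Y\|_\infty\to 0$ from \eqref{estimate:Y-general-QBSDE}, feed it into \eqref{estimate:Z-general-QBSDE} to get the $H^2_{BMO,\mathbb P^{\bar o}}$ convergence (the paper, like you, uses that $\widehat{\mathcal Z}$ stays in a $\theta$-independent $R$-ball), transfer to $H^2_{BMO}$ under $\mathbb P$ via the equivalence of BMO norms for measure changes by stochastic exponentials of fixed BMO martingales (the paper cites \cite[Lemma A.1]{HMP-2019} for exactly this), and finish with the energy inequality for $M^p$. The extra detail you supply — the $O(\|\theta\|)$ bound on the $\mathcal J_2$ BMO term from \textbf{ASS-1} and the explicit Jensen step in the energy inequality — is consistent with, and merely expands, the paper's terse argument.
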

\begin{proof}
	Note that $\widetilde Z$ and $\widetilde Z^0$ belong to the $R$-ball of $H^2_{BMO,\mathbb P^{\bar o}}$, where $R$ is independent of $\theta$ by Theorem \ref{thm:wellpos-MF-BSDE}. Thus, by \eqref{estimate:Y-general-QBSDE} and letting $\|\theta\|\rightarrow 0$, we get $\|\widetilde Y\|_\infty\rightarrow 0$. Taking this convergence into \eqref{estimate:Z-general-QBSDE}, we get the convergence of $\widehat{\mathcal Z}:=(\widetilde Z,\widetilde Z^0)$ in $H^2_{BMO,\mathbb P^{\bar o}}$. The convergence of $\widehat{\mathcal Z}$ in $H^2_{BMO}$ can be obtained by \cite[Lemma A.1]{HMP-2019}. The convergence in $M^p$ is obtained by the energy inequality; see \cite[P.26]{Kazamaki-2006}.
\end{proof}

Another corollary of Theorem \ref{thm:wellpos-MF-BSDE} is that the FBSDE \eqref{MF-FBSDE-diff}, and thus the FBSDE \eqref{FBSDE-power-2} are wellposed.
\begin{corollary}\label{coro:wellpos-FBSDEs}
Let \textbf{Assumption 1} hold and $0\leq \|\theta\|\leq \theta^*$, where $\theta^*$ is the constant determined in Theorem \ref{thm:wellpos-MF-BSDE}. The FBSDE \eqref{MF-FBSDE-diff} has a unique solution $(\overline X,\overline Y,\overline Z,\overline Z^0)\in \bigcap_{p>1}S^p\times\bigcap_{p>1} S^p\times H^2_{BMO}\times H^2_{BMO}$ with the convergence 
	\[
	\lim_{\|\theta\|\rightarrow 0}\left(	\| \overline Z \|_{M^p}+\| \overline Z^{0} \|_{M^p} \right) =0
	\]
	and
	the FBSDE \eqref{FBSDE-power-2} is wellposed in $\bigcap_{p>1}S^p\times\bigcap_{p>1} S^p\times H^2_{BMO}\times H^2_{BMO}$.
\end{corollary}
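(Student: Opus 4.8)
The plan is to assemble the statement from three results already in hand: Theorem \ref{thm:wellpos-MF-BSDE}, which solves the driving quadratic mean field BSDE \eqref{MF-BSDE-P-o} (equivalently \eqref{MF-BSDE}) for $0\leq\|\theta\|\leq\theta^*$; Corollary \ref{lem:equiv-BSDE}, which transfers wellposedness between \eqref{MF-BSDE} and the adjusted FBSDE \eqref{MF-FBSDE-diff}; and Proposition \ref{lem:YZ-o}, which handles the benchmark \eqref{MF-FBSDE-0}. The target FBSDE \eqref{FBSDE-power-2} is then recovered by adding back the benchmark through the defining relation \eqref{eq:diff_XYZ}, so the whole argument is a matter of chaining bijections and tracking function spaces rather than producing a new estimate.

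First I would invoke Theorem \ref{thm:wellpos-MF-BSDE} to obtain, for each $0\leq\|\theta\|\leq\theta^*$, a unique triple $(\widetilde Y,\widetilde Z,\widetilde Z^0)\in S^\infty\times H^2_{BMO,\mathbb P^{\bar o}}\times H^2_{BMO,\mathbb P^{\bar o}}$ solving \eqref{MF-BSDE-P-o}, hence \eqref{MF-BSDE}. Because the density \eqref{P-o} defining $\mathbb P^{\bar o}$ is the stochastic exponential of a BMO martingale (its integrand $\mathcal M$ in \eqref{def:M} lies in $H^2_{BMO}$ by Proposition \ref{lem:YZ-o}), the BMO spaces under $\mathbb P$ and $\mathbb P^{\bar o}$ coincide as sets with equivalent norms (see \cite[Lemma A.1]{HMP-2019}), so in fact $(\widetilde Z,\widetilde Z^0)\in H^2_{BMO}\times H^2_{BMO}$. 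Corollary \ref{lem:equiv-BSDE} then produces a unique $(\overline X,\overline Y,\overline Z,\overline Z^0)$ solving \eqref{MF-FBSDE-diff}, with $\overline Z=\widetilde Z$ and $\overline Z^0$ given by \eqref{eq:barZ0-intermsof-tildeZ0}.

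Next I would pin down the spaces. Since $\overline Z=\widetilde Z\in H^2_{BMO}$ and, in \eqref{eq:barZ0-intermsof-tildeZ0}, the map recovering $\overline Z^0$ from $(\widetilde Z,\widetilde Z^0,Z^{\bar o},Z^{0,\bar o})$ involves only bounded coefficients, the conditional expectation operator $\mathbb E[\,\cdot\mid\mathcal F^0_\cdot]$ (bounded on $H^2_{BMO}$), and a denominator that is bounded and bounded away from zero for $\|\theta\|\leq\theta^*$, we obtain $\overline Z^0\in H^2_{BMO}$. By the energy inequality (\cite[P.26]{Kazamaki-2006}) this gives $\overline Z,\overline Z^0\in\bigcap_{p>1}M^p$, and integrating the forward dynamics in \eqref{MF-FBSDE-diff}, whose drift and diffusion are bounded linear-and-quadratic expressions in $(\overline Z,\overline Z^0,Z^{\bar o},Z^{0,\bar o})$ (the quadratic piece being integrable to every power by John--Nirenberg), yields $\overline X\in\bigcap_{p>1}S^p$. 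For $\overline Y$, relation \eqref{eq:widetilde-Y} writes $\overline Y$ as $\widetilde Y\in S^\infty$ minus finite-variation and $dW^0$ corrections with $\bigcap_{p>1}M^p$ integrands; by Burkholder--Davis--Gundy these corrections lie in $\bigcap_{p>1}S^p$, so $\overline Y\in\bigcap_{p>1}S^p$, consistent with its unbounded terminal value $\overline Y_T=-\theta\gamma\mathbb E[\overline X_T\mid\mathcal F^0_T]-\theta\gamma\mathbb E[X^{\bar o}_T\mid\mathcal F^0_T]$ being only $L^p$.

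The convergence follows from Corollary \ref{coro:YZ-go2-0}: since $\overline Z=\widetilde Z$ and, from \eqref{eq:barZ0-intermsof-tildeZ0}, $\overline Z^0$ equals $\widetilde Z^0$ plus a term of order $\|\theta\|$ times $M^p$-bounded quantities (the benchmark norms $\|Z^{\bar o}\|_{M^p},\|Z^{0,\bar o}\|_{M^p}$ are fixed and finite), we have $\|\overline Z\|_{M^p}+\|\overline Z^0\|_{M^p}\leq C\big(\|\widetilde Z\|_{M^p}+\|\widetilde Z^0\|_{M^p}+\|\theta\|\big)$, which tends to $0$ as $\|\theta\|\to0$. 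Finally, setting $(\widehat X,Y,Z,Z^0)=(\overline X+X^{\bar o},\overline Y+Y^{\bar o},\overline Z+Z^{\bar o},\overline Z^0+Z^{0,\bar o})$ via \eqref{eq:diff_XYZ} and combining the above with Proposition \ref{lem:YZ-o} gives a solution of \eqref{FBSDE-power-2} in $\bigcap_{p>1}S^p\times\bigcap_{p>1}S^p\times H^2_{BMO}\times H^2_{BMO}$; uniqueness transfers back along the same bijections, since any solution of \eqref{FBSDE-power-2} yields, through \eqref{eq:diff_XYZ}, a solution of the uniquely solvable \eqref{MF-FBSDE-diff}. The main obstacle I anticipate is precisely this bookkeeping of spaces under the two transformations, namely confirming that the change of measure $\mathbb P\leftrightarrow\mathbb P^{\bar o}$ and the conditional-expectation inversion \eqref{eq:barZ0-intermsof-tildeZ0} genuinely preserve $H^2_{BMO}$ and the $M^p$ scales, rather than any fresh analytic estimate.
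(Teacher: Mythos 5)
Your proposal is correct, and its overall architecture coincides with the paper's: chain Theorem \ref{thm:wellpos-MF-BSDE} and Corollary \ref{lem:equiv-BSDE} to get existence and uniqueness for \eqref{MF-FBSDE-diff}, read off the $H^2_{BMO}$ membership of $(\overline Z,\overline Z^0)$ from \eqref{tranform-Z} and \eqref{eq:barZ0-intermsof-tildeZ0} together with the BMO-equivalence under the change of measure, deduce the convergence from Corollary \ref{coro:YZ-go2-0}, and recover \eqref{FBSDE-power-2} by adding back the benchmark via \eqref{eq:diff_XYZ}. The one place where you genuinely diverge is the proof that $\overline X$ (and $\overline Y$) lie in $\bigcap_{p>1}S^p$. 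The paper performs a Girsanov change of measure $\mathbb Q$ that absorbs the entire drift of $\overline X$ (including the quadratic-in-$\overline Z$ piece), applies BDG and the energy inequality under $\mathbb Q$, and then returns to $\mathbb P$ via the reverse H\"older inequality of Lemma \ref{lem:reverse}; the same device is invoked for $\overline Y$. You instead estimate directly under $\mathbb P$: the drift is handled by Cauchy--Schwarz plus the energy/John--Nirenberg inequality (all $\int_0^T|\eta_t|^2\,dt$ moments are finite for BMO integrands), and the martingale part by BDG with $\bigcap_{p>1}M^p$ integrands; for $\overline Y$ you exploit the explicit relation \eqref{eq:widetilde-Y} rather than rerunning the measure change. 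Your route is more elementary — it avoids the reverse H\"older machinery entirely — and is perfectly adequate here because every term in the drift is a product of at most two BMO processes with bounded coefficients; the paper's change-of-measure argument is the more robust template when the drift cannot be split so cleanly. One caveat you inherit from the paper rather than introduce: the uniqueness ultimately rests on Theorem \ref{thm:wellpos-MF-BSDE}, whose uniqueness is stated within the $R$-ball of $H^2_{BMO,\mathbb P^{\bar o}}$, so the transfer of uniqueness "along the bijections" should be read with that localization in mind.
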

\begin{proof}
Theorem \ref{thm:wellpos-MF-BSDE} and Corollary \ref{lem:equiv-BSDE} imply that there exists a unique $(\overline X,\overline Y,\overline Z,\overline Z^0)$ satisfying \eqref{MF-FBSDE-diff}.  
	Theorem \ref{thm:wellpos-MF-BSDE}, Proposition \ref{lem:YZ-o}, \eqref{tranform-Z}, \eqref{eq:barZ0-intermsof-tildeZ0}, and \cite[Lemma A.1]{HMP-2019} imply that $(\overline Z,\overline Z^0)\in H^2_{BMO}\times H^2_{BMO}$. 
In order to prove $\overline X\in\bigcap_{p>1} S^p$, we make a change of measure by defining
\[
	\frac{d\mathbb Q}{d\mathbb P}=\mathcal E\left(-\int_0^\cdot \widetilde\sigma^{\top}(\widetilde\sigma^{\top}\widetilde\sigma)^{-1}\left\{	h_s-\frac{h_s+\sigma_s Z^{\bar o}_s+\sigma^0_s Z^{0,\bar o}_s}{1-\gamma}	-\frac{\sigma_s\overline Z_s+\sigma^0_s\overline Z^0_s}{2(1-\gamma)}  \right\}\,d\overline W_s     \right):=\mathcal E\left(-\int_0^\cdot\widehat{\mathcal M}_s\,d\overline W_s\right),
\]
where we recall that $\widetilde\sigma=(\sigma,\sigma^0)^\top$.
Then, the dynamics of $\overline X$ can be rewritten as
\[
	d\overline X_t=\frac{	 \sigma_t\overline Z_t+ \sigma^0_t\overline Z^0_t	}{(1-\gamma)(\sigma^2_t+(\sigma^0_t)^2	)}\widetilde\sigma_t^\top\,d\widehat W_t,
\]
where
\[
	\widehat W_t=\overline W_t +\int_0^t\widetilde\sigma_s(\widetilde\sigma^\top_s\widetilde\sigma_s)^{-1}\left\{	h_s-\frac{h_s+\sigma_s Z^{\bar o}_s+\sigma^0_s Z^{0,\bar o}_s}{1-\gamma}	-\frac{\sigma_s\overline Z_s+\sigma^0_s\overline Z^0_s}{2(1-\gamma)}  \right\}\,ds
\]
is a $\mathbb Q$-Brownian motion. For each $p>1$, it holds that
\begin{equation*}
	\begin{split}
	\mathbb E^{\mathbb Q}\left[	\sup_{0\leq t\leq T}|\overline X_t|^p\right]\leq&~ C\mathbb E^{\mathbb Q}\left[\left(\int_0^T\left(	\frac{	 \sigma_t\overline Z_t+ \sigma^0_t\overline Z^0_t	}{(1-\gamma)(\sigma^2_t+(\sigma^0_t)^2	)}\right)^2\widetilde\sigma_t^\top\widetilde\sigma_t\,dt\right)^{\frac{p}{2}}		\right]	 \quad(\textrm{by BDG's inequality})\\
	\leq&~ C\left(\|\overline Z\|^2_{BMO,\mathbb Q}+\|\overline Z^0\|^2_{BMO,\mathbb Q}\right)\qquad(\textrm{by energy inequality})\\	<&~\infty \qquad (\textrm{by }\cite[\textrm{Lemma A.1}]{HMP-2019}).
	\end{split}
\end{equation*}
By the definition of $\mathbb Q$, for any $p>1$ and any $1<q<p_{\widehat{\mathcal M}}$, where we recall $p_{\widehat{\mathcal M}}$ is defined in Appendix B, it holds that
\begin{equation*}
	\begin{split}
		\mathbb E\left[	\sup_{0\leq t\leq T}| \overline X_t |^p	\right] =&~ \mathbb E^{\mathbb Q}\left[ \mathcal E\left(\int_0^T\widehat{\mathcal M}_t\,d\widehat W_t	\right)	\sup_{0\leq t\leq T}| \overline X_t |^p\right]\\
		\leq&~		\mathbb E^{\mathbb Q}\left[	\mathcal E\left(	\int_0^T\widehat{\mathcal M}_t\,d\widehat W_t		\right)^q\right]^{\frac{1}{q}} \mathbb E^{\mathbb Q}\left[	\sup_{0\leq t\leq T} |\overline X_t|^{pq^*}		\right]^{\frac{1}{q^*}}\qquad (\frac{1}{q}+\frac{1}{q^*}=1)\\
		<&~\infty\qquad (\textrm{by Lemma }\ref{lem:reverse}).
	\end{split}
\end{equation*}
Therefore, $\overline X\in\bigcap_{p>1}S^p$. The same arguement implies that $\overline Y\in\bigcap_{p>1}S^p.$
The convergence is obtained by Corollary \ref{coro:YZ-go2-0}, \eqref{tranform-Z}, and \eqref{eq:barZ0-intermsof-tildeZ0}. 

The wellposedness of \eqref{FBSDE-power-2} follows from \eqref{eq:diff_XYZ}, Proposition \ref{lem:YZ-o}, and Theorem \ref{thm:wellpos-MF-BSDE}.
\end{proof}


\subsection{Wellposedness of the MFG   \eqref{model-MFG-power}}

The main result in Section \ref{sec:wellposedness} is the following wellposedness result of MFG  \eqref{model-MFG-power}. In particular, our NE is unique.
\begin{theorem}\label{thm:wellposedness-MFG}
	Let \textbf{Assumption 1} hold and $0\leq\|\theta\|\leq \theta^*$, where $\theta^*$ is the constant determined in Theorem \ref{thm:wellpos-MF-BSDE}.
Let $(\widehat X^*,Y^*,Z^*,Z^{0,*})$ be the unique solution to \eqref{FBSDE-power-2}, and  $\mu^{*}_t=\exp\left(\mathbb E[ \widehat X^*_t|\mathcal F^0_t]\right)$, $t\in[0,T]$.  Under $\mu^{*}$, the unique optimal response for the representative player is
\begin{equation}\label{eq:optimal-action-in-thm-power}
	\pi^{*}=\frac{h+\sigma Z^*+\sigma^0Z^{0,*}}{(1-\gamma)(\sigma^2+(\sigma^0)^2)}
\end{equation}
and the value function given one realization of $(\theta,\gamma,x)$ follows
\[
	V^{}(\theta,\gamma,x_{},\mu^{*})=\frac{1}{\gamma}x_{}^\gamma e^{Y^*_0}.
\]
Moreover, $(\mu^{*},\pi^{*})$ is the unique NE of \eqref{model-MFG-power}, where $\log\mu^{*}\in \bigcap_{p>1} S^p_{\mathbb F^0}$ and $\pi^*\in  H^2_{BMO}$.

\end{theorem}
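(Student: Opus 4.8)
The plan is to build the equilibrium out of the unique solution of the FBSDE \eqref{FBSDE-power-2} and to obtain uniqueness by running the correspondence of Proposition \ref{lem:NE-BSDE-power} in reverse. For existence, I would start from Corollary \ref{coro:wellpos-FBSDEs}, which under \textbf{Assumption 1} and $0\le\|\theta\|\le\theta^*$ provides the unique solution $(\widehat X^*,Y^*,Z^*,Z^{0,*})\in\bigcap_{p>1}S^p\times\bigcap_{p>1}S^p\times H^2_{BMO}\times H^2_{BMO}$, and then apply Proposition \ref{lem:NE-BSDE-power}$\bm{(2)}$ to convert it into an NE $(\mu^*,\pi^*)$ with $\pi^*$ given by \eqref{eq:optimal-action-in-thm-power} and $\mu^*_t=\exp(\mathbb E[\widehat X^*_t|\mathcal F^0_t])$. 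The claimed regularity is then read off directly: $\pi^*\in H^2_{BMO}$ is inherited from $(Z^*,Z^{0,*})\in H^2_{BMO}\times H^2_{BMO}$ together with the boundedness and non-degeneracy of $\widetilde\sigma$ in \textbf{Assumption 1}, while $\log\mu^*=\mathbb E[\widehat X^*_\cdot|\mathcal F^0_\cdot]\in\bigcap_{p>1}S^p_{\mathbb F^0}$ follows from $\widehat X^*\in\bigcap_{p>1}S^p$ via conditional Jensen and Doob's inequality.

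The value function and the optimality of $\pi^*$ I would extract from the martingale optimality construction inside the proof of Proposition \ref{lem:NE-BSDE-power}$\bm{(2)}$. With $R^\pi_t=\frac{1}{\gamma}X^\gamma_t e^{Y^*_t}$, the three properties \eqref{3points-power} give $\mathbb E[R^{\pi^*}_T]=R^{\pi^*}_0=R^\pi_0\ge\mathbb E[R^\pi_T]$ for every admissible $\pi$, so $\pi^*$ is optimal under $\mu^*$ and $V(\theta,\gamma,x,\mu^*)=R^{\pi^*}_0=\frac{1}{\gamma}x^\gamma e^{Y^*_0}$. For uniqueness of the optimal response I would use that the driver $\widetilde f(\,\cdot\,,Z^*,Z^{0,*})$ from that proof is a quadratic in $\pi$ whose leading coefficient $-\frac{\gamma(1-\gamma)}{2}|\widetilde\sigma|^2$ is bounded away from $0$ under \textbf{Assumption 1}; hence it has a unique critical point, precisely \eqref{eq:optimal-action-in-thm-power}, at which $\widetilde f=0$ and $R^{\pi^*}$ is a true martingale. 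Any other admissible response would make $\int_0^\cdot\widetilde f_s\,ds$ strictly monotone of the relevant sign on a set of positive measure, turning $R^{\pi'}$ into a strict supermartingale and rendering $\pi'$ strictly suboptimal.

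The substantive part is uniqueness of the equilibrium. Let $(\mu',\pi')$ be an arbitrary NE in the sense of the definition. The idea is to verify that it meets the hypotheses of Proposition \ref{lem:NE-BSDE-power}$\bm{(1)}$, so that it too corresponds to a solution of \eqref{FBSDE-power-2}; since that solution is unique by Corollary \ref{coro:wellpos-FBSDEs}, it must equal $(\widehat X^*,Y^*,Z^*,Z^{0,*})$, and the one-to-one correspondence then forces $\pi'=\pi^*$ and $\mu'=\mu^*$. Admissibility already gives $\pi'\in H^2_{BMO}$; and since the coefficients are bounded, $\pi'\in H^2_{BMO}$ forces the log-wealth $\widehat X^{\pi'}\in\bigcap_{p>1}S^p$ by the energy inequality (as in Proposition \ref{lem:YZ-o}), whence $\widehat\mu'=\mathbb E[\widehat X^{\pi'}_\cdot|\mathcal F^0_\cdot]\in S^2_{\mathbb F^0}$.

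The main obstacle is the remaining reverse Hölder requirement \eqref{necessary:reverse-power}, namely that $\mathbb E[\frac{1}{\gamma}e^{\gamma(\widehat X^{\pi'}_T-\theta\widehat\mu'_T)}|\mathcal F_\cdot]$ satisfies $R_p$ for some $p>1$. My plan is to exploit that the exponent $\gamma(\widehat X^{\pi'}_T-\theta\widehat\mu'_T)$ splits into a bounded part plus the terminal value of the BMO martingale $\int_0^\cdot\gamma\pi'_s\widetilde\sigma_s^\top\,d\overline W_s$, which is BMO because $\pi'\in H^2_{BMO}$ and $\widetilde\sigma$ is bounded; consequently the value process is, up to bounded factors, comparable to the stochastic exponential of a BMO martingale, and the $R_p$ property follows from Kazamaki's reverse Hölder inequality for BMO martingales (used elsewhere in the paper through \cite{Kazamaki-2006}). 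Once \eqref{necessary:reverse-power} is secured, Proposition \ref{lem:NE-BSDE-power}$\bm{(1)}$ applies to $(\mu',\pi')$ and the uniqueness conclusion is immediate from the injectivity of the correspondence together with Corollary \ref{coro:wellpos-FBSDEs}.
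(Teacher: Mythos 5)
Your overall architecture is exactly the paper's: the proof in the text consists precisely of invoking the one-to-one correspondence of Proposition \ref{lem:NE-BSDE-power} together with the wellposedness of the FBSDE \eqref{FBSDE-power-2} from Corollary \ref{coro:wellpos-FBSDEs}, and your treatment of existence, the formula \eqref{eq:optimal-action-in-thm-power}, the value function via the martingale optimality construction, and the regularity of $(\log\mu^*,\pi^*)$ all follow that route faithfully.

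The one place where you go beyond the paper is also where your argument breaks down: the claim that an \emph{arbitrary} admissible NE $(\mu',\pi')$ automatically satisfies the reverse H\"older condition \eqref{necessary:reverse-power}. The exponent $\gamma(\widehat X^{\pi'}_T-\theta\widehat\mu'_T)$ does \emph{not} split into ``a bounded part plus the terminal value of a BMO martingale.'' Writing $\widehat X^{\pi'}_T=\log x+\int_0^T\pi'_sh_s\,ds-\tfrac12\int_0^T(\pi'_s)^2\widetilde\sigma_s^\top\widetilde\sigma_s\,ds+\int_0^T\pi'_s\widetilde\sigma_s^\top\,d\overline W_s$, the two Lebesgue integrals are only in $\bigcap_{p>1}L^p$ (by the energy inequality), not in $L^\infty$; moreover the quadratic-variation correction $-\tfrac12\int(\pi')^2\widetilde\sigma^\top\widetilde\sigma\,ds$ does not match the $-\tfrac12\int(\gamma\pi')^2\widetilde\sigma^\top\widetilde\sigma\,ds$ needed to identify $e^{\gamma\widehat X^{\pi'}_T}$ with $\mathcal E\big(\gamma\int_0^\cdot\pi'\widetilde\sigma^\top\,d\overline W\big)_T$ up to a bounded factor — the discrepancy $\exp\big(\tfrac{\gamma(\gamma-1)}{2}\int_0^T(\pi')^2\widetilde\sigma^\top\widetilde\sigma\,ds\big)$ is unbounded — and $\widehat\mu'_T$ is an unbounded conditional functional of all of this. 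So Kazamaki's reverse H\"older inequality for $\mathcal E$ of a BMO martingale does not deliver $R_p$ for the conditional-expectation process as sketched. The paper does not attempt this step: condition \eqref{necessary:reverse-power} is a \emph{hypothesis} of Proposition \ref{lem:NE-BSDE-power}$\bm{(1)}$, so the uniqueness asserted is among equilibria in $H^2_{BMO}$ for which \eqref{necessary:reverse-power} holds (this is made explicit in the remark following Corollary \ref{corollary:LZ}). You should either restrict your uniqueness claim to that class, as the paper implicitly does, or supply a genuinely new argument for the $R_p$ estimate; the comparison-to-a-stochastic-exponential heuristic is not one.
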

\begin{proof}
	By Proposition \ref{lem:NE-BSDE-power}, there is a one-to-one correspondence between the FBSDE \eqref{FBSDE-power-2} and the NE of \eqref{model-MFG-power}. Therefore, the existence and uniqueness result of NE can be obtained by Corollary \ref{coro:wellpos-FBSDEs}.
\end{proof}

\begin{remark}\label{rmk:dynamic-vf}
	By \cite[Proposition 15]{HIM-2005}, for each realization of the random variables $(\theta,\gamma,x_{})$, the value function satisfies the dynamic version 
	\begin{equation}\label{dynamic-vf}
	 V^{}(t,X^*_t;\theta,\gamma,x_{},\mu^{*})=\frac{1}{\gamma}(X^*_t)^\gamma e^{Y^{*}_t},
	\end{equation}
where
 $$V^{}(t,X^*_t;\theta,\gamma,x_{},\mu^{*}):=  \max_\pi\mathbb E\left[\left.	\frac{1}{\gamma}\Big(X_T(\mu^{*}_T)^{-\theta}\Big)^\gamma\right|\mathcal F_t	\right]. $$ The dynamic version of value function \eqref{dynamic-vf} will be used in Section \ref{sec:asymptotic}.
\end{remark}

\subsection{The Portfolio Game under $\mathcal A$-Measurable Market Parameters}\label{sec:constant-NE-MFG}
In this section, we consider a special case, in which the market parameters are $\mathcal A$-measurable. In addition to \textbf{Assumption 1}, we make the following assumption:

\textbf{Assumption 2.} 	For each $t\in[0,T]$, the return rate $h_t$ and the volatility $(\sigma_t,\sigma^0_t)$ are measurable w.r.t. $\mathcal A$.

 Under \textbf{Assumption 2}, we will construct a unique solution to the FBSDE \eqref{FBSDE-power-2} in closed form. As a result, we get an NE of the MFG \eqref{model-MFG-power} in closed form, which is proven to be unique in $L^\infty$. Furthermore, when all of the market parameters become time-independent, we revisit the model in \cite{LZ-2019}, and we prove that the constant equilibrium obtained in \cite{LZ-2019}  is the unique one in $L^\infty$, not only in the space of constant equilibria, as shown in \cite{LZ-2019}.

The following proposition shows the closed form solution to the FBSDE \eqref{FBSDE-power-2} under \textbf{Assumption 1} and {\bf Assumption 2}.
\begin{proposition}\label{prop:explicit-Zt}
	Under \textbf{Assumption 1} and \textbf{Assumption 2}, there exists a unique tuple $(\widehat X,Y,Z,Z^0)\in S^2\times S^2\times L^\infty\times L^\infty$ satisfying \eqref{FBSDE-power-2}. In particular, the $Z$-component of the solution has the following closed form expression
	\begin{equation}\label{explicit-Z}
		Z=0,\qquad Z^0=- \frac{\theta\gamma\mathbb E\left[ \frac{\sigma^0h}{(1-\gamma)( \sigma^2+(\sigma^0)^2   )} \right]}{1+\mathbb E\left[ \frac{\theta\gamma (\sigma^0)^2 }{  (1-\gamma)(  \sigma^2+(\sigma^0)^2   )    }  \right]}.
	\end{equation}
\end{proposition}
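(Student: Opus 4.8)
The plan is to produce the solution explicitly by an ansatz and then force uniqueness by collapsing the mean-field coupling to a scalar identity. For the construction I would first posit that the idiosyncratic sensitivity vanishes, $Z\equiv 0$, and that $Z^0$ is a time-independent, $\mathcal{A}$-measurable random variable $z^0$ to be determined. This is consistent because, under \textbf{Assumption 2}, the benchmark system \eqref{MF-FBSDE-0} has $Z^{\bar o}=Z^{0,\bar o}=0$ (the trivial solution $Y^{\bar o}_t=\int_t^T\frac{\gamma h_s^2}{2(1-\gamma)(\sigma_s^2+(\sigma^0_s)^2)}\,ds$ solves it, and Proposition \ref{lem:YZ-o} gives uniqueness). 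With $Z=0$ and $Z^0=z^0$ the fraction $\pi=\frac{h+\sigma^0 z^0}{(1-\gamma)(\sigma^2+(\sigma^0)^2)}$ is $\mathcal{A}$-measurable and time-independent, so integrating the forward equation gives $\widehat X_T=\log x+\int_0^T\pi\bigl(h-\tfrac12(\sigma^2+(\sigma^0)^2)\pi\bigr)\,ds+\pi\sigma\,W_T+\pi\sigma^0\,W^0_T$. Applying $\mathbb{E}[\,\cdot\mid\mathcal{F}^0_T]$, the idiosyncratic integral drops out (since $\pi\sigma$ is $\mathcal{A}$-measurable and $W$ is independent of $\mathcal{A}\vee\mathbb{F}^0$), the drift is $\mathcal{A}$-measurable and hence replaced by its expectation, and only the $W^0$-term survives, yielding $\mathbb{E}[\widehat X_T\mid\mathcal{F}^0_T]=C_0+\mathbb{E}[\pi\sigma^0]\,W^0_T$ with $C_0$ deterministic. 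Thus $Y_T$ is affine in $W^0_T$ with $\mathcal{A}$-measurable coefficients, the BSDE driver is $\mathcal{A}$-measurable and constant in time, and matching the martingale representation of $Y_t+\int_0^t(\text{driver})\,ds$ forces exactly $Z=0$ and $Z^0=-\gamma\theta\,\mathbb{E}[\pi\sigma^0]$, confirming the ansatz.

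This leaves only the self-consistency relation. Writing $c:=\mathbb{E}[\pi\sigma^0]$ and substituting $z^0=-\gamma\theta c$ into $\pi$, the identity $c=\mathbb{E}[\pi\sigma^0]$ becomes the scalar linear equation $c\bigl(1+\mathbb{E}[\frac{\gamma\theta(\sigma^0)^2}{(1-\gamma)(\sigma^2+(\sigma^0)^2)}]\bigr)=\mathbb{E}[\frac{h\sigma^0}{(1-\gamma)(\sigma^2+(\sigma^0)^2)}]$. Since $\gamma$ is bounded and $\gamma<1$, one has $\frac{\gamma}{1-\gamma}>-1$ bounded away from $-1$, while $\frac{\theta(\sigma^0)^2}{\sigma^2+(\sigma^0)^2}\in[0,1]$; hence the bracket is strictly positive, $c$ is uniquely determined, and $z^0=-\gamma\theta c$ is precisely \eqref{explicit-Z}. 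Because $Z,Z^0\in L^\infty$ and all coefficients are bounded, the forward and backward integrands are bounded, so $\widehat X,Y\in\bigcap_{p>1}S^p\subset S^2$, giving a solution in the stated space.

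For uniqueness I would route through the equivalent BSDE: by Corollary \ref{lem:equiv-BSDE} and the Girsanov reduction \eqref{MF-BSDE-P-o}, any solution of \eqref{FBSDE-power-2} in the stated class corresponds to a solution of \eqref{MF-BSDE-P-o}, whose terminal value under \textbf{Assumption 2} is the bounded, $\mathcal{A}$-measurable random variable $-\gamma\theta\,\mathbb{E}[\log x]$; moreover the relations \eqref{tranform-Z} show that the explicit solution above corresponds to $\widetilde Z=\widetilde Z^0=0$. It therefore suffices to prove that \eqref{MF-BSDE-P-o} has a unique bounded solution, which I would do in two steps. First, since the terminal condition carries no idiosyncratic ($W$-)randomness, I compare a given solution with the one solvable in the subfiltration $\mathcal{A}\vee\mathbb{F}^0$ and invoke uniqueness of the quadratic BSDE with bounded terminal to conclude $\widetilde Z=0$. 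Second, with $\widetilde Z=0$ the residual driver is a one-dimensional quadratic, and the difference of two bounded solutions satisfies a linear conditional-mean-field BSDE with zero terminal whose coupling is governed by the scalar $1+\mathbb{E}[\frac{\gamma\theta(\sigma^0)^2}{(1-\gamma)(\sigma^2+(\sigma^0)^2)}]>0$, so the difference must vanish. The main obstacle is exactly this last point: the conditional-mean-field quadratic terms are what obstruct global (all-$\theta$) wellposedness in Theorem \ref{thm:wellpos-MF-BSDE}, and the reason they can be tamed here for every $\theta\in[0,1]$ is that, under \textbf{Assumption 2}, eliminating the idiosyncratic component linearizes the remaining coupling into a scalar relation with a strictly positive, hence invertible, coefficient, rather than merely a small one.
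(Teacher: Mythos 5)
Your construction of the solution follows essentially the same route as the paper: posit that the $Z$-components are $\mathcal A$-measurable, observe that adaptedness of $Y$ forces $Z\equiv 0$ together with an affine self-consistency relation for $Z^0$, and solve the resulting scalar equation; your explicit check that $1+\mathbb E\big[\tfrac{\theta\gamma(\sigma^0)^2}{(1-\gamma)(\sigma^2+(\sigma^0)^2)}\big]>0$ (using $\tfrac{\gamma}{1-\gamma}>-1$ uniformly) is a worthwhile addition that the paper leaves implicit. One caveat: \textbf{Assumption 2} only requires $h_t,\sigma_t,\sigma^0_t$ to be $\mathcal A$-measurable for each $t$, not time-independent, so your ansatz that $Z^0=z^0$ is time-independent and the identity $\int_0^T\pi\sigma\,dW_s=\pi\sigma W_T$ silently restrict you to the setting of Corollary \ref{corollary:LZ}. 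The fix is routine --- keep the time index, obtain $\mathbb E[\widehat X_T\,|\,\mathcal F^0_T]=C_0+\int_0^T\mathbb E[\pi_s\sigma^0_s]\,dW^0_s$, and solve a pointwise-in-$t$ scalar equation --- but as written the construction does not cover the full statement (compare the time-dependent formula \eqref{eq:non-constant-strategy-power}).

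The genuine gap is in your first uniqueness step. You propose to conclude $\widetilde Z=0$ for an arbitrary solution of \eqref{MF-BSDE-P-o} by comparing it with a solution adapted to $\mathcal A\vee\mathbb F^0$ and ``invoking uniqueness of the quadratic BSDE with bounded terminal.'' But \eqref{MF-BSDE-P-o} is a quadratic BSDE of \emph{conditional mean-field} type --- the driver $\mathcal J_2$ contains terms in $\mathbb E[\,\cdot\,\widetilde Z\,|\,\mathcal F^0]$ and $\mathbb E[\,\cdot\,\widetilde Z^0\,|\,\mathcal F^0]$, including quadratic ones --- and, as the paper notes just before Theorem \ref{thm:wellpos-MF-BSDE}, no off-the-shelf uniqueness theorem applies to it; freezing the mean-field terms does not help, since two candidate solutions would freeze different inputs. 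As stated, this step either assumes the uniqueness you are trying to prove or cites a theorem that does not exist for this equation. The paper's argument is both simpler and complete: for two solutions whose $Z$-components lie in $L^\infty$, the difference satisfies a conditional mean-field BSDE whose coefficients are Lipschitz with bounded constants (the quadratic terms linearize against bounded processes), and standard estimates for Lipschitz mean-field BSDEs yield $\Delta Y=\Delta Z=\Delta Z^0=0$ in one stroke. There is no need to first isolate $\widetilde Z=0$, and the positivity of the scalar $1+\mathbb E\big[\tfrac{\theta\gamma(\sigma^0)^2}{(1-\gamma)(\sigma^2+(\sigma^0)^2)}\big]$ is what guarantees the invertibility of the transformation \eqref{tranform-Z} (condition \textbf{AS-2}), not the uniqueness of the linearized BSDE.
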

\begin{proof}
	We first verify that \eqref{explicit-Z}, together with some $\widehat X$ and $Y$, satisfies \eqref{FBSDE-power-2} by construction. Our goal is to construct $(Z,Z^0)$, such that $(Z_t,Z^0_t)$ is $\mathcal A$-measurable for each $t\in[0,T]$. Assuming that $(Z_t,Z^0_t)$ is $\mathcal A$-measurable for each $t\in[0,T]$ and taking the forward dynamics	in \eqref{FBSDE-power-2}  into the backward one in \eqref{FBSDE-power-2}, we have the following
	\begin{equation*}
		\begin{split}
			Y_t=&~-\theta\gamma\mathbb E[\log(x)]-\theta\gamma\int_0^T\mathbb E\left[ \frac{h_s+\sigma_sZ_s+\sigma^0_sZ^0_s}{(1-\gamma)(  \sigma^2_s+(\sigma^0_s)^2   )}	\left\{	h_s-\frac{h_s+\sigma_sZ_s+\sigma^0_sZ^0_s}{2(1-\gamma)}	\right\}	\right]\,ds\\
			&~+ \int_t^T  \left\{  \frac{Z^2_s+(Z^0_s)^2}{2}	+ \frac{	\gamma(  h_s+\sigma_sZ_s+\sigma^0_sZ^0_s   )^2		}{2(1-\gamma)(  \sigma^2_s+(\sigma^0_s)^2 )}		\right\}\,ds\\
			&~-\theta\gamma\int_0^T \mathbb E\left[ \frac{h_s+\sigma_sZ_s+\sigma^0_sZ^0_s	}{(1-\gamma)( \sigma^2_s+(\sigma^0_s)^2 )}\sigma^0_s	\right]\,dW^0_s-\int_t^TZ^0_s\,dW^0_s-\int_t^TZ_s\,dW_s,
		\end{split}
	\end{equation*}
where in the first line we use the assumption that $(Z_t,Z^0_t)$ is $\mathcal A$-measurable, and that $\mathcal A$ and $W^0$ are independent. In order to make $Y$ adapted, we let
\[
	Z\equiv 0,\qquad -\theta\gamma\mathbb E\left[ \frac{h+\sigma Z+\sigma^0Z^0	}{(1-\gamma)( \sigma^2+(\sigma^0)^2 )} \sigma^0	\right]-Z^0=0,
\]
which implies \eqref{explicit-Z}. Note that \eqref{explicit-Z} together with the constructed $Y$ is an adapted solution to \eqref{FBSDE-power-2}. 

Under \textbf{Assumption 2}, it holds that $Z^{\bar o}=Z^{0,\bar o}=0$. To prove the uniqueness result, by Corollary \ref{lem:equiv-BSDE}, it is sufficient to show that the solution to \eqref{MF-BSDE} is unique in $S^2\times L^\infty\times L^\infty$. Let $(\widetilde Y,\widetilde Z,\widetilde Z^{0})\in S^2\times L^\infty\times L^\infty$ and $(\widetilde Y',\widetilde Z',\widetilde Z^{0'})\in S^2\times L^\infty\times L^\infty$ be two solutions to \eqref{MF-BSDE}. Define
$
(\Delta Y, \Delta Z, \Delta Z^0	):=( \widetilde Y-\widetilde Y', \widetilde Z-\widetilde Z', \widetilde Z^0-\widetilde Z^{0'} ).
$
Then, by the dynamics of \eqref{MF-BSDE} and by noting that $(\widetilde Z,\widetilde Z^0)\in L^\infty\times L^\infty$ and $(\widetilde Z',\widetilde Z^{0'})\in L^\infty\times L^\infty$, the tuple $\Delta Y, \Delta Z, \Delta Z^0$ satisfies a conditional mean field BSDE with Lipschitz coefficients. 
Standard arguments imply that the unique solution is $(\Delta Y,\Delta Z,\Delta Z^0)=(0,0,0)$. 
\end{proof}


With the explicit solution in Proposition \ref{prop:explicit-Zt}, we can construct an optimal strategy in closed form for the representative player, which is unique in $L^\infty$. 
\begin{theorem}\label{thm:constant-equilibrium-MFG}
Let \textbf{Assumption 1} and \textbf{Assumption 2} hold.  Then, in $L^\infty$ the unique optimal response of the respresentative player is given by
\begin{equation}\label{eq:non-constant-strategy-power}
	\pi^{*}_t=\frac{h_t}{(1-\gamma)(\sigma^2_t+(\sigma^0_t)^2	)}	-\frac{\theta\gamma\sigma^0_t}{(1-\gamma)(\sigma^2_t+(\sigma^0_t)^2)}\frac{\mathbb E\left[	\frac{h_t\sigma^0_t}{(1-\gamma)(\sigma^2_t+(\sigma^0_t)^2)}	\right]}{1+\mathbb E\left[	\frac{\theta\gamma(\sigma^0_t)^2}{(1-\gamma)(	\sigma^2_t+(\sigma^0_t)^2	)}	\right]},\quad t\in[0,T].
\end{equation}
\end{theorem}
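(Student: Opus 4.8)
The plan is to obtain \eqref{eq:non-constant-strategy-power} by reading the control components of the explicit FBSDE solution of Proposition \ref{prop:explicit-Zt} through the FBSDE-to-NE correspondence of Proposition \ref{lem:NE-BSDE-power}, and then to upgrade the FBSDE-level uniqueness to strategy-level uniqueness in $L^\infty$. The existence half is essentially an algebraic substitution requiring no estimates; the content is entirely in the uniqueness claim, and crucially neither ingredient invokes the smallness of $\theta$, so the conclusion holds for all $\theta\in[0,1]$ and not merely on $[0,\theta^*]$.

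For the formula itself, I would first invoke Proposition \ref{prop:explicit-Zt}: under \textbf{Assumption 1} and \textbf{Assumption 2} the FBSDE \eqref{FBSDE-power-2} has a unique solution $(\widehat X,Y,Z,Z^0)\in S^2\times S^2\times L^\infty\times L^\infty$, with control part $Z\equiv 0$ and $Z^0$ the $\mathcal A$-measurable, time-dependent process of \eqref{explicit-Z}. Since bounded processes lie in $H^2_{BMO}$, this solution meets the hypotheses of Proposition \ref{lem:NE-BSDE-power}$\bm{(2)}$, which therefore produces the NE $(\mu^*,\pi^*)$ with $\mu^*=\exp(\mathbb E[\widehat X|\mathcal F^0])$ and best response $\pi^*=(h+\sigma Z+\sigma^0 Z^0)/((1-\gamma)(\sigma^2+(\sigma^0)^2))$. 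Setting $Z\equiv 0$ collapses this to $\pi^*=(h+\sigma^0 Z^0)/((1-\gamma)(\sigma^2+(\sigma^0)^2))$; substituting the closed form \eqref{explicit-Z} for $Z^0$ and separating the $\theta$-free $h$-term from the $\theta$-proportional $Z^0$-term yields exactly \eqref{eq:non-constant-strategy-power}. Here I would stress that \textbf{Assumption 2} is what makes this work without a fixed-point iteration: because $(h,\sigma,\sigma^0)$ are $\mathcal A$-measurable and $\mathcal A$ is independent of $\mathbb G$, every conditional expectation $\mathbb E[\cdot|\mathcal F^0_t]$ degenerates to an ordinary expectation $\mathbb E[\cdot]$, so the consistency condition reduces to the scalar (in $\omega$) identity solved explicitly in \eqref{explicit-Z}.

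For uniqueness in $L^\infty$, the argument I would give is that the correspondence of Proposition \ref{lem:NE-BSDE-power} restricts to a bijection between best responses lying in $L^\infty$ and solutions of \eqref{FBSDE-power-2} whose control part lies in $L^\infty$. In one direction, the explicit solution is already bounded, and boundedness of the log-wealth and of $(\theta,\gamma)$ makes the reverse Hölder condition \eqref{necessary:reverse-power} automatic, so Proposition \ref{lem:NE-BSDE-power}$\bm{(1)}$ applies and any $L^\infty$ best response, via $\mathring Z=\breve Z-\gamma\pi^*\widetilde\sigma$ together with $\pi^*=(h+\widetilde\sigma^\top\mathring Z)/((1-\gamma)\widetilde\sigma^\top\widetilde\sigma)$ and the bounded market parameters, generates a control part $(Z,Z^0)\in L^\infty\times L^\infty$. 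In the other direction an $L^\infty$ FBSDE solution returns an $L^\infty$ best response. Hence two distinct $L^\infty$ best responses would force two distinct $L^\infty$ solutions of \eqref{FBSDE-power-2}, contradicting the uniqueness in Proposition \ref{prop:explicit-Zt}.

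I expect this transfer of uniqueness through both directions of the correspondence to be the main obstacle. The delicate point is that Proposition \ref{prop:explicit-Zt} secures uniqueness in the class $S^2\times S^2\times L^\infty\times L^\infty$, a class chosen precisely so that no smallness of $\theta$ is needed, whereas the correspondence was stated with control parts in $H^2_{BMO}$; I would therefore make explicit that $L^\infty\subset H^2_{BMO}$ and that a bounded $\pi^*$ yields, through the bounded coefficients, a bounded rather than merely BMO control part, so that the bijection is genuinely injective on $L^\infty$ independently of the fixed-point machinery of Section \ref{sec:wellposedness}. Once this bookkeeping of function spaces is pinned down, both the closed form \eqref{eq:non-constant-strategy-power} and its uniqueness in $L^\infty$ follow.
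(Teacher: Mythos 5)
Your proposal follows the paper's own proof: the formula is obtained by substituting the explicit solution \eqref{explicit-Z} of Proposition \ref{prop:explicit-Zt} (namely $Z\equiv 0$ and the $\mathcal A$-measurable $Z^0$) into the best-response map $\pi^*=(h+\sigma Z+\sigma^0 Z^0)/((1-\gamma)(\sigma^2+(\sigma^0)^2))$ of Proposition \ref{lem:NE-BSDE-power}, and uniqueness is pulled back from the FBSDE uniqueness in Proposition \ref{prop:explicit-Zt}. The existence half and the algebra are correct, and your observation that no smallness of $\theta$ is needed matches the paper.

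However, your elaboration of the uniqueness transfer contains a step that does not go through as written. From an $L^\infty$ best response $\pi'$, the relation $\pi'=(h+\widetilde\sigma^\top\mathring Z)/((1-\gamma)\widetilde\sigma^\top\widetilde\sigma)$ with bounded coefficients only yields that the \emph{combination} $\sigma Z+\sigma^0 Z^0=\widetilde\sigma^\top\mathring Z$ is bounded; it does not give $(Z,Z^0)\in L^\infty\times L^\infty$ componentwise, since $\breve Z$ from the martingale representation is merely in $H^2_{BMO}$ and the single scalar equation cannot separate the two components. But Proposition \ref{prop:explicit-Zt} asserts uniqueness precisely in the class $S^2\times S^2\times L^\infty\times L^\infty$, so your bijection argument does not yet rule out a second $L^\infty$ best response whose associated $(Z,Z^0)$ is BMO but unbounded with a bounded combination $\sigma Z+\sigma^0Z^0$ (the paper is equally silent on this point, and its remark after Corollary \ref{corollary:LZ} shows that uniqueness in the larger class $H^2_{BMO}$ is only claimed under the smallness condition $\|\theta\|\le\theta^*$). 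A second, smaller inaccuracy: the log-wealth $\widehat X^{\pi'}_T$ is an It\^o integral of bounded integrands and is \emph{not} bounded, so \eqref{necessary:reverse-power} is not ``automatic from boundedness of the log-wealth''; it does hold for $\pi'\in H^2_{BMO}$, but via the reverse H\"older inequality for BMO martingales (Lemma \ref{lem:reverse}), which is the justification you should cite instead.
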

\begin{proof}
	Taking \eqref{explicit-Z} into \eqref{eq:optimal-action-in-thm-power}, we get \eqref{eq:non-constant-strategy-power}, which is unique in $L^\infty$ by Proposition \ref{prop:explicit-Zt}.
\end{proof}
As a corollary, when all coefficients become time-independent, we revisit the MFG in \cite{LZ-2019}.
\begin{corollary}[\textit{Lacker \& Zariphopoulou's MFGs revisited}]\label{corollary:LZ}
	Let \textbf{Assumption 1}  and  \textbf{Assumption 2} hold, and the return rate $h$ and the volatility $(\sigma,\sigma^0)$ be time-independent. Then, the unique optimal response is
	\begin{equation}\label{eq:constant-strategy-power}
		\pi^{*}=\frac{h}{(1-\gamma)(\sigma^2+(\sigma^0)^2	)}	-\frac{\theta\gamma\sigma^0}{(1-\gamma)(\sigma^2+(\sigma^0)^2)}\frac{\mathbb E\left[	\frac{h\sigma^0}{(1-\gamma)(\sigma^2+(\sigma^0)^2)}	\right]}{1+\mathbb E\left[	\frac{\theta\gamma(\sigma^0)^2}{(1-\gamma)(	\sigma^2+(\sigma^0)^2	)}	\right]}.
	\end{equation}
The constant equilibrium   \eqref{eq:constant-strategy-power}	is identical to \cite[Theorem 19]{LZ-2019}, which is unique in $L^\infty$. 
\end{corollary}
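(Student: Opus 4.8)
The plan is to read off \eqref{eq:constant-strategy-power} as the time-homogeneous specialization of Theorem \ref{thm:constant-equilibrium-MFG}, and then to reconcile it with \cite[Theorem 19]{LZ-2019}. First I would invoke Theorem \ref{thm:constant-equilibrium-MFG}: under \textbf{Assumption 1} and \textbf{Assumption 2} it already delivers the unique optimal response \eqref{eq:non-constant-strategy-power} in $L^\infty$. Since we now additionally assume that $h$, $\sigma$, and $\sigma^0$ do not depend on $t$, every integrand appearing inside the expectations in \eqref{eq:non-constant-strategy-power} is time-invariant, so all $t$-subscripts may be dropped and the right-hand side collapses to a single $\mathcal A$-measurable constant. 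This is exactly \eqref{eq:constant-strategy-power}, and the uniqueness in $L^\infty$ transfers verbatim from Theorem \ref{thm:constant-equilibrium-MFG}, with no further analysis required.

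It then remains to identify \eqref{eq:constant-strategy-power} with the constant equilibrium of \cite{LZ-2019}. I would do this by writing down the dictionary between our coefficients $(h,\sigma,\sigma^0,\gamma,\theta)$ and the symbols used in \cite{LZ-2019}, recalling that the $\mathcal A$-measurable \emph{random type} of \textbf{Assumption 1} plays the role of their type distribution, so that our $\mathbb E[\cdot]$ encodes precisely the population average over which their fixed point is taken. After substituting this dictionary, one checks by elementary algebra that the two closed-form fractions agree; in particular the normalizing denominator $1+\mathbb E\left[\theta\gamma(\sigma^0)^2/((1-\gamma)(\sigma^2+(\sigma^0)^2))\right]$ matches the denominator produced by the fixed-point equation for the average equilibrium holding in \cite{LZ-2019}.

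I expect no genuine analytic difficulty, since all the substantive work---the wellposedness of \eqref{FBSDE-power-2} and the explicit $Z^0$ of Proposition \ref{prop:explicit-Zt}---is already in hand. The single point demanding care is the bookkeeping in the comparison with \cite{LZ-2019}: discrepancies in sign conventions, in how the competition parameter is normalized, and in whether the equilibrium is expressed through the mean field measure $\mu^*$ or directly through the population average could each introduce a spurious mismatch. Hence the main obstacle is purely notational, namely exhibiting the correct correspondence and confirming that our equilibrium---shown here to be the unique one in all of $L^\infty$---recovers theirs, which \cite{LZ-2019} established as unique only within the class of constant strategies.
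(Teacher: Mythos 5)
Your proposal is correct and follows exactly the route the paper intends: the corollary is the time-homogeneous specialization of Theorem \ref{thm:constant-equilibrium-MFG} (itself obtained by substituting the explicit $Z^0$ of Proposition \ref{prop:explicit-Zt} into \eqref{eq:optimal-action-in-thm-power}), with uniqueness in $L^\infty$ inherited verbatim and the comparison with \cite[Theorem 19]{LZ-2019} reduced to a notational dictionary. No gap.
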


\begin{remark}
If $0\leq\|\theta\|\leq \theta^*$ holds, then \eqref{explicit-Z} is also unique in $H^2_{BMO}\times H^2_{BMO}$ by Corollary \ref{coro:wellpos-FBSDEs}. Therefore, the constant equilibrium \eqref{eq:constant-strategy-power} is the unique one in $H^2_{BMO}$, and there is no nonconstant equilibrium in the subspace of $H^2_{BMO}$ with \eqref{necessary:reverse-power} being true. 
\end{remark}

\section{Asymptotic Expansion in Terms of  $\theta$}\label{sec:asymptotic}
Motivated by the weak interaction assumption, we develop an asymptotic expansion result of the value function and the optimal investment. Specifically, we provide an approximation in any order of the value function and optimal investment for the model with competition in terms of the solution to the benchmark model without competition when the investor is only concerned with her own wealth.  Our idea is to start with the FBSDE characterization of the NE, and translate the expansion of the value function and the optimal investment to that of the solution to the FBSDE. 

Let $V^{\theta}$ and $\pi^\theta$ be the value function and the optimal investment of the MFG \eqref{model-MFG-power}, and $V^{\bar o}$ and $\pi^{\bar o}$ be the value function and the optimal investment of the benchmark utility maximization problem (i.e. when $\theta=0$ in \eqref{model-MFG-power}). 
Then by Theorem \ref{thm:wellposedness-MFG} and Remark \ref{rmk:dynamic-vf}, we have 
\begin{equation*}
	\begin{split}
	&~V^{\theta}_t=\frac{1}{\gamma}e^{ \gamma \widehat X^{\theta}_t+Y^{\theta}_t },\qquad \pi^\theta=\frac{h+\sigma Z^\theta+\sigma^0 Z^{0,\theta}}{(1-\gamma)(\sigma^2+(\sigma^0)^2)};\\
	&~ V^{\bar o}_t=\frac{1}{\gamma}e^{ \gamma X^{\bar o}_t+Y^{\bar o}_t  },\qquad \pi^{\bar o}=\frac{ h+\sigma Z^{\bar o}+\sigma^0 Z^{0,\bar o} }{(1-\gamma)(\sigma^2+(\sigma^0)^2)},
	\end{split}
\end{equation*}
where $(\widehat X^{\theta},Y^{\theta},Z^{\theta},Z^{0,\theta})$ and $(X^{\bar o},Y^{\bar o},Z^{\bar o},Z^{0,\bar o})$ are the unique solutions to \eqref{FBSDE-power-2} with $\theta$ and $\theta=0$, respectively. As a result, it holds that
\begin{equation*}
	\begin{split}
	\log\frac{V^{\theta}}{V_t^{\bar o}}=&~  \gamma(\widehat X^{\theta}_t-X^{\bar o}_t)  +(Y^{\theta}_t-Y^{\bar o}_t),\qquad \pi^\theta-\pi^{\bar o}=\frac{\sigma(Z^\theta-Z^{\bar o})+\sigma^0(Z^{0,\theta}-Z^{0,\bar o}) }{(1-\gamma)(\sigma^2+(\sigma^0)^2)}.
	\end{split}
\end{equation*}
Let $(\overline X^{\theta},  \overline Y^{\theta}, \overline Z^{\theta}, \overline Z^{0,\theta}):=( \widehat X^{\theta}-X^{\bar o},  Y^{\theta}- Y^{\bar o},  Z^{\theta}-Z^{\bar o},  Z^{0,\theta}-Z^{0,\bar o})$ be the unique solution to \eqref{MF-FBSDE-diff}. 

Our goal is to prove that  for any $n\geq 1$ there exist $(\widehat X^{(i)},  Y^{(i)},  Z^{(i)},  Z^{0,(i)})_{i=1,\cdots,n}$, such that it holds that\footnote{In Section \ref{sec:asymptotic}, $\theta^i$ means $\theta$ raised to the power of $i$. It should not be confused with player $i$'s competition parameter $\theta^i$ in the $N$-player game, e.g., \eqref{PU-intro}. }
\begin{equation*}\label{expansion-FBSDE}
	\begin{split}
	\widehat X^\theta-X^{\bar o}=&~\overline X^\theta=\sum_{i=1}^n\theta^i\widehat X^{(i)}+o(\theta^n),\qquad  Y^\theta-Y^{\bar o}=\overline Y^\theta=\sum_{i=1}^n\theta^i  Y^{(i)}+o(\theta^n),\\
	Z^\theta-Z^{\bar o}=&~\overline Z^\theta=\sum_{i=1}^n\theta^i  Z^{(i)}+o(\theta^n),\qquad Z^{0,\theta}-Z^{0,\bar o}=\overline Z^{0,\theta}=\sum_{i=1}^n\theta^i  Z^{0,(i)}+o(\theta^n).
	\end{split}
\end{equation*}

For $\varphi=\widehat X, Y, Z, Z^0$, define
\[
	  \varphi^{\theta,(1)}=\frac{\varphi^\theta-\varphi^{\bar o}}{\theta}=\frac{\overline\varphi^\theta}{\theta},
\]
which implies
\begin{equation}\label{widehat-XYZ-theta-1}
	\left\{\begin{split}
		d\widehat X^{\theta,(1)}_t=&~\bigg\{\frac{	\sigma_t  Z^{\theta,(1)}_t+\sigma^0_t  Z^{0,\theta,(1)}_t	}{(1-\gamma)(\sigma^2_t+(\sigma^0_t)^2	)}	\left(	h_t-	\frac{1}{1-\gamma}(	h_t+\sigma_tZ^{\bar o}_t+\sigma^0_tZ^{0,\bar o}_t	)	\right)	\\
		&~-\frac{\left(	\sigma_t\overline Z_t^\theta+\sigma^0_t\overline Z^{0,\theta}_t	\right)\left(	\sigma_t  Z^{\theta,(1)}_t+\sigma^0_t  Z^{0,\theta,(1)}_t	\right)	}{2(1-\gamma)^2(	\sigma^2_t+(\sigma^0_t)^2	)} \bigg\}\,dt \\
		&~+\frac{\sigma_t  Z^{\theta,(1)}_t+\sigma^0_t  Z^{0,\theta,(1)}_t}{(1-\gamma)(\sigma^2_t+(\sigma^0_t)^2)}  (\sigma_t\,dW_t+\sigma^0_t\,dW^0_t)  \\
		-d  Y^{\theta,(1)}_t=&~\left\{ \frac{( 2Z^{\bar o}_t+\overline Z^\theta_t )   Z^{\theta,(1)}_t}{2} +\frac{(2Z^{0,\bar o}_t+ \overline Z^{0,\theta}_t)  Z^{0,\theta,(1)}_t}{2} \right.\\
		&~\left. +  \frac{	\gamma(	2h_t+2\sigma_tZ^{\bar o}_t+2\sigma^0_tZ^{0,\bar o}_t+\sigma_t\overline Z^\theta_t+\sigma^0_t\overline Z^{0,\theta}_t		)(	\sigma_t  Z^{\theta,(1)}_t+\sigma^0_t  Z^{0,\theta,(1)}_t		)		}{2(1-\gamma)(\sigma^2_t+(\sigma^0_t)^2)}  \right\}\,dt\\
		&~-  Z^{\theta,(1)}_t\,dW_t-  Z^{0,\theta,(1)}_t\,dW^0_t,\\
		\widehat X^{\theta,(1)}_0=&~0,~  Y^{\theta,(1)}_T=-\gamma\mathbb E[\overline X^\theta_T|\mathcal F^0_T]-\gamma\mathbb E[X^{\bar o}_T|\mathcal F^0_T].
	\end{split}\right.
\end{equation}

We now establish the convergence of $(\widehat X^{\theta,(1)},  Y^{\theta,(1)},  Z^{\theta,(1)},  Z^{0,\theta,(1)})$ in a suitable sense as $\theta\rightarrow 0$.  Let the candidate limit of \eqref{widehat-XYZ-theta-1} satisfy
\begin{equation}\label{widehat-XYZ-1}
	\left\{\begin{split}
		d\widehat X^{(1)}_t=&~\frac{	\sigma_t  Z^{(1)}_t+\sigma^0_t  Z^{0,(1)}_t	}{(1-\gamma)(\sigma^2_t+(\sigma^0_t)^2	)}	\left(	h_t-	\frac{1}{1-\gamma}(	h_t+\sigma_tZ^{\bar o}_t+\sigma^0_tZ^{0,\bar o}_t	)	\right)\,dt	\\
		&~+\frac{\sigma_t  Z^{(1)}_t+\sigma^0_t  Z^{0,(1)}_t}{(1-\gamma)(\sigma^2_t+(\sigma^0_t)^2)}  (\sigma_t\,dW_t+\sigma^0_t\,dW^0_t)  \\
		-d  Y^{(1)}_t=&~\left\{  Z^{\bar o}_t   Z^{(1)}_t +Z^{0,\bar o}_t  Z^{0,(1)}_t \right.\\
		&~\left. +  \frac{	\gamma(	h_t+\sigma_tZ^{\bar o}_t+\sigma^0_tZ^{0,\bar o}_t	)(	\sigma_t  Z^{(1)}_t+\sigma^0_t  Z^{0,(1)}_t		)		}{(1-\gamma)(\sigma^2_t+(\sigma^0_t)^2})  \right\}\,dt\\
		&~-  Z^{(1)}_t\,dW_t-  Z^{0,(1)}_t\,dW^0_t,\\
		\widehat X^{(1)}_0=&~0,~  Y^{(1)}_T=-\gamma\mathbb E[X^{\bar o}_T|\mathcal F^0_T].
	\end{split}\right.
\end{equation}
%
The following lemma establishes the wellposedness result of \eqref{widehat-XYZ-1} and the convergence from \eqref{widehat-XYZ-theta-1} to \eqref{widehat-XYZ-1}. In particular, the first-order approximation of the FBSDE is obtained.
\begin{lemma}\label{lem:convergence-XYZtheta-(1)}
	Let \textbf{Assumption 1} hold.
	
	(1) There exists a unique $(\widehat X^{(1)},  Y^{(1)},  Z^{(1)},  Z^{0,(1)})\in \bigcap\limits_{p>1}S^p\times\bigcap\limits_{p>1} S^p\times\bigcap\limits_{p>1}M^p\times\bigcap\limits_{p>1}M^p$ satisfying \eqref{widehat-XYZ-1};
	
	(2) For each $p>1$, the following convergence holds
	\begin{equation}\label{eq:convergence-delta-YZ-theta}
	\lim\limits_{\theta\rightarrow 0} \left(\| \widehat X^{\theta,(1)}-\widehat X^{(1)} \|_{S^p}+\|   Y^{\theta,(1)}-  Y^{(1)} \|_{S^p}		+  \|  Z^{\theta,(1)}-  Z^{(1)}	\|_{M^p}+\|  Z^{0,\theta}-  Z^{0,(1)}\|_{M^p} \right)=0.
	\end{equation}
\end{lemma}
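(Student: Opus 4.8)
The plan is to treat the two parts separately, exploiting that the candidate limit system \eqref{widehat-XYZ-1} is \emph{linear} and \emph{decoupled}. For Part (1), the key observation is that the terminal datum $Y^{(1)}_T=-\gamma\mathbb E[X^{\bar o}_T|\mathcal F^0_T]$ is given data (it involves only the benchmark solution of Proposition \ref{lem:YZ-o}, which lies in $\bigcap_{p>1}S^p$), so the backward equation decouples from the forward one, and its driver $Z^{\bar o}Z^{(1)}+Z^{0,\bar o}Z^{0,(1)}+\frac{\gamma(h+\sigma Z^{\bar o}+\sigma^0Z^{0,\bar o})(\sigma Z^{(1)}+\sigma^0Z^{0,(1)})}{(1-\gamma)(\sigma^2+(\sigma^0)^2)}$ is precisely $\mathcal M\cdot(Z^{(1)},Z^{0,(1)})$ with $\mathcal M$ as in \eqref{def:M}. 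Hence under the measure $\mathbb P^{\bar o}$ of \eqref{P-o} the $Y^{(1)}$-equation has vanishing drift, so $Y^{(1)}_t=\mathbb E^{\mathbb P^{\bar o}}[-\gamma\mathbb E[X^{\bar o}_T|\mathcal F^0_T]\,|\,\mathcal F_t]$ and $(Z^{(1)},Z^{0,(1)})$ is read off from the martingale representation against $W^{\bar o}$. Since $X^{\bar o}_T\in\bigcap_{p>1}L^p$ and the density of $\mathbb P^{\bar o}$ is a BMO exponential, the reverse H\"older inequality (Lemma \ref{lem:reverse}) together with Doob's and the energy inequalities yields $Y^{(1)}\in\bigcap_{p>1}S^p$ and $(Z^{(1)},Z^{0,(1)})\in\bigcap_{p>1}M^p$; integrating the forward equation and applying BDG then gives $\widehat X^{(1)}\in\bigcap_{p>1}S^p$, exactly as in Corollary \ref{coro:wellpos-FBSDEs}. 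Uniqueness is immediate, since the difference of two solutions is a $\mathbb P^{\bar o}$-martingale with zero terminal value.

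For Part (2) the rescaled quadruple is nothing but $\varphi^{\theta,(1)}=\overline\varphi^\theta/\theta$, so its existence and uniqueness are inherited from Corollary \ref{coro:wellpos-FBSDEs}; the real work is to quantify the convergence. First I would rewrite the backward part of \eqref{widehat-XYZ-theta-1} under $\mathbb P^{\bar o}$: substituting $\overline Z^\theta=\theta Z^{\theta,(1)}$, $\overline Z^{0,\theta}=\theta Z^{0,\theta,(1)}$ and absorbing the linear-in-$\mathcal M$ terms into $W^{\bar o}$, the driver collapses exactly to $\theta\,\mathcal J^{\bar o}_1(Z^{\theta,(1)},Z^{0,\theta,(1)})$, i.e.\ the rescaled equation is a quadratic BSDE whose quadratic coefficient carries an explicit factor $\theta$, with terminal value $-\gamma\theta\,\mathbb E[\widehat X^{\theta,(1)}_T|\mathcal F^0_T]-\gamma\mathbb E[X^{\bar o}_T|\mathcal F^0_T]$; the limit \eqref{widehat-XYZ-1} is the same equation at $\theta=0$. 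The crux is a uniform-in-$\theta$ bound $\|Z^{\theta,(1)}\|_{BMO}+\|Z^{0,\theta,(1)}\|_{BMO}\leq C$, equivalently the sharp first-order estimate $\|\overline Z^\theta\|_{BMO}+\|\overline Z^{0,\theta}\|_{BMO}=O(\theta)$, which the crude bound \eqref{estimate:Z-general-QBSDE} does not deliver (it only gives $O(\theta^{1/2})$).

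I would obtain the sharp estimate directly from the transformed BSDE \eqref{MF-BSDE-P-o}. From \eqref{estimate:Y-general-QBSDE} and $\||2c_1\mathcal J_2|^{1/2}\|^2_{BMO,\mathbb P^{\bar o}}\leq\|\theta\|C_1(R)$ (established in the proof of Theorem \ref{thm:wellpos-MF-BSDE}) one already has $\|\widetilde Y\|_\infty=O(\theta)$. Applying It\^o's formula to $|\widetilde Y|^2$ under $\mathbb P^{\bar o}$ and conditioning over an arbitrary stopping time $\tau$ gives
\begin{equation*}
\mathbb E^{\mathbb P^{\bar o}}\Big[\int_\tau^T\big(|\widetilde Z_s|^2+|\widetilde Z^0_s|^2\big)\,ds\,\Big|\,\mathcal F_\tau\Big]\leq\big(\theta\gamma\mathbb E[\log x]\big)^2+2\|\widetilde Y\|_\infty\,\mathbb E^{\mathbb P^{\bar o}}\Big[\int_\tau^T\big(|\mathcal J^{\bar o}_1|+|\mathcal J_2|\big)\,ds\,\Big|\,\mathcal F_\tau\Big].
\end{equation*}
Since $|\mathcal J^{\bar o}_1|\leq c_1(|\widetilde Z|^2+|\widetilde Z^0|^2)$ and $2\|\widetilde Y\|_\infty\|c_1\|<1$ for small $\theta$, the $\mathcal J^{\bar o}_1$-contribution is absorbed into the left-hand side, while the $\mathcal J_2$-term is $O(\theta)\cdot O(\theta)=O(\theta^2)$; hence $\|(\widetilde Z,\widetilde Z^0)\|^2_{BMO,\mathbb P^{\bar o}}=O(\theta^2)$. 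Transferring through \eqref{tranform-Z}, \eqref{eq:barZ0-intermsof-tildeZ0} and \cite[Lemma A.1]{HMP-2019} gives the desired $O(\theta)$ bound under $\mathbb P$, and the energy inequality upgrades it to $\|Z^{\theta,(1)}\|_{M^p}+\|Z^{0,\theta,(1)}\|_{M^p}\leq C$ uniformly in $\theta$ for every $p>1$. This uniform bound is the main obstacle; the rest is a stability argument.

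With the uniform bound in hand, the convergence is routine. The differences $\delta Y:=Y^{\theta,(1)}-Y^{(1)}$, $\delta Z:=Z^{\theta,(1)}-Z^{(1)}$, $\delta Z^0:=Z^{0,\theta,(1)}-Z^{0,(1)}$ satisfy, under $\mathbb P^{\bar o}$,
\begin{equation*}
-d(\delta Y)_t=\theta\,\mathcal J^{\bar o}_1\big(Z^{\theta,(1)}_t,Z^{0,\theta,(1)}_t\big)\,dt-\delta Z_t\,d\widetilde W_t-\delta Z^0_t\,d\widetilde W^0_t,\qquad \delta Y_T=-\gamma\theta\,\mathbb E[\widehat X^{\theta,(1)}_T|\mathcal F^0_T].
\end{equation*}
Writing $\delta Y_t$ as the $\mathbb P^{\bar o}$-conditional expectation of its terminal value plus the driver integral, the terminal term is $O(\theta)$ because $\widehat X^{\theta,(1)}\in S^p$ uniformly, and the driver term is at most $\theta\,\|c_1\|\,\mathbb E^{\mathbb P^{\bar o}}[\int_\cdot^T|Z^{\theta,(1)}|^2+|Z^{0,\theta,(1)}|^2\,ds\,|\,\mathcal F_\cdot]=O(\theta)$ by the uniform bound; reverse H\"older then converts these to $\|\delta Y\|_{S^p}=O(\theta)$. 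An It\^o estimate on $|\delta Y|^2$ gives $\|\delta Z\|_{M^p}+\|\delta Z^0\|_{M^p}=O(\theta)$, and finally $\delta X:=\widehat X^{\theta,(1)}-\widehat X^{(1)}$ is controlled in $S^p$ by $\|\delta Z\|_{M^p}+\|\delta Z^0\|_{M^p}$ plus the genuinely $O(\theta)$ extra drift $\propto\overline Z^\theta(\sigma Z^{\theta,(1)}+\sigma^0 Z^{0,\theta,(1)})$ appearing in \eqref{widehat-XYZ-theta-1}, via BDG and H\"older. Letting $\theta\rightarrow0$ yields \eqref{eq:convergence-delta-YZ-theta}.
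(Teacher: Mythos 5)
Your proposal is correct, and while Part (1) follows essentially the paper's route (the driver of \eqref{widehat-XYZ-1} is exactly $\mathcal M\cdot(Z^{(1)},Z^{0,(1)})$, so under $\mathbb P^{\bar o}$ the backward equation is driverless and the integrability follows from the reverse H\"older inequality; the paper packages this by citing \cite{Briand-C-2008}, you by martingale representation — note only that your appeal to the ``energy inequality'' for $(Z^{(1)},Z^{0,(1)})$ is a slight misnomer, since these are merely $M^p$ and the $M^p$ bound comes from BDG applied to the $L^p$-martingale $Y^{(1)}$), Part (2) takes a genuinely different and, in one respect, stronger route. The paper writes the BSDE for $\Delta Y^\theta=Y^{\theta,(1)}-Y^{(1)}$ and invokes the $L^p$ stability estimate of \cite[Corollary 3.4]{Briand-C-2008}, which produces the bound $C\,(\|Z^{\theta,(1)}\|_{M^{p'}}+\|Z^{0,\theta,(1)}\|_{M^{p'}})(\|\overline Z^\theta\|_{M^{p'}}+\|\overline Z^{0,\theta}\|_{M^{p'}})$ plus a terminal term; since $Z^{\theta,(1)}=\overline Z^\theta/\theta$, sending this to zero tacitly requires $\|\overline Z^\theta\|_{M^{p'}}=o(\theta^{1/2})$, a rate that the cited Corollary \ref{coro:wellpos-FBSDEs} (plain convergence to $0$) does not state. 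You isolate exactly this point and supply the missing quantitative input: the It\^o-on-$|\widetilde Y|^2$ absorption argument under $\mathbb P^{\bar o}$, combined with $\|\widetilde Y\|_\infty=O(\theta)$ and $\|\,|\mathcal J_2|^{1/2}\|^2_{BMO,\mathbb P^{\bar o}}=O(\theta)$, correctly yields $\|(\widetilde Z,\widetilde Z^0)\|^2_{BMO,\mathbb P^{\bar o}}=O(\theta^2)$, hence the uniform BMO bound on $(Z^{\theta,(1)},Z^{0,\theta,(1)})$ after transferring through \eqref{tranform-Z} and \eqref{eq:barZ0-intermsof-tildeZ0}. (Your side remark that \eqref{estimate:Z-general-QBSDE} only gives $O(\theta^{1/2})$ is accurate in the relevant case where $c_1=\frac{1}{1-\gamma}$ is genuinely random, since the first term of that estimate is then $(\|c_1\|/\underline{c_1}-1)\|\theta\mathbf x\|/\underline{c_1}+O(\theta^2)$; if $c_1$ were deterministic it would already give $O(\theta)$.) Your identification of the residual driver of the difference equation as exactly $\theta\,\mathcal J^{\bar o}_1(Z^{\theta,(1)},Z^{0,\theta,(1)})$ after absorbing the $\mathcal M$-linear terms into $W^{\bar o}$ checks out, and the concluding stability estimates for $\delta Y,\delta Z,\delta Z^0,\delta X$ are routine given the uniform bound. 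In short: the paper's approach is shorter because it outsources the stability to \cite{Briand-C-2008} but leans on an unstated rate for $\|\overline Z^\theta\|_{M^{p}}$; your approach is more self-contained, makes the first-order rate $\|\overline Z^\theta\|_{BMO}=O(\theta)$ explicit, and would in fact be a useful supplement to the paper's argument.
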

\begin{proof}
(1).	Recall $\mathbb P^{\bar o}$ defined in \eqref{P-o}. Then the backward dynamics in \eqref{widehat-XYZ-1} can be rewritten as
	\[
		d  Y^{(1)}_t=  Z^{(1)}_t\,d\widetilde W_t+  Z^{0,(1)}_t\,d\widetilde W^0_t,\qquad   Y^{(1)}_T=-\gamma\mathbb E[ X^{\bar o}_T|\mathcal F^0_T ],
	\]
	where $W^{\bar o}=(\widetilde W,\widetilde W^0)^\top$ is defined in \eqref{widetilde-W}. In order to apply \cite[Theorem 3.5]{Briand-C-2008}, it suffices to prove that for each $p>1$
	\[
		\mathbb E^{\mathbb P^{\bar o}}[ ( \mathbb E[ X^{\bar o}_T|\mathcal F^0_T ]   )^p ]<\infty.
	\]
	In fact, by the notation $\mathcal M$ in \eqref{def:M} and $p_{\mathcal M}$ in Appendix \ref{app:reverse}, for any $p>1$ and any $q\in(1, p_{\mathcal M})$
	\begin{equation}\label{app-reverse-1}
		\begin{split}
		&~\mathbb E^{\mathbb P^{\bar o}}[ ( \mathbb E[ |X^{\bar o}_T| |\mathcal F^0_T ]   )^p ]=\mathbb E\left[ \mathcal E_T(\mathcal M)( \mathbb E[ |X^{\bar o}_T| |\mathcal F^0_T ]   )^p \right	]\\
	\leq&~  \left(\mathbb E[ |\mathcal E_T(\mathcal M)|^{q}  ]\right)^{\frac{1}{q}} \left(\mathbb E\left[  \left( \mathbb E[ |X^{\bar o}_T|  |\mathcal F^0_T ]  \right)^{pq^*}  \right]\right)^{\frac{1}{q^*}}\qquad (\textrm{by H\"older's inequality,  }~\frac{1}{q}+\frac{1}{q^*}=1)\\
	\leq &~K^{\frac{1}{q}}(q,\|\mathcal M\|_{BMO})	\left(\mathbb E\left[  \left( \mathbb E[ |X^{\bar o}_T|  |\mathcal F^0_T ]  \right)^{pq^*}  \right]\right)^{\frac{1}{q^*}}				\qquad (\textrm{	 by Lemma \ref{lem:reverse}		})\\
	<&~\infty\qquad\qquad (	\textrm{by Proposition }\ref{lem:YZ-o}	). 
	\end{split}
	\end{equation}
Therefore, by \cite[Theorem 3.5]{Briand-C-2008}, there exists a unique $(   Y^{(1)},   Z^{(1)},  Z^{0,(1)})\in S^2_{\mathbb P^{\bar o}}\times\bigcap\limits_{p>1}M^p_{\mathbb P^{\bar o}}\times\bigcap\limits_{p>1}M^p_{\mathbb P^{\bar o}}$, which implies that $(   Y^{(1)},   Z^{(1)},  Z^{0,(1)})\in S^2_{}\times\bigcap\limits_{p>1}M^{p}\times\bigcap\limits_{p>1}M^{p}$. Indeed, by the definition of $\mathbb P^{\bar o}$ in \eqref{P-o}, it holds that
\[
	\frac{d\mathbb P}{d\mathbb P^{\bar o}}=\mathcal E\left(  -\int_0^\cdot \mathcal M_s\,dW^{\bar o}_s \right).
\]
Thus, by the same argument as \eqref{app-reverse-1}, we have for any $p>1$
\begin{equation*}
	\begin{split}
	&~\mathbb E\left[	\left(	\int_0^T (  Z^{(1)}_s)^2\,ds	\right)^{\frac{p}{2}}		\right]	  =\mathbb E^{\mathbb P^{\bar o}}\left[\mathcal E_T\left(	 -\int_0^\cdot \mathcal M_s\,dW^{\bar o}_s	\right)	\left(	\int_0^T (  Z^{(1)}_s)^2\,ds	\right)^{\frac{p}{2}}			\right]<\infty.
	\end{split}
\end{equation*}
The same result holds for $  Z^{0,(1)}$. By the dynamics of $  Y^{(1)}$ and standard estimate, we have $  Y^{(1)}\in\bigcap\limits_{p>1} S^p$.

(2). Let $\Delta \varphi^\theta=  \varphi^{\theta,(1)}- \varphi^{(1)}$ for $\varphi=\widehat X,Y,Z,Z^{0}$. Then, $(\Delta Y^\theta,\Delta Z^\theta, \Delta Z^{0,\theta})$ satisfies
\begin{equation}
	\left\{\begin{split}
		-d\Delta Y^\theta_t=&~\Bigg\{	\frac{1}{2}\overline Z^\theta_t  Z^{\theta,(1)}_t+\frac{1}{2}\overline Z^{0,\theta}_t  Z^{0,\theta,(1)}_t+		\frac{  \gamma\left( \sigma_t  Z^{\theta,(1)}_t+\sigma^0_t  Z^{0,\theta,(1)}_t \right )\left(	\sigma_t\overline Z^{\theta}_t+\sigma^0_t\overline Z^{0,\theta}_t	\right) }{2(1-\gamma)(\sigma^2_t+(\sigma^0_t)^2)}\\
		&~+\left(	Z^{\bar o}_t+\frac{	\gamma\sigma_t( h_t+\sigma_t Z^{\bar o}_t+\sigma^0_t Z^{0,\bar o}_t )		}{(1-\gamma)( \sigma^2_t+(\sigma^0_t)^2 )			}					\right)\Delta Z^\theta_t \\
		&~+\left( Z^{0,\bar o}+\frac{\gamma\sigma^0_t(h_t+\sigma_t Z^{\bar o}_t+\sigma^0_t Z^{0,\bar o}_t)}{(1-\gamma)(\sigma^2_t+(\sigma^0_t)^2)}			\right)\Delta Z^{0,\theta}_t\Bigg\}	\,dt
		-\Delta Z^\theta_t\,dW_t-\Delta Z^{0,\theta}\,dW^0_t,\\
		\Delta Y^\theta_T=&~-\gamma\mathbb E[ \overline X^\theta_T|\mathcal F^0_T  ]. 
	\end{split}\right.
\end{equation}
The conditions in \cite[Corollary 3.4]{Briand-C-2008} are satisfied. Indeed, by Corollary \ref{coro:wellpos-FBSDEs} and the energy inequality (see \cite[P. 26]{Kazamaki-2006}), it holds that for each $p>1$, $\mathbb E\left[\left(	\mathbb E[ \overline X^\theta_T|\mathcal F^0_T  ]		\right)^p	\right]<\infty$. The same reason, together with the result in (1), implies that for each $p>1$ the non-homogenous term in the driver of $\Delta Y^\theta$ is in $M^p$.
 Thus, \cite[Assumption A.3]{Briand-C-2008} is satisfied. \cite[Assumption A.2]{Briand-C-2008} holds due to Corollary \ref{coro:wellpos-FBSDEs} and Proposition \ref{lem:YZ-o}. 
	By \cite[Corollary 3.4]{Briand-C-2008}, it holds that for each $p>1$ there exists $p'>1$ and $p''>1$, such that
	\begin{equation*}
		\begin{split}
		&~	\| \Delta Y^\theta  \|_{S^p}+ \| \Delta Z^\theta \|_{M^p}	+ \|\Delta Z^{0,\theta}\|_{M^p}		\\
		\leq&~C\left\{ \mathbb E\left[ | \overline X^\theta_T |^{p'} \right]	+	( \|  Z^{\theta,(1)}	\|_{M^{p'}}+\|	  Z^{0,\theta,(1)}\|_{M^{p'}}	)(  \|	\overline Z^\theta	\|_{M^{p'}}+ \|	\overline Z^{0,\theta}	\|_{M^{p'}}	)	\right\}\\
		&~\times \Bigg\{	1+ \left\|  Z^{0,\bar o}+ \frac{ \gamma\sigma^0( h+\sigma Z^{\bar o}+\sigma^0 Z^{0,\bar o} ) }{(1-\gamma)(\sigma^2+(\sigma^0)^2)}	\right\|_{M^{p''}}+ \left\| Z^{\bar o} + \frac{\gamma\sigma(h+\sigma Z^{\bar o}+\sigma^0 Z^{0,\bar o})}{(1-\gamma)(\sigma^2+(\sigma^0)^2)}	\right\|_{M^{p''}}		\Bigg\}\\
		\rightarrow&~ 0,\qquad(\textrm{by Corollary }\ref{coro:wellpos-FBSDEs})
	\end{split}	
\end{equation*}
where $C$ does not depend on $\theta$.
\end{proof}

In order to establish higher order approximation, we define 
\begin{equation*}
\left\{	\begin{split}
	K^{(\theta,n-1),(\theta,1)}:=&~ \frac{\left(	\sigma  Z^{\theta,(n-1)}+\sigma^0  Z^{0,\theta,(n-1)} \right)\left(	\sigma  Z^{\theta,(1)}+\sigma^0  Z^{0,\theta,(1)}	\right)	}{2(	\sigma^2+(\sigma^0)^2	)},\\ K^{(i),(\theta,n-i)}:=&~\frac{  \left(	\sigma  Z^{(i)}+\sigma^0  Z^{0,(i)}	\right) \left(\sigma  Z^{\theta,(n-i)}+\sigma^0  Z^{0,\theta,(n-i)}	\right)  }{2( \sigma^2+(\sigma^0)^2 )},\\ K^{(i),(n-i)}:=&~\frac{  \left(	\sigma  Z^{(i)}+\sigma^0  Z^{0,(i)}	\right) \left(\sigma  Z^{(n-i)}+\sigma^0  Z^{0,(n-i)}	\right)  }{2( \sigma^2+(\sigma^0)^2 )}.
	\end{split}\right.
\end{equation*}
Based on the above notation, we now introduce the candidate $(\widehat X^{(n)},  Y^{(n)},  Z^{(n)}, Z^{0,(n)})$, $n\ge 2$. Intuitively, $(\widehat X^{(n+1)},  Y^{(n+1)},  Z^{(n+1)}, Z^{0,(n+1)})$ is the limit of  $$\left( \frac{\widehat X^{\theta,(n)}-\widehat X^{(n)}}{\theta}, \frac{  Y^{\theta,(n)}-  Y^{(n)}}{\theta}, \frac{  Z^{\theta,(n)}-  Z^{(n)}}{\theta}, \frac{  Z^{0,\theta,(n)}-  Z^{0,(n)}}{\theta}		\right):=(\widehat X^{\theta,(n+1)},  Y^{\theta,(n+1)},  Z^{\theta,(n+1)}, Z^{0,\theta,(n+1)}).$$
Thus, for each $n\geq 2$, we introduce two (decoupled) FBSDE systems iteratively
\begin{equation}\label{FBSDE-n}
	\left\{\begin{split}
		d\widehat X^{\theta,(n)}_t=&~\left\{	\frac{	\sigma_t  Z^{\theta,(n)}_t+\sigma^0_t  Z^{0,\theta,(n)}_t	}{(1-\gamma)( \sigma^2_t+(\sigma^0_t)^2 )}	\left\{	h_t-\frac{1}{1-\gamma}(	h_t+\sigma_tZ^{\bar o}_t+\sigma^0_tZ^{0,\bar o}_t	)	\right\}-\frac{K^{(\theta,n-1),(\theta,1)}_t}{(1-\gamma)^2}\right.	\\
		&~\left.-\sum_{j=1}^{n-2}  \frac{1}{(1-\gamma)^2} K^{(j),(\theta,n-j)}_t \right\}\,dt
		+\frac{\sigma_t  Z^{\theta,(n)}_t +\sigma^0_t  Z^{0,\theta,(n)}_t }{(1-\gamma)( \sigma^2_t+(\sigma^0_t)^2  )} (\sigma_t\,dW_t+\sigma^0_t\,dW^0_t)  \\
		-d  Y^{\theta,(n)}_t=&~\left\{\frac{ \gamma\left( \sigma_t  Z^{\theta,(n)}_t+\sigma^0_t  Z^{0,\theta,(n)}_t \right) }{(1-\gamma)(\sigma^2_t+(\sigma^0_t)^2)}\left(  \sigma_tZ^{\bar o}_t+\sigma^0_tZ^{0,\bar o}_t+h_t		\right)+ \frac{\gamma}{1-\gamma}K^{ (\theta,n-1),(\theta,1) }_t  \right.\\
		&~+\sum_{j=1}^{n-2}\frac{\gamma}{1-\gamma}K^{(j),(\theta,n-j) }_t 			
		+ Z^{\bar o}_t  Z^{\theta,(n)}_t+ Z^{0,\bar o}_t  Z^{0,\theta,(n)}_t \\ &~+\frac{1}{2}  Z^{\theta,(n-1)}_t   Z^{\theta,(1)}_t+\frac{1}{2}  Z^{0,\theta,(n-1)}_t   Z^{0,\theta,(1)}_t\\
		&~\left.+\frac{1}{2}\sum_{j=1}^{n-2}  Z^{(j)}_t  Z^{\theta,(n-j)}_t+\frac{1}{2}\sum_{j=1}^{n-2}  Z^{0,(j)}_t  Z^{0,\theta,(n-j)}_t\right\}\,dt-  Z^{\theta,(n)}_t\,dW_t-  Z^{0,\theta,(n)}_t\,dW^0_t\\
		\widehat X^{\theta,(n)}_0=&~0,~  Y^{\theta,(n)}_T=-\gamma\mathbb E[ \widehat X^{\theta,(n-1)}_T|\mathcal F^0_T	]
	\end{split}\right.
\end{equation}

%
and
\begin{equation}\label{limit-FBSDE-n}
	\left\{\begin{split}
		d\widehat X^{(n)}_t=&~\Bigg(	\frac{	\sigma_t  Z^{(n)}_t+\sigma^0_t  Z^{0,(n)}_t	}{(1-\gamma)( \sigma^2_t+(\sigma^0_t)^2 )}	\left\{	h_t-\frac{1}{1-\gamma}(	h_t+\sigma_tZ^{\bar o}_t+\sigma^0_tZ^{0,\bar o}_t	)	\right\}\\ &~-\sum_{j=1}^{n-1}\frac{1}{(1-\gamma)^2}K_t^{(j),(n-j)}\Bigg)\,dt	
		+\frac{ \left( \sigma_t  Z^{(n)}_t+\sigma^0_t  Z^{0,(n)}_t \right) }{(1-\gamma)(\sigma^2_t+(\sigma^0_t)^2)}\Big( \sigma_t\,dW_t+\sigma^0_t\,dW^0_t		\Big)   \\
		-d  Y^{(n)}_t=&~\left\{\frac{ \gamma\left( \sigma_t  Z^{(n)}_t+\sigma^0_t  Z^{0,(n)}_t \right) }{(1-\gamma)(\sigma^2_t+(\sigma^0_t)^2)}\left(  \sigma_tZ^{\bar o}_t+\sigma^0_tZ^{0,\bar o}_t+h_t		\right) +\sum_{j=1}^{n-1}\frac{\gamma}{1-\gamma}K_t^{(j),(n-j)} \right.\\
		&~\left.
		+ Z^{\bar o}_t  Z^{(n)}_t+ Z^{0,\bar o}_t  Z^{0,(n)}_t  +\frac{1}{2}\sum_{j=1}^{n-1}  Z^{(j)}_t  Z^{(n-j)}_t+\frac{1}{2}\sum_{j=1}^{n-1}  Z^{0,(j)}_t  Z^{0,(n-j)}_t\right\}\,dt \\
		&~-  Z^{(n)}_t\,dW_t-  Z^{0,(n)}_t\,dW^0_t\\
		\widehat X^{(n)}_0=&~0,~  Y^{(n)}_T=-\gamma\mathbb E[ \widehat X^{(n-1)}_T|\mathcal F^0_T	].
	\end{split}\right.
\end{equation}
In the FBSDEs \eqref{FBSDE-n} and \eqref{limit-FBSDE-n}, we use the convention that the sum vanishes whenever the lower bound of the index is larger than the upper bound of the index.

The next theorem is our main result in this section, which establishes the expansion result of the FBSDE, and thus the expansion result of the value function and the optimal investment.  
%

\begin{theorem}
Let \textbf{Assumption 1} hold. We have for any $n\geq 1$
	\begin{equation}\label{expansion-XY}
		\begin{split}
		\widehat X^\theta=&~X^{\bar o}+\sum_{i=1}^n\theta^i\widehat X^{(i)}+o(\theta^n)	,\qquad 		Y^\theta=Y^{\bar o}+\sum_{i=1}^n\theta^i  Y^{(i)}+o(\theta^n),	\\
		Z^\theta=&~Z^{\bar o}+\sum_{i=1}^n\theta^i  Z^{(i)}+o(\theta^n),\qquad Z^{0,\theta}=Z^{0,\bar o}+\sum_{i=1}^n\theta^i  Z^{0,(i)}+o(\theta^n),
		\end{split}
	\end{equation}
	where $(\widehat X^{(n)},  Y^{(n)},  Z^{(n)},  Z^{0,(n)})$ is the unique solution to \eqref{limit-FBSDE-n}. 
 In particular, 
	\[
		\log\frac{V^{\theta}}{V^{\bar o}}=\sum_{i=1}^n \theta^i\left(	\gamma\widehat X^{(i)}  +   Y^{(i)}	\right) + o(\theta^n),\qquad \pi^\theta-\pi^{\bar o}= \frac{\sigma\sum_{i=1}^n\theta^i  Z^{(i)}+\sigma^0\sum_{i=1}^n\theta^i  Z^{0,(i)}}{(1-\gamma)(\sigma^2+(\sigma^0)^2)}+o(\theta^n).
	\]
\end{theorem}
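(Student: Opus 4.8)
The plan is to argue by induction on $n$, reducing the whole statement to the first-order case of Lemma~\ref{lem:convergence-XYZtheta-(1)} together with a stability argument for linear BSDEs. The key bookkeeping is the following: if for $\varphi=\widehat X,Y,Z,Z^0$ we set, iteratively, $\varphi^{\theta,(k)}=(\varphi^{\theta,(k-1)}-\varphi^{(k-1)})/\theta$ with the conventions $\varphi^{\theta,(0)}:=\varphi^\theta$ and $\varphi^{(0)}:=\varphi^{\bar o}$, then unwinding the recursion yields
\[
\varphi^\theta=\varphi^{\bar o}+\sum_{i=1}^{n-1}\theta^i\varphi^{(i)}+\theta^n\varphi^{\theta,(n)}.
\]
Consequently, the convergence $\varphi^{\theta,(n)}\to\varphi^{(n)}$ as $\theta\to0$ in the $S^p$ (resp. $M^p$) norm is equivalent to the expansion $\varphi^\theta=\varphi^{\bar o}+\sum_{i=1}^n\theta^i\varphi^{(i)}+o(\theta^n)$ asserted in \eqref{expansion-XY}. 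It therefore suffices to prove, for every $n\geq1$, that \eqref{limit-FBSDE-n} is well posed and that $\varphi^{\theta,(n)}\to\varphi^{(n)}$ in $\bigcap_{p>1}S^p$ and $\bigcap_{p>1}M^p$. The base case $n=1$ is precisely Lemma~\ref{lem:convergence-XYZtheta-(1)}, as \eqref{widehat-XYZ-1} coincides with \eqref{limit-FBSDE-n} for empty sums.

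For the inductive step, assume the conclusion holds up to order $n-1$; in particular $\varphi^{\theta,(k)}\to\varphi^{(k)}$ for $k\leq n-1$, which also furnishes uniform-in-$\theta$ $M^p$ bounds on $Z^{\theta,(k)},Z^{0,\theta,(k)}$ for small $\theta$. First I would verify that the normalized differences $(\widehat X^{\theta,(n)},Y^{\theta,(n)},Z^{\theta,(n)},Z^{0,\theta,(n)})$ solve exactly \eqref{FBSDE-n}: this is a direct computation of subtracting the level-$(n-1)$ systems \eqref{FBSDE-n} and \eqref{limit-FBSDE-n} and dividing by $\theta$, where each quadratic cross term reorganizes into a $K$-coefficient via the splitting $a^\theta b^\theta-ab=(a^\theta-a)b^\theta+a(b^\theta-b)$. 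Next I would establish well posedness of the limit system \eqref{limit-FBSDE-n}. Since its backward driver is affine in $(Z^{(n)},Z^{0,(n)})$, the change of measure to $\mathbb{P}^{\bar o}$ of \eqref{P-o} absorbs the linear part into the Brownian integral, exactly as in the proof of Lemma~\ref{lem:convergence-XYZtheta-(1)}(1), reducing \eqref{limit-FBSDE-n} to a linear BSDE under $\mathbb{P}^{\bar o}$ with terminal datum $-\gamma\mathbb{E}[\widehat X^{(n-1)}_T\,|\,\mathcal F^0_T]$ and an inhomogeneity built from the $K^{(j),(n-j)}$ and the products $Z^{(j)}Z^{(n-j)}$. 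By the inductive hypothesis these lower-order $Z$-processes lie in $\bigcap_{p>1}M^p$, so, after invoking the reverse H\"older inequality (Lemma~\ref{lem:reverse}) against the BMO density $\mathcal E(\mathcal M)$ of \eqref{def:M}, the inhomogeneity lies in $\bigcap_{p>1}M^p_{\mathbb{P}^{\bar o}}$ and \cite[Theorem 3.5]{Briand-C-2008} yields a unique $(Y^{(n)},Z^{(n)},Z^{0,(n)})\in S^2\times\bigcap_{p>1}M^p\times\bigcap_{p>1}M^p$; estimating the forward equation as in Proposition~\ref{lem:YZ-o} then places $\widehat X^{(n)},Y^{(n)}\in\bigcap_{p>1}S^p$.

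The convergence $\varphi^{\theta,(n)}\to\varphi^{(n)}$ is then obtained as in part (2) of Lemma~\ref{lem:convergence-XYZtheta-(1)}. Writing $\Delta\varphi^\theta=\varphi^{\theta,(n)}-\varphi^{(n)}$, the triple $(\Delta Y^\theta,\Delta Z^\theta,\Delta Z^{0,\theta})$ solves a linear BSDE whose coefficients multiplying $(\Delta Z^\theta,\Delta Z^{0,\theta})$ are the same $\theta$-independent BMO processes that appear at first order, namely $Z^{\bar o}+\tfrac{\gamma\sigma(h+\sigma Z^{\bar o}+\sigma^0 Z^{0,\bar o})}{(1-\gamma)(\sigma^2+(\sigma^0)^2)}$ and its $\sigma^0$-analogue, whose inhomogeneity is a finite sum of differences of $K$-terms and of products $Z^{\theta,(k)}Z^{\theta,(n-k)}-Z^{(k)}Z^{(n-k)}$, and whose terminal value is $-\gamma\mathbb{E}[(\widehat X^{\theta,(n-1)}_T-\widehat X^{(n-1)}_T)\,|\,\mathcal F^0_T]$. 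Each such difference tends to $0$ in every $M^p$: the lower-order factors converge by the inductive hypothesis, while the genuinely $\theta$-dependent factors ($\overline Z^\theta,\overline Z^{0,\theta}$ and the $Z^{\theta,(k)}$) are controlled through Corollary~\ref{coro:wellpos-FBSDEs}, which forces $\|\overline Z^\theta\|_{M^p}+\|\overline Z^{0,\theta}\|_{M^p}\to0$, and through the uniform bounds recorded above; the terminal value converges by the inductive convergence at order $n-1$. Applying the stability estimate \cite[Corollary 3.4]{Briand-C-2008}, with the energy inequality \cite[P.26]{Kazamaki-2006} and Lemma~\ref{lem:reverse} used to pass between $\mathbb{P}$ and $\mathbb{P}^{\bar o}$ and between the BMO and $M^p$ scales, gives $\|\Delta Y^\theta\|_{S^p}+\|\Delta Z^\theta\|_{M^p}+\|\Delta Z^{0,\theta}\|_{M^p}\to0$, and the forward equation yields $\|\Delta\widehat X^\theta\|_{S^p}\to0$. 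This closes the induction and proves \eqref{expansion-XY}.

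I expect the main obstacle to be the integrability bookkeeping rather than a single hard estimate. Because the higher-order inhomogeneities are quadratic in the $Z$-variables, keeping every factor in $\bigcap_{p>1}M^p$ uniformly in $\theta$ requires repeated use of the energy and reverse H\"older inequalities, and one must check that each $\theta$-dependent factor carries a genuine power of $\theta$ (equivalently, converges to $0$) so that no term of order $\theta^{-1}$ survives the division defining $\varphi^{\theta,(n)}$. Once \eqref{expansion-XY} is established, the expansions of $\log(V^\theta/V^{\bar o})$ and of $\pi^\theta-\pi^{\bar o}$ follow by substituting into the algebraic identities $\log\frac{V^\theta}{V^{\bar o}}=\gamma(\widehat X^\theta-X^{\bar o})+(Y^\theta-Y^{\bar o})$ and $\pi^\theta-\pi^{\bar o}=\frac{\sigma(Z^\theta-Z^{\bar o})+\sigma^0(Z^{0,\theta}-Z^{0,\bar o})}{(1-\gamma)(\sigma^2+(\sigma^0)^2)}$ recorded before the statement, using that $\gamma,\sigma,\sigma^0$ are bounded.
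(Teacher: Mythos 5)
Your proposal is correct and follows essentially the same route as the paper: induction on $n$ with the base case given by Lemma \ref{lem:convergence-XYZtheta-(1)}, the normalized differences $\varphi^{\theta,(n+1)}=(\varphi^{\theta,(n)}-\varphi^{(n)})/\theta$ shown to solve \eqref{FBSDE-n}, wellposedness of \eqref{limit-FBSDE-n} via the change of measure to $\mathbb{P}^{\bar o}$ and \cite[Theorem 3.5]{Briand-C-2008}, and convergence via \cite[Corollary 3.4]{Briand-C-2008} together with the energy and reverse H\"older inequalities. The only cosmetic difference is that you make explicit up front the equivalence between the convergence $\varphi^{\theta,(n)}\to\varphi^{(n)}$ and the expansion \eqref{expansion-XY}, which the paper derives at the end of the induction step; the substance is identical.
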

\begin{proof}
	The proof is done by induction. Lemma \ref{lem:convergence-XYZtheta-(1)} implies that
	\[
	\widehat X^\theta=X^{\bar o}+\theta\widehat X^{(1)}+o(\theta),\quad Y^\theta=Y^{\bar o}+\theta  Y^{(1)}+o(\theta),\quad Z^\theta=Z^{\bar o}+\theta Z^{(1)}+o(\theta),\quad Z^{0,\theta}=Z^{0,\bar o}+\theta  Z^{0,(1)}+o(\theta),
	\]
	which verifies \eqref{expansion-XY} for $n=1$. Furthermore, by definition, it holds that $  \varphi^{\theta,(1)}=\frac{\varphi^\theta-\varphi^{
	\bar o}}{\theta}$, $\varphi=\widehat X,Y,Z,Z^0$.
	
Now assume that the result holds for $n\geq 2$, i.e., there exists a unique tuple $(\widehat X^{\theta,(n)},  Y^{\theta,(n)},  Z^{\theta,(n)},  Z^{0,\theta,(n)})\in\bigcap\limits_{p>1}S^p\times\bigcap\limits_{p>1}S^p\times\bigcap\limits_{p>1}M^p\times\bigcap\limits_{p>1}M^p$ and a unique tuple $(\widehat X^{(n)},  Y^{(n)},  Z^{(n)},  Z^{0,(n)})\in\bigcap\limits_{p>1}S^p\times\bigcap\limits_{p>1}S^p\times\bigcap\limits_{p>1}M^p\times\bigcap\limits_{p>1}M^p$ satisfying \eqref{FBSDE-n} and \eqref{limit-FBSDE-n}, respectively, and for each $p>1$
\begin{equation}\label{eq:convergence-XYZ-theta-n}
	\lim_{\theta\rightarrow 0}\left(\|\widehat X^{\theta,(n)} - \widehat X^{(n)} \|_{S^p}+\|  Y^{\theta,(n)} -   Y^{(n)} \|_{S^p}+\|  Z^{\theta,(n)} -   Z^{(n)} \|_{M^p}+\|  Z^{0,\theta,(n)} -   Z^{0,(n)} \|_{M^p}\right)=0,
\end{equation}
and
\begin{equation}\label{eq:induction-ass}
	  \varphi^{\theta,(n)}=\frac{\varphi^\theta-\varphi^{\bar o}-\sum_{i=1}^{n-1}\theta^i  \varphi^{(i)}}{\theta^n},\qquad \varphi=\widehat X,Y,Z,Z^0.
\end{equation}

It remains to be shown that the above results also hold for $n+1$. 
Define
\[
		  \varphi^{\theta,(n+1)}=\frac{  \varphi^{\theta,(n)}-  \varphi^{(n)}}{\theta},\quad \varphi=\widehat X,Y,Z,Z^0,
\]
which implies by \eqref{eq:induction-ass} 
\[
		  \varphi^{\theta,(n+1)}=\frac{\varphi^\theta-\varphi^{\bar o}-\sum_{i=1}^{n}\theta^i   \varphi^{(i)}}{\theta^{n+1}},\qquad \varphi=\widehat X,Y,Z,Z^0.
\] 
It can also be verified directly that $(\widehat X^{\theta,(n+1)},    Y^{\theta,(n+1)},  Z^{\theta,(n+1)},  Z^{0,\theta,(n+1)})$ satisfies \eqref{FBSDE-n} with $n$ replaced by $n+1$. By the argument in the proof of Lemma \ref{lem:convergence-XYZtheta-(1)}(1), \eqref{limit-FBSDE-n} is wellposed with $n$ replaced by $n+1$. Denote by $( \widehat X^{(n+1)},    Y^{(n+1)},   Z^{(n+1)},   Z^{0,(n+1)})$ the unique solution to \eqref{limit-FBSDE-n}. By \eqref{eq:convergence-XYZ-theta-n} and the same argument in the proof of Lemma \ref{lem:convergence-XYZtheta-(1)}(2), we have for each $p>1$
\begin{equation*}\label{eq:convergence-XYZ-theta-n+1}
	\begin{split}
	\lim_{\theta\rightarrow 0}\Big(&~ \|\widehat X^{\theta,(n+1)} - \widehat X^{(n+1)} \|_{S^p}+\|  Y^{\theta,(n+1)} -   Y^{(n+1)} \|_{S^p} \\
	&~ +\|  Z^{\theta,(n+1)} -   Z^{(n+1)} \|_{M^p}+\|  Z^{0,\theta,(n+1)} -   Z^{0,(n+1)} \|_{M^p}\Big)=0.
	\end{split}
\end{equation*}
Thus, there exists $\Delta_\theta$ with $\lim_{\theta\rightarrow 0}\Delta_\theta=0$ such that
\[
	  \varphi^{\theta,(n+1)}=  \varphi^{(n+1)}+\Delta_\theta,\qquad \varphi=\widehat X,Y,Z,Z^0,
\]
which implies that by the definition of $ \varphi^{\theta,(n+1)}$
\[
	 \varphi^{\theta,(n)}= \varphi^{(n)}+\theta \varphi^{(n+1)}+\theta\Delta_\theta.
\]
By the induction assumption \eqref{eq:induction-ass}, it holds that
\[
	\varphi^\theta=\varphi^{\bar o}+\sum_{i=1}^{n-1}\theta^i \varphi^{(i)}+\theta^n \varphi^{(n)}+\theta^{n+1} \varphi^{(n+1)}+\theta^{n+1}\Delta_\theta,
\]
which implies \eqref{expansion-XY} with $n+1$.
\end{proof}

\section{ Comments on $N$-Player Games   }\label{sec:model-N-player} 
In this section, we comment on the $N$-player game \eqref{PU-intro}-\eqref{wealth-Pi-power} introduced in the Introduction. By the same argument as in Proposition \ref{lem:NE-BSDE-power}, the NE of the $N$-player game is equivalent to a multidimensional FBSDE, whose wellposedness can be obtained by the same argument in Section \ref{sec:wellposedness}, and poses no essential difference other than notational complexity. Therefore, we will omit the detailed proof for the solvability of the FBSDE with general market parameters, but instead, we discuss the case in which all market parameters are deterministic functions and connect our result with the $N$-player games studied in \cite{ET-2015,FR-2011,LZ-2019}.

By the same argument as in Proposition \ref{lem:NE-BSDE-power}, the NE $(\pi^{1,*},\cdots,\pi^{N,*})$ of the $N$-player game with power utility functions \eqref{PU-intro}-\eqref{wealth-Pi-power} is equivalent to the following multidimensional FBSDE
\begin{equation}\label{FBSDE-Np-power}
	\left\{\begin{split}
		d\widehat X^i_t=&~	\frac{	h^i_t+\sigma^i_tZ^{ii}_t+\sigma^{i0}_tZ^{i0}_t	}{(1-\gamma^i)( (\sigma^i_t)^2+(\sigma^{i0}_t)^2 )}\left\{ \left(	h^i_t-\frac{h^i_t+\sigma^i_tZ^{ii}_t+\sigma^{i0}_tZ^{i0}_t}{2(1-\gamma^i)}\right)\,dt+\sigma^i_t\,dW^i_t+\sigma^{i0}_t\,dW^0_t		\right\}, 		\\
		-dY^i_t=&~\left\{	\frac{(Z^{ii}_t)^2+(Z^{i0}_t)^2}{2}+\frac{\sum_{j\neq i}(Z^{ij}_t)^2 }{2}+\frac{\gamma^i\left(h^i_t+\sigma^i_tZ^{ii}_t+\sigma^{i0}_tZ^{i0}_t	\right)^2}{2(1-\gamma^i) ( (\sigma^i_t)^2+(\sigma^{i0}_t)^2 ) }  \right\}\,dt\\
		&~-Z^{ii}_t\,dW^i_t-Z^{i0}_t\,dW^0_t-\sum_{j\neq i}Z^{ij}_t\,dW^j_t,\\
		\widehat X^i_0=&~\log x^i_{},~Y^i_T=- \frac{\theta^i\gamma^i}{N-1}\sum_{j\neq i}\widehat X^j_T,
	\end{split}\right.
\end{equation}
with $\widehat X^i=\log X^i$ denoting the logarithm of the wealth process, and
\begin{equation}\label{optimal-strategy-i-power}
\pi^{i,*}=\frac{h^i+\sigma^i Z^{ii}+\sigma^{i0}Z^{i0}}{(1-\gamma^i)(  (\sigma^i)^2+(\sigma^{i0})^2  )}.
\end{equation}
The proof of the wellposedness for \eqref{FBSDE-Np-power} is the same as that for \eqref{FBSDE-power-2}; we compare  \eqref{FBSDE-Np-power} with the system when $\theta^i=0$ and transform the resulting FBSDE to a BSDE by the approach in Proposition \ref{prop:transformation}, followed by an {\it a priori} estimate and fixed point argument as in Lemma \ref{wellposedness-general-QBSDE} and Theorem \ref{thm:wellpos-MF-BSDE}. Instead of the detailed proof, in the next theorem, we obtain the explicit expression of the NE when market parameters are deterministic.


\begin{theorem}\label{thm:nonconstant-NE}
	When all of the coefficients $h^i$, $\sigma^i$ and $\sigma^{i0}$, $i=1,\cdots,N$, are deterministic functions of time, the unique NE is $(\pi^{1,*},\cdots,\pi^{N,*})$, where the unique bounded optimal strategy for player $i$ is
\begin{equation}\label{optimal-strategy-i-power:constant}
	\begin{split}
		\pi^{i,*}_t=&~\frac{h^i_t  }{  (1-\gamma^i)(\sigma^i_t)^2+\left(	1-\gamma^i-\frac{\theta^i\gamma^i}{N-1}	\right)(\sigma^{i0}_t)^2 }	\\
		&~- \frac{\theta^i\gamma^i\sigma^{i0}_t}{   (1-\gamma^i)(\sigma^i_t)^2+\left(	1-\gamma^i-\frac{\theta^i\gamma^i}{N-1}	\right)(\sigma^{i0}_t)^2 }\frac{\varphi^{(1),N}_t}{1+\varphi^{(2),N}_t}, \quad t\in[0,T],\quad i=1,\cdots,N,
	\end{split}
\end{equation}
where
\begin{equation*}
	\begin{split}
		\varphi^{(1),N}=\frac{1}{N-1}\sum_{j=1}^N  \frac{h^j\sigma^{j0}}{ (1-\gamma^j)(\sigma^j)^2 +\left(1-\gamma^j-\frac{\theta^j\gamma^j}{N-1}	\right)(\sigma^{j0})^2	 }
	\end{split}
\end{equation*}
and
\[
	\varphi^{(2),N}=\frac{1}{N-1}\sum_{j=1}^N \frac{\theta^j\gamma^j(\sigma^{j0})^2}{			(1-\gamma^j)(\sigma^j)^2	+\left(	1-\gamma^j-\frac{\theta^j\gamma^j}{N-1}	\right)(\sigma^{j0})^2		}.
\]
In addition, if $\{(x^i,\theta^i,\gamma^i, h^i,\sigma^i,\sigma^{i0})\}_{i=1}^N$ are realizations from independent copies of $(x,\theta, \gamma,  h,\sigma,\sigma^0)$ in Theorem \ref{thm:constant-equilibrium-MFG}, then \eqref{optimal-strategy-i-power:constant} converges to \eqref{eq:non-constant-strategy-power}. 
\end{theorem}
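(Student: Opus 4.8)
The plan is to mirror the closed-form construction of Proposition~\ref{prop:explicit-Zt}: since all coefficients are deterministic, I would look for a solution of the multidimensional FBSDE~\eqref{FBSDE-Np-power} in which every $Z$-component is a deterministic function of time, and then read off the strategy from~\eqref{optimal-strategy-i-power}. The structural ansatz is that each player's \emph{own} idiosyncratic component vanishes, $Z^{ii}\equiv 0$, while the cross components $Z^{ij}$ ($j\neq i$) and the common-noise components $Z^{i0}$ are deterministic and to be determined. The justification is that the terminal condition $Y^i_T=-\frac{\theta^i\gamma^i}{N-1}\sum_{j\neq i}\widehat X^j_T$ does not involve $\widehat X^i_T$ and hence carries no $W^i$-exposure; once the driver is deterministic, the martingale representation of $Y^i$ forces $Z^{ii}\equiv 0$ and matches each remaining $Z$-component to the corresponding noise exposure of the terminal value.

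Carrying this out, with $Z^{ii}\equiv 0$ the forward log-wealth $\widehat X^j$ has $W^0$-exposure equal to $\sigma^{j0}\pi^{j,*}$, where $\pi^{j,*}$ is given by~\eqref{optimal-strategy-i-power}. Matching the $W^0$-component of the terminal condition produces the coupled linear system $Z^{i0}=-\frac{\theta^i\gamma^i}{N-1}\sum_{j\neq i}\sigma^{j0}\pi^{j,*}$, while matching the $W^j$-component ($j\neq i$) determines $Z^{ij}$ explicitly; the latter never enter the strategy and only fix $Y^i_0$. The decisive simplification is to set $a^i:=\sigma^{i0}\pi^{i,*}$ and $S:=\sum_{j=1}^N a^j$: substituting $Z^{i0}=-\frac{\theta^i\gamma^i}{N-1}(S-a^i)$ into $a^i=\sigma^{i0}(h^i+\sigma^{i0}Z^{i0})/[(1-\gamma^i)((\sigma^i)^2+(\sigma^{i0})^2)]$ and solving for $a^i$ gives $a^i=\frac{\sigma^{i0}h^i}{D^i}-\frac{\theta^i\gamma^i(\sigma^{i0})^2}{(N-1)D^i}S$, where $D^i:=(1-\gamma^i)(\sigma^i)^2+(1-\gamma^i-\frac{\theta^i\gamma^i}{N-1})(\sigma^{i0})^2$ is precisely the denominator appearing in~\eqref{optimal-strategy-i-power:constant}. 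Summing over $i$ collapses the $N\times N$ system to the scalar identity $S(1+\varphi^{(2),N})=(N-1)\varphi^{(1),N}$, so $S=(N-1)\varphi^{(1),N}/(1+\varphi^{(2),N})$; back-substitution yields exactly~\eqref{optimal-strategy-i-power:constant}. Along the way I would check that $D^i>0$ and $1+\varphi^{(2),N}>0$ under \textbf{Assumption 1}, so that every quantity is well-defined and bounded.

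It then remains to verify and to prove uniqueness. Plugging the deterministic tuple $(\widehat X,Y,Z)$ back into~\eqref{FBSDE-Np-power} and choosing $Y^i_0$ to absorb the (deterministic) accumulated drift, one checks directly that the $W^i$-, $W^j$-, and $W^0$-matching conditions all hold, so a bounded NE exists and~\eqref{optimal-strategy-i-power} reproduces~\eqref{optimal-strategy-i-power:constant}. For uniqueness in $L^\infty$ I would repeat the scheme of Proposition~\ref{prop:explicit-Zt}: transform~\eqref{FBSDE-Np-power} into an equivalent multidimensional BSDE via the device of Proposition~\ref{prop:transformation} and Corollary~\ref{lem:equiv-BSDE}, and observe that for bounded $Z$-components the difference of any two solutions solves a linear BSDE with bounded coefficients and therefore vanishes. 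This delivers uniqueness without any smallness restriction on $\theta$.

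Finally, for the mean-field limit I would let $N\to\infty$ with $(x^i,\theta^i,\gamma^i,h^i,\sigma^i,\sigma^{i0})$ independent copies of the representative type. The correction term $\frac{\theta^j\gamma^j}{N-1}$ inside each $D^j$ vanishes, so $D^j\to(1-\gamma^j)((\sigma^j)^2+(\sigma^{j0})^2)$, and the strong law of large numbers gives $\varphi^{(1),N}\to\mathbb E[\frac{h\sigma^0}{(1-\gamma)(\sigma^2+(\sigma^0)^2)}]$ and $\varphi^{(2),N}\to\mathbb E[\frac{\theta\gamma(\sigma^0)^2}{(1-\gamma)(\sigma^2+(\sigma^0)^2)}]$ almost surely; inserting these limits into~\eqref{optimal-strategy-i-power:constant} recovers~\eqref{eq:non-constant-strategy-power}. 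The main obstacle is the algebra of the second step: decoupling the $N\times N$ linear system for $\{Z^{i0}\}$ and, above all, the bookkeeping that makes the modified denominators $D^i$ and the common factor $1/(1+\varphi^{(2),N})$ emerge cleanly, together with verifying positivity of these denominators so the closed form is genuinely admissible.
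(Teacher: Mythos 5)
Your proposal is correct and follows essentially the same route as the paper: the ansatz of deterministic $Z$-components with $Z^{ii}\equiv 0$ forced by adaptedness of $Y^i$, the collapse of the linear system for the common-noise components to a scalar equation by summation, uniqueness via the bounded-$Z$ (hence Lipschitz) difference argument after the FBSDE-to-BSDE transformation, and the law of large numbers for the mean-field limit. Your aggregate variables $a^i=\sigma^{i0}\pi^{i,*}$ and $S=\sum_j a^j$ are merely a tidier parametrization of the weighted sum $\sum_i (\sigma^{i0})^2 Z^{i0}/\bigl[(1-\gamma^i)((\sigma^i)^2+(\sigma^{i0})^2)\bigr]$ that the paper isolates, and the resulting algebra (including the emergence of $D^i$ and the factor $1/(1+\varphi^{(2),N})$) is identical.
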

\begin{proof}
One can show that \eqref{FBSDE-Np-power} admits at most one solution in $(S^2\times S^2\times L^\infty\times L^\infty\times L^\infty)^N$.	The unique bounded optimal strategy is constructed as follows.
Let $\overline{\widehat X}^{-i}=\frac{1}{N-1}\sum_{j\neq i}\widehat X^j$ and $\overline{\widehat x}^{-i}=\frac{1}{N-1}\sum_{j\neq i}\log(x^j)$. 
By taking the average of the forward dynamics of \eqref{FBSDE-Np-power}, we get $\overline{\widehat X}^{-i}$, and by taking $\overline{\widehat X}^{-i}$ into the backward dynamics of \eqref{FBSDE-Np-power}, it holds that
\begin{align*}
		Y^i_t=&~-\theta^i\gamma^i\overline{\widehat x}^{-i}-\frac{\theta^i\gamma^i}{N-1}\sum_{j\neq i}\int_0^T\frac{	h^j_s+\sigma^j_sZ^{jj}_s+\sigma^{j0}_sZ^{j0}_s	}{(1-\gamma^j)( (\sigma^j_s)^2+(\sigma^{j0}_s)^2 )} \left(	h^j_s-\frac{h^j_s+\sigma^j_sZ^{jj}_s+\sigma^{j0}_sZ^{j0}_s}{2(1-\gamma^j)}\right)\,ds	\\
		&~+\int_t^T\left\{	\frac{(Z^{ii}_s)^2+(Z^{i0}_s)^2}{2}+\frac{\sum_{j\neq i}(Z^{ij}_s)^2 }{2}+\frac{\gamma^i\left(h^i_s+\sigma^i_sZ^{ii}_s+\sigma^{i0}_sZ^{i0}_s	\right)^2}{2(1-\gamma^i) ( (\sigma^i_s)^2+(\sigma^{i0}_s)^2 ) }  \right\}\,ds\\
		&~-\frac{\theta^i\gamma^i}{N-1}\sum_{j\neq i}\int_0^t\frac{\sigma^j_sZ^{jj}_s+\sigma^{j0}_sZ^{j0}_s+h^j_s		}{(1-\gamma^j)(	(\sigma^j_s)^2+(\sigma^{j0}_s)^2	)}\sigma^j_s\,dW^j_s\\
		&~-\frac{\theta^i\gamma^i}{N-1}\sum_{j\neq i}\int_0^t\frac{\sigma^j_sZ^{jj}_s+\sigma^{j0}_sZ^{j0}_s+h^j_s		}{(1-\gamma^j)(	(\sigma^j_s)^2+(\sigma^{j0}_s)^2	)}\sigma^{j0}_s\,dW^0_s\\
		&~-\int_t^TZ^{ii}_s\,dW^i_s+\int_t^T\sum_{j\neq i}\left\{ - \frac{\theta^i\gamma^i}{N-1}\frac{\sigma^j_sZ^{jj}_s+\sigma^{j0}_sZ^{j0}_s+h^j_s		}{(1-\gamma^j)(	(\sigma^j_s)^2+(\sigma^{j0}_s)^2	)}\sigma^j_s -Z^{ij}_s\right\}\,dW^j_s\\
		&~+\int_t^T\left\{-\frac{\theta^i\gamma^i}{N-1}\sum_{j\neq i}\frac{\sigma^j_sZ^{jj}_s+\sigma^{j0}_sZ^{j0}_s+h^j_s		}{(1-\gamma^j)(	(\sigma^j_s)^2+(\sigma^{j0}_s)^2	)}\sigma^{j0}_s -Z^{i0}_s\right\}  \,dW^0_s.
\end{align*}
To construct an adapted $Y^i$, let 
\begin{equation}\label{Z-i=0}
	\left\{\begin{split}
		0=&~Z^{ii}\\
		0=&~- \frac{\theta^i\gamma^i}{N-1}\frac{\sigma^jZ^{jj}+\sigma^{j0}Z^{j0}+h^j	}{(1-\gamma^j)(	(\sigma^j)^2+(\sigma^{j0})^2	)}\sigma^j -Z^{ij}\\
		0=&~-\frac{\theta^i\gamma^i}{N-1}\sum_{j\neq i}\frac{\sigma^jZ^{jj}+\sigma^{j0}Z^{j0}+h^j		}{(1-\gamma^j)(	(\sigma^j)^2+(\sigma^{j0})^2	)}\sigma^{j0} -Z^{i0}.
	\end{split}\right.
\end{equation}
The first and the third equalities in \eqref{Z-i=0} yield that
\begin{equation}\label{Z-i=0-2}
	\begin{split}
	&~\left(1-\frac{\theta^i\gamma^i}{N-1} \frac{(\sigma^{i0})^2}{(1-\gamma^i)((\sigma^i)^2+(\sigma^{i0})^2)} \right)Z^{i0}\\
	=&~-\frac{\theta^i\gamma^i}{N-1}\sum_{j=1}^N\frac{(\sigma^{j0})^2Z^{j0} }{(1-\gamma^j)((\sigma^j)^2+(\sigma^{j0})^2)}-\frac{\theta^i\gamma^i}{N-1}\sum_{j\neq i}\frac{h^j\sigma^{j0}}{(1-\gamma^j)(	(\sigma^j)^2+(\sigma^{j0})^2	)},
	\end{split}
\end{equation}
which further implies by multiplying $\frac{	\frac{(\sigma^{i0})^2}{ (1-\gamma^i)( (\sigma^{i})^2+(\sigma^{i0})^2) }	}{	1-\frac{\theta^i\gamma^i}{N-1} \frac{(\sigma^{i0})^2}{(1-\gamma^i)((\sigma^i)^2+(\sigma^{i0})^2)}	}$ and taking the sum from $1$ to $N$ on both sides
\begin{equation*}
	\begin{split}
&~\sum_{i=1}^N\frac{(\sigma^{i0})^2Z^{i0}}{ (1-\gamma^i) ((\sigma^{i})^2+(\sigma^{i0})^2) }\\
=&~\sum_{i=1}^N  \frac{ - \frac{\theta^i\gamma^i}{N-1} \frac{(\sigma^{i0})^2}{(1-\gamma^i)((\sigma^i)^2+(\sigma^{i0})^2)} }{   1-\frac{\theta^i\gamma^i}{N-1}\frac{(\sigma^{i0})^2}{(1-\gamma^i)((\sigma^i)^2+(\sigma^{i0})^2)}   } \sum_{j=1}^N\frac{(\sigma^{j0})^2Z^{j0}}{ (1-\gamma^j) ((\sigma^{j})^2+(\sigma^{j0})^2) }\\
&~+\sum_{i=1}^N \frac{ - \frac{\theta^i\gamma^i}{N-1} \frac{(\sigma^{i0})^2}{(1-\gamma^i)((\sigma^i)^2+(\sigma^{i0})^2)}   }{   1-\frac{\theta^i\gamma^i}{N-1}\frac{(\sigma^{i0})^2}{(1-\gamma^i)((\sigma^i)^2+(\sigma^{i0})^2)}   }\sum_{j\neq i} \frac{h^j\sigma^{j0}}{(1-\gamma^j)(	(\sigma^j)^2+(\sigma^{j0})^2	)} .
	\end{split}
\end{equation*}
From the above linear equation for $\sum_{i=1}^N\frac{(\sigma^{i0})^2Z^{i0}}{ (1-\gamma^i) ((\sigma^{i})^2+(\sigma^{i0})^2) }$, we get
\begin{equation*}
	\begin{split}
	&~\sum_{i=1}^N\frac{(\sigma^{i0})^2Z^{i0}}{ (1-\gamma^i)( (\sigma^{i})^2+(\sigma^{i0})^2) }\\
	=&~\frac{-1}{1+   \sum_{i=1}^N  \frac{  \frac{\theta^i\gamma^i}{N-1} \frac{(\sigma^{i0})^2}{(1-\gamma^i)((\sigma^i)^2+(\sigma^{i0})^2)}   }{   1-\frac{\theta^i\gamma^i}{N-1}\frac{(\sigma^{i0})^2}{(1-\gamma^i)((\sigma^i)^2+(\sigma^{i0})^2)}   } }	\sum_{i=1}^N  \frac{ \frac{\theta^i\gamma^i}{N-1} \frac{(\sigma^{i0})^2}{(1-\gamma^i)((\sigma^i)^2+(\sigma^{i0})^2)}   }{   1-\frac{\theta^i\gamma^i}{N-1}\frac{(\sigma^{i0})^2}{(1-\gamma^i)((\sigma^i)^2+(\sigma^{i0})^2)}   }	\sum_{j\neq i} \frac{h^j\sigma^{j0}}{(1-\gamma^j)(	(\sigma^j)^2+(\sigma^{j0})^2	)}    .
	\end{split}
\end{equation*}
Taking the equality back into \eqref{Z-i=0-2}, we obtain
\begin{equation*}
	\begin{split}
		Z^{i0}=&~-\frac{ \frac{\theta^i\gamma^i}{N-1} }{  1-\frac{\theta^i\gamma^i(\sigma^{i0})^2}{(N-1)(1-\gamma^i) \{ (\sigma^{i0})^2+(\sigma^0)^2 \}}  }\sum_{j\neq i}\frac{ \sigma^{j0}h^j }{(1-\gamma^j)	\{( \sigma^j )^2+( \sigma^{j0} )^2\}	}\\
		&~+\frac{	\frac{\theta^i\gamma^i}{N-1}	}{	1-\frac{\theta^i\gamma^i(\sigma^{i0})^2}{ (N-1)(1-\gamma^i)\{  (\sigma^i)^2+(\sigma^{i0})^2  \}  }	} \frac{1}{1+\varphi^{(2),N}}\sum_{i=1}^N  \frac{ \frac{\theta^i\gamma^i(\sigma^{i0})^2}{(N-1)(1-\gamma^i)\{ (\sigma^i)^2+( \sigma^{i0} )^2 \}}  }{1-\frac{\theta^i\gamma^i(\sigma^{i0})^2}{(N-1)(1-\gamma^i)\{ (\sigma^i)^2+(\sigma^{i0})^2 \}}}\sum_{j\neq i} \frac{\sigma^{j0}h^j}{(1-\gamma^j) \{  (\sigma^j)^2+(\sigma^{j0})^2 \} },
	\end{split}
\end{equation*}
where $\varphi^{(2),N}$ is defined in the statement of the theorem.
Consequently, the optimal strategy \eqref{optimal-strategy-i-power:constant} can be obtained from \eqref{optimal-strategy-i-power}. The convergence from \eqref{optimal-strategy-i-power:constant} to \eqref{eq:non-constant-strategy-power} is obtained by the law of large numbers.
%
\end{proof}

\begin{remark}
This remark discusses the link between our $N$-player game and the	$N$-player games studied by Espinosa and Touzi in \cite{ET-2015}, where all players trade common stocks; in our model, the stock prices are driven by both idiosyncratic noise and common noise.
First, when there is no trading constraint, Espinosa and Touzi obtained a unique NE by convex duality for general utility functions; refer to \cite[Theorem 3.3]{ET-2015}. Note that the analysis in \cite{ET-2015} can cover the case of power utility functions as in our paper, although power utility functions do not satisfy the Inada conditions in \cite[(2.4)]{ET-2015}. Second, when there is a general trading constraint, Espinosa and Touzi obtained a unique NE for exponential utility functions by assuming the market parameters to be deterministic and continuous; refer to \cite[Theorem 4.8]{ET-2015}. This NE was called \textit{deterministic Nash equilibrium,} which was also studied in \cite{FR-2011}. When there is a linear trading constraint, \cite[Theorem 5.2]{ET-2015} obtained an NE in closed form for exponential utility functions and deterministic market parameters. Thus, our paper partially recovers \cite[Theorem 5.2]{ET-2015} given Remark \ref{remark:exponential}. 
Third, \cite[Example 5.12--5.16]{ET-2015} obtained a closed form NE under various trading constraints by assuming the price processes of risky assets to be independent. The $N$-player game with independence assumption is similar to our model when the individual volatility $\sigma^i=0$ for all $i$; however, we have no trading constraint.
Fourth, in our paper, both idiosyncratic noise and common noise are one-dimensional, so we cannot study \textit{correlated investments} as in \cite[Example 5.17]{ET-2015}, although generalization to multi-dimension can be expected. Finally, \cite{ET-2015} also investigated convergence from $N$-player games to MFGs under deterministic market parameters; refer to \cite[Proposition 4.12, Proposition 5.7, Example 5.8 and Example 5.9]{ET-2015}.
\end{remark}

\begin{remark}[Comments on  closed-loop NE]\label{rmk:closed-lopp}     The closed-loop NE may not exist. However, once it exists, the closed-loop NE must be identical to the open-loop NE. Indeed, 
	If we assume all competing players except player $i$ use a closed-loop strategy $\pi^j(t,\theta,\gamma,\bm X)$ with $\bm X=(X^1,\cdots,X^N)$, then the same argument leading to \eqref{FBSDE-Np-power} implies that $\pi^{i,*}$ has the same expression as  \eqref{optimal-strategy-i-power}. If the market parameters $(h^i,\sigma^i,\sigma^{i0})$ are progressively measurable with respect to the filtration generated by the Browian motions, $\{Z^{ij}\}_{j=0,1,\cdots,N}$ are not necessarily deterministic functions of $\bm X$ since the nonsingularity of $(\pi^{i,*})^2(	(\sigma^i)^2+(\sigma^{i0})^2	)$ cannot be guaranteed. Consequently, the closed-loop equilibrium may not exist. 
	 Furthermore, under the assumptions in Theorem \ref{thm:nonconstant-NE}, the open-loop NE \eqref{optimal-strategy-i-power:constant}  is also a closed-loop one. 
	 
	 This comment also applies to the MFG \eqref{model-MFG-power}.
\end{remark}
As a corollary of Theorem \ref{thm:nonconstant-NE}, we recover the $N$-player games in \cite{LZ-2019} and conclude that the constant equilibrium is unique in $L^\infty$.
\begin{corollary}[\textit{Lacker \& Zariphopoulou's $N$-player games revisited}]
Assume that all of the coefficients $h^i$, $\sigma^i$ and $\sigma^{i0}$ are constants, then the NE obtained in  \eqref{optimal-strategy-i-power:constant} is unique in $L^\infty$.
Furthermore,  \eqref{optimal-strategy-i-power:constant} is consistent with \cite[Theorem 14]{LZ-2019} by taking \cite[Remark 16]{LZ-2019} into account. 
\end{corollary}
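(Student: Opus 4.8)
The plan is to obtain both assertions as specializations of Theorem \ref{thm:nonconstant-NE}, which already treats the wider class of coefficients that are deterministic functions of time. For the uniqueness claim I would simply observe that constant coefficients are a particular instance of deterministic functions of time, so the conclusion of Theorem \ref{thm:nonconstant-NE} applies verbatim: the multidimensional FBSDE \eqref{FBSDE-Np-power} admits at most one solution in $(S^2\times S^2\times L^\infty\times\cdots\times L^\infty)^N$, and \eqref{optimal-strategy-i-power:constant} is the unique bounded optimal strategy. When $h^i,\sigma^i,\sigma^{i0}$ are constants, every quantity entering \eqref{optimal-strategy-i-power:constant} — including $\varphi^{(1),N}$ and $\varphi^{(2),N}$, which reduce to finite sums of constants — is time-independent, so $\pi^{i,*}$ is itself a constant vector, and it is the unique equilibrium in $L^\infty$. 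No separate analytic input is required; the point worth stressing is that this uniqueness in $L^\infty$ is strictly stronger than the uniqueness \emph{among constant equilibria} obtained in \cite{LZ-2019}, since $L^\infty$ also contains bounded time- and path-dependent strategies.

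For the consistency claim, once the coefficients are set constant the formula \eqref{optimal-strategy-i-power:constant} collapses to an explicit constant expression, and it then remains to match it term by term with \cite[Theorem 14]{LZ-2019}. The only genuine work is a change of parametrization: \cite{LZ-2019} encodes each stock through a total volatility and a correlation with the common noise, whereas we use the pair $(\sigma^i,\sigma^{i0})$. Invoking the dictionary of \cite[Remark 16]{LZ-2019}, which identifies $(\sigma^i)^2+(\sigma^{i0})^2$ with the squared total volatility and $\sigma^{i0}$ with its common-noise component, I would substitute into \eqref{optimal-strategy-i-power:constant} and simplify; the denominator $(1-\gamma^i)(\sigma^i)^2+(1-\gamma^i-\frac{\theta^i\gamma^i}{N-1})(\sigma^{i0})^2$ together with the aggregates $\varphi^{(1),N},\varphi^{(2),N}$ should then line up exactly with the corresponding quantities in \cite{LZ-2019}.

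The step I expect to be most delicate is this final reconciliation of notation. It is purely algebraic, but one must check that the competition structure of \cite{LZ-2019}, built from the geometric average of the other players' wealths, is normalized in the same way as the factor $\frac{\theta^i\gamma^i}{N-1}$ that recurs throughout the derivation of \eqref{optimal-strategy-i-power:constant}, and that the sign and scaling conventions for the risk-aversion and competition parameters agree once \cite[Remark 16]{LZ-2019} is applied. Beyond this bookkeeping, both parts of the corollary follow immediately from Theorem \ref{thm:nonconstant-NE}.
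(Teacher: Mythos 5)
Your proposal is correct and matches the paper's treatment: the corollary is stated without a separate proof precisely because it is an immediate specialization of Theorem \ref{thm:nonconstant-NE} to constant coefficients, with uniqueness in $L^\infty$ inherited from the uniqueness of the bounded solution to the FBSDE \eqref{FBSDE-Np-power}, and the consistency with \cite[Theorem 14]{LZ-2019} reduced to the notational dictionary of \cite[Remark 16]{LZ-2019}. Your added emphasis that uniqueness in $L^\infty$ strictly strengthens the uniqueness among constant equilibria in \cite{LZ-2019} is exactly the point the paper is making.
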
 ~\\[-20mm]

\section{Conclusion} ~\\[-8mm] 
In this paper we study mean field portfolio games with random market parameters. We establish a one-to-one correspondence between the NE of the portfolio game and the solution to some mean field FBSDE. The unique NE is obtained by solving the FBSDE under a weak interaction assumption. When the market parameters do not depend on the Brownian paths, we get the NE in closed form. Our result partially generalizes the results in \cite{ET-2015} and completely generalizes the results in \cite{LZ-2019}. Moreover, motivated by the weak interaction assumption, we establish an asymptotic expansion result in powers of the competition parameter $\theta$. In particular, we expand the log-value function and the optimal investment in powers of $\theta$ into any order. This result allows us to obtain the value function and the optimal investment based only on the benchmark model without competition.  Although our paper focuses on the case of power utility functions, our analysis can also be extended to cases of exponential and log utility functions.
\begin{appendix}

\section{$\theta$-Dependent Terms in the BSDE \eqref{MF-BSDE}}\label{sec:J2}
In this section, we summarize the cumbersome $\theta$-dependent terms in the BSDE \eqref{MF-BSDE}.  To facilitate the presentation, we introduce the following notation
\begin{align*}\label{eq:notation}
	&~\phi^\sigma=1+\frac{\gamma\sigma^2}{(1-\gamma)(\sigma^2+(\sigma^0)^2)},\quad \phi^{\sigma^0}=1+\frac{\gamma(\sigma^0)^2}{(1-\gamma)(\sigma^2+(\sigma^0)^2)},\quad 	g_\cdot=-\mathbb E\left[\left.  \frac{\theta\gamma(\sigma^0_\cdot)^2	}{(1-\gamma)(  \sigma^2_{\cdot}+(\sigma^0_{\cdot})^2  )	}  \right|\mathcal F^0_\cdot		\right],\\
		&~	f^\sigma=\frac{\sigma h}{(1-\gamma)(\sigma^2+(\sigma^0)^2 )},\quad 	f^{\sigma^0}=\frac{\sigma^0 h}{(1-\gamma)(\sigma^2+(\sigma^0)^2 )},\quad f^{h}=\frac{ h^2}{(1-\gamma)(\sigma^2+(\sigma^0)^2 )},\\
		&~\psi=\frac{ \sigma\sigma^0}{ (1-\gamma)( \sigma^2+(\sigma^0)^2  ) },\quad \psi^\sigma=\frac{\sigma^2}{ (1-\gamma)( \sigma^2+(\sigma^0)^2  ) },\quad \psi^{\sigma^0}=\frac{(\sigma^0)^2}{ (1-\gamma)( \sigma^2+(\sigma^0)^2  ) }, \\
		&~\phi^{(1)}=	\frac{h\sigma^0}{(1-\gamma)(	\sigma^2+(\sigma^0)^2	)}+	\frac{\sigma\sigma^0Z^{\bar o}}{(1-\gamma)(	\sigma^2+(\sigma^0)^2	)}	+	\frac{(\sigma^0)^2Z^{0,\bar o}}{(1-\gamma)(\sigma^2+(\sigma^0)^2)},\\
		&~\phi^{(2)}=	\frac{h\sigma}{(1-\gamma)(	\sigma^2+(\sigma^0)^2	)}+	\frac{\sigma^2Z^{\bar o}}{(1-\gamma)(	\sigma^2+(\sigma^0)^2	)}	+	\frac{\sigma\sigma^0Z^{0,\bar o}}{(1-\gamma)(\sigma^2+(\sigma^0)^2)},\\
		&~\phi^{(3)}=Z^{0,\bar o}  + \frac{\gamma\sigma^0 (h+\sigma Z^{\bar o}+\sigma^0Z^{0,\bar o}	) }{(1-\gamma)(\sigma^2+(\sigma^0)^2)},\\
		&~\phi^{(4)}= Z^{\bar o}+\frac{\gamma\sigma( h+\sigma Z^{\bar o}+\sigma^0Z^{0,\bar o} ) }{(1-\gamma)(\sigma^2+(\sigma^0)^2)}.
\end{align*}
The terms that are dependent on $\theta$ follow
\[
	\mathcal J_2(t;\widetilde Z,\widetilde Z^0,\theta)=\mathcal I_1(t;\widetilde Z)+\mathcal I_2(t;\widetilde Z^0)+\mathcal I_3(t;\widetilde Z,\widetilde Z^0)+\mathcal I_4(t),
\]
 where the terms involving $\widetilde Z$ are given by
\begin{equation*}
	\begin{split}
		\mathcal I_1(t;\widetilde Z)
		=&~\left\{ \frac{\phi^{\sigma^0}_t\theta^2\gamma^2}{2(1-g_t)^2}+\theta\gamma\mathbb E\left[\left.	\frac{ \theta^2\gamma^2 \psi^{\sigma^0}_t	 }{2(1-\gamma)(1-g_t)^2}\right|\mathcal F^0_t	\right]\right\}\left(	\mathbb E[\psi_t\widetilde Z_t|\mathcal F^0_t]	\right)^2+\theta\gamma\mathbb E\left[\left. \frac{\psi^\sigma_t}{2(1-\gamma)}	(\widetilde Z_t)^2	\right|\mathcal F^0\right]\\
		&~-\frac{\theta\gamma^2\psi_t}{1-g_t}\widetilde Z_t\mathbb E[\psi_t\widetilde Z_t|\mathcal F^0_t]-\theta\gamma \mathbb E\left[	\left.	\frac{\theta\gamma\psi_t}{(1-\gamma)(1-g_t)}\widetilde Z_t	\right|\mathcal F^0_t	\right]\mathbb E[ \psi_t\widetilde Z_t|\mathcal F^0_t ]-\theta\gamma\mathbb E[	f^\sigma_t\widetilde Z_t|\mathcal F^0_t	]\\
		&~+\left\{	\frac{\theta\gamma}{1-g_t}\mathbb E[	\theta\gamma f^{\sigma^0}_t|\mathcal F^0_t	]	-\frac{\theta\gamma}{1-g_t}\phi^{(3)}_t +\frac{\theta^2\gamma^2}{(1-g_t)^2}\phi^{\sigma^0}_t\mathbb E[	\phi^{(1)}_t|\mathcal F^0_t	]	\right\} \mathbb E[ \psi_t\widetilde Z_t|\mathcal F^0_t ]\\
		&~+\theta\gamma\left\{ - \mathbb E\left[ \left.	\frac{\theta\gamma\phi_t^{(1)}}{(1-\gamma)(1-g_t)}\right|\mathcal F^0_t\right]  + \mathbb E\left[ \left.	\frac{\theta^2\gamma^2\psi_t^{\sigma^0}}{(1-\gamma)(1-g_t)^2}	\right|\mathcal F^0_t\right]	\mathbb E[ \phi^{(1)}_t |\mathcal F^0_t ]		\right\}\mathbb E[ \psi_t\widetilde Z_t|\mathcal F^0_t ]\\
		&~+\theta\gamma\mathbb E\left[	\left.	 \frac{\phi_t^{(2)}}{1-\gamma}\widetilde Z_t	\right|\mathcal F^0_t	\right]-\theta\gamma \mathbb E[ \phi^{(1)}_t |\mathcal F^0_t] \mathbb E\left[	\left.		\frac{ \theta  \gamma\psi_t}{(1-\gamma)(1-g_t)} \widetilde Z_t	\right|\mathcal F^0_t	\right]+\left\{\phi_t^{(4)}-\frac{\theta\gamma^2}{1-g_t}\psi_t\mathbb E[\phi^{(1)}_t|\mathcal F^0_t]	\right\} \widetilde Z_t,
	\end{split}
\end{equation*}	

the terms involving $\widetilde Z^0$ are given by
\begin{align*}
	\mathcal I_2(t;\widetilde Z^0)
	=&~\left\{\frac{\phi_t^{\sigma^0}\theta^2\gamma^2}{2(1-g_t)^2}+\theta\gamma		\mathbb E\left[	\left.	\frac{\theta^2  \gamma^2\psi_t^{\sigma^0}}{2(1-\gamma)(1-g_t)^2}		\right|\mathcal F^0_t	\right] \right\}\left(	\mathbb E[\psi_t^{\sigma^0}\widetilde Z^0_t|\mathcal F^0_t	]	\right)^2 -\frac{\theta\gamma\phi^{\sigma^0}_t}{1-g_t}\widetilde Z^0_t\mathbb E[ \psi^{\sigma^0}_t\widetilde Z^0_t|\mathcal F^0_t ]\\
	&~+\theta\gamma  	\mathbb E\left[	\left.	\frac{  \psi^{\sigma^0}_t}{2(1-\gamma)}	(\widetilde Z^0_t)^2	\right|\mathcal F^0	\right]	 -\theta\gamma	\mathbb E\left[	\left.	\frac{\theta  \gamma\psi_t^{\sigma^0}}{(1-\gamma)(1-g_t)}	\widetilde Z^0_t	\right|\mathcal F^0_t	\right]\mathbb E[\psi_t^{\sigma^0}\widetilde Z^0_t|\mathcal F^0_t	]	\\
	&~+	\frac{\theta\gamma}{1-g_t}\left\{\mathbb E[	\theta\gamma f^{\sigma^0}_t|\mathcal F^0_t	]	-\phi^{(3)}_t+\frac{\theta\gamma}{1-g_t}\phi^{\sigma^0}_t\mathbb E[ \phi^{(1)}_t|\mathcal F^0_t ]	\right\}\mathbb E[ \psi^{\sigma^0}_t\widetilde Z^0_t|\mathcal F^0_t ]\\
	&~	+\theta\gamma \left\{  - \mathbb E\left[\left.    \frac{\theta\gamma \phi^{(1)}_t }{(1-\gamma)(1-g_t)}   \right|\mathcal F^0_t			\right] +\mathbb E\left[	\left.  \frac{ \theta^2  \gamma^2\psi^{\sigma^0}_t }{(1-\gamma)(1-g_t)^2}		\right|\mathcal F^0_t\right]\mathbb E[\phi^{(1)}_t|\mathcal F^0_t] \right\}\mathbb E[ \psi^{\sigma^0}_t \widetilde Z^0_t|\mathcal F^0_t ] \\
	&~+\theta\gamma \mathbb E\left[\left.	 \frac{  \phi_t^{(1)}}{1-\gamma} \widetilde Z^0_t	 \right|\mathcal F^0_t	\right]		-\theta\gamma \mathbb E[\phi^{(1)}_t|\mathcal F^0_t] \mathbb E\left[  \left.	 \frac{\theta\gamma \psi^{\sigma^0}_t}{(1-\gamma)(1-g_t)}\widetilde Z^0_t	\right|\mathcal F^0_t	\right]-\theta\gamma\mathbb E[	f^{\sigma^0}_t\widetilde Z^0_t|\mathcal F^0_t	]\\	&~+\left\{\phi^{(3)}_t-\frac{\theta\gamma}{1-g_t}\phi^{\sigma^0}_t\mathbb E[\phi^{(1)}_t|\mathcal F^0_t]\right\}	\widetilde Z^0_t,
\end{align*}
the crossing terms involving both $\widetilde Z$ and $\widetilde Z^0$ are given by
\begin{equation*}
	\begin{split}
		&~\mathcal I_3(t;\widetilde Z,\widetilde Z^0_t)\\
		=&~	-\frac{\theta\gamma^2\psi}{1-g_t}\widetilde Z_t\mathbb E[ \psi^{\sigma^0}_t\widetilde Z^0_t|\mathcal F^0_t ]-\frac{\theta\gamma\phi_t^{\sigma^0}}{1-g_t}\mathbb E[	\psi_t\widetilde Z_t|\mathcal F^0_t	]\widetilde Z^0_t+\frac{\theta^2\gamma^2\phi_t^{\sigma^0}}{(1-g_t)^2}\mathbb E[\psi_t\widetilde Z_t|\mathcal F^0_t]\mathbb E[ \psi_t^{\sigma^0}\widetilde Z^0_t|\mathcal F^0_t ]\\
		&~	-\theta\gamma\mathbb E\left[\left.	   \frac{  \theta\gamma \psi_t^{\sigma^0}}{(1-\gamma)(1-g_t)} \widetilde Z^0_t \right|\mathcal F^0_t	\right] \mathbb E[\psi_t\widetilde Z_t|\mathcal F^0_t]+\theta\gamma \mathbb E\left[	\left. \frac{   \theta^2\gamma^2\psi_t^{\sigma^0}   }{(1-\gamma)(1-g_t)^2}  \right|\mathcal F^0_t	\right] \mathbb E[ \psi_t\widetilde Z_t|\mathcal F^0_t  ] \mathbb E[  \psi_t^{\sigma^0}\widetilde Z^0_t|\mathcal F^0_t  ]\\
		&~+\theta\gamma\mathbb E\left[	\left.   \frac{  \psi_t}{1-\gamma} \widetilde Z_t\widetilde Z^0_t \right|\mathcal F^0_t	\right]-\theta\gamma\mathbb E\left[\left. \frac{ \theta\gamma\psi_t  }{(1-\gamma)(1-g_t)}\widetilde Z_t \right|\mathcal F^0_t	\right] \mathbb E\left[	\psi_t^{\sigma^0}\widetilde Z^0_t|\mathcal F^0_t		\right],
	\end{split}
\end{equation*}
and the remaining terms are given by
\begin{equation*}
	\begin{split}
		\mathcal I_4(t)=&~	\frac{ (\theta\gamma)^2\phi^{\sigma^0}_t }{2(1-g_t)^2}(\mathbb E[\phi^{(1)}_t|\mathcal F^0_t])^2 -\frac{\theta\gamma}{1-g_t}\left(-	\mathbb E[ \theta\gamma f^{\sigma^0}_t |\mathcal F^0_t]+\phi^{(3)}_t		\right)\mathbb E[\phi^{(1)}_t|\mathcal F^0_t]\\
		&~+\theta\gamma\mathbb E\left[ \left.   \frac{(\sigma^2_t+(\sigma^0_t)^2)| \phi^{(1)}_t |^2  }{2}   \right|\mathcal F^0_t \right]-\theta\gamma\mathbb E[  \phi^{(1)}_t|\mathcal F^0_t ] \mathbb E\left[\left.	\frac{\theta\gamma\phi_t^{(1)}}{(1-\gamma)(1-g_t)}	\right|\mathcal F^0_t	\right]\\
		&~+\frac{\theta^3\gamma^3}{(1-g_t)^2}\mathbb E\left[ \left. \frac{\psi^{\sigma^0}_t}{2(1-\gamma)}  \right|\mathcal F^0_t \right]\left(	\mathbb E[ \phi^{(1)}_t|\mathcal F^0_t ]		\right)^2\\
		&~-\theta\gamma\mathbb E[ f^h_t+f_t^\sigma Z^{\bar o}_t  + f^{\sigma^0}_tZ^{0,\bar o}_t |\mathcal F^0_t].
	\end{split}
\end{equation*}

\section{Reverse H\"older's Inequality}\label{app:reverse}
We summarize the reverse H\"older's inequality for a general stochastic process and for the stochastic exponential of a stochastic process in the BMO space (\cite[Theorem 3.1]{Kazamaki-2006}), which is used in the main text. 

For some $p>1$, we say that a stochastic process $D$ satisfies $R_p$ if there exists a constant $C$, such that for any $[0,T]$-valued stopping time $\tau$, it holds that
\[
	\mathbb E\left[	\left|\frac{D_T}{D_\tau}		\right|^p\Big|\mathcal F_\tau	\right]\leq C.
\]

 Let $\Theta\in H^2_{BMO}$ and $B$ be a Brownian motion. Define  
\[
	\mathcal E_t(\Theta)=\mathcal E\left( \int_0^t\Theta_s\,dB_s   \right).
\]
Let $\Phi$ be a function defined on $(1,\infty)$: 
\[
	\Phi(x)=\left\{  1+\frac{1}{x^2}\log\frac{2x-1}{2(x-1)}			 \right\}^{\frac{1}{2}}-1,
\]
which is a continuous decreasing function satisfying $\lim_{x\searrow 1}\Phi(x)=\infty$ and $\lim_{x\nearrow\infty}\Phi(x)=0$. Let $p_{\Theta}$ be the unique constant, such that $\Phi(p_{\Theta})=\|\Theta\|_{BMO}$. Then we have the following reverse H\"older's inequality. 
\begin{lemma}\label{lem:reverse}
	For any $1<p<p_\Theta$ and any stopping time $\tau$, it holds that $\mathcal E(\Theta)$ satisfies $R_p$. In particular, 
	\[
		\mathbb E\left[\left. \left| \frac{\mathcal E_T(\Theta)}{\mathcal E_\tau(\Theta)}	\right|^p \right|\mathcal F_\tau\right]\leq K(p,\|\Theta\|_{BMO}), 
	\]
	where
	\begin{equation*}
		K(p, N)=\frac{2}{1-\frac{2(p-1)}{2p-1}e^{p^2( N^2+2N )}   }. \\[-3mm]
	\end{equation*}
\end{lemma}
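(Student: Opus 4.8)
The plan is to recognize Lemma~\ref{lem:reverse} as the classical reverse H\"older inequality for the stochastic exponential of a BMO martingale, and to reduce it to the exponential integrability of a quadratic variation under an auxiliary measure, where the energy/John--Nirenberg inequality \cite[Theorem 2.2]{Kazamaki-2006} (the same estimate already invoked in the proof of Lemma~\ref{wellposedness-general-QBSDE}) applies. Write $M_t=\int_0^t\Theta_s\,dB_s$, a continuous martingale with $\langle M\rangle_t=\int_0^t|\Theta_s|^2\,ds$ and $\|M\|_{BMO}=\|\Theta\|_{BMO}=:N$. Fixing a stopping time $\tau$ and abbreviating $\Delta M=M_T-M_\tau$, $\Delta\langle M\rangle=\langle M\rangle_T-\langle M\rangle_\tau\ge 0$, the quantity to be bounded is
\[
\mathbb E\left[\left(\frac{\mathcal E_T(\Theta)}{\mathcal E_\tau(\Theta)}\right)^p\Big|\mathcal F_\tau\right]=\mathbb E\left[\exp\left(p\,\Delta M-\tfrac p2\,\Delta\langle M\rangle\right)\Big|\mathcal F_\tau\right].
\]

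First I would factor out a genuine martingale. Since $pM\in H^2_{BMO}$, the exponential $\mathcal E(pM)$ is a uniformly integrable martingale, so $d\mathbb Q:=\mathcal E_T(pM)\,d\mathbb P$ is a probability measure. Completing the square gives
\[
\exp\left(p\,\Delta M-\tfrac p2\,\Delta\langle M\rangle\right)=\frac{\mathcal E_T(pM)}{\mathcal E_\tau(pM)}\exp\left(\tfrac{p(p-1)}2\,\Delta\langle M\rangle\right),
\]
and Bayes' rule for the change of measure collapses the martingale factor, leaving
\[
\mathbb E\left[\left(\frac{\mathcal E_T(\Theta)}{\mathcal E_\tau(\Theta)}\right)^p\Big|\mathcal F_\tau\right]=\mathbb E^{\mathbb Q}\left[\exp\left(\tfrac{p(p-1)}2\,\Delta\langle M\rangle\right)\Big|\mathcal F_\tau\right].
\]
By Girsanov, $M^{\mathbb Q}:=M-p\langle M\rangle$ is a continuous $\mathbb Q$-local martingale with $\langle M^{\mathbb Q}\rangle=\langle M\rangle$, so the exponent equals $\tfrac{p(p-1)}2$ times the $\mathbb Q$-quadratic variation of $M^{\mathbb Q}$.

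Next I would verify $M^{\mathbb Q}\in H^2_{BMO,\mathbb Q}$ and estimate $\|M^{\mathbb Q}\|^2_{BMO,\mathbb Q}$ in terms of $N$; this is the step that produces the factor $e^{p^2(N^2+2N)}$, since the $\mathbb Q$-conditional expectation of $\Delta\langle M\rangle$ must be expressed through the $\mathbb P$-density $\mathcal E(pM)$. With such a bound in hand, \cite[Theorem 2.2]{Kazamaki-2006} applied under $\mathbb Q$ yields
\[
\mathbb E^{\mathbb Q}\left[\exp\left(\tfrac{p(p-1)}2\,\Delta\langle M^{\mathbb Q}\rangle\right)\Big|\mathcal F_\tau\right]\le\frac{1}{1-\tfrac{p(p-1)}2\,\|M^{\mathbb Q}\|^2_{BMO,\mathbb Q}},
\]
valid exactly when the denominator is positive. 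Assembling the constants, taking the supremum over $\tau$, and simplifying, the positivity condition becomes $\|\Theta\|_{BMO}<\Phi(p)$, which by monotonicity of $\Phi$ and $\Phi(p_\Theta)=\|\Theta\|_{BMO}$ is equivalent to $1<p<p_\Theta$, and the resulting constant has the displayed form $K(p,\|\Theta\|_{BMO})$.

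The hardest part will be this last estimate: bounding the BMO norm of the Girsanov-shifted martingale $M^{\mathbb Q}$ under $\mathbb Q$ without circularity (the density's reverse H\"older constant is itself what we are trying to prove), and tracking every constant sharply enough to recover the exact threshold $\Phi(p_\Theta)=N$ and the precise value of $K(p,N)$ rather than mere qualitative finiteness. As a non-sharp sanity check that avoids the measure change, I would instead split $\exp(p\,\Delta M-\tfrac p2\,\Delta\langle M\rangle)$ by H\"older's inequality into $[\mathcal E_T(prM)/\mathcal E_\tau(prM)]^{1/r}$ times an exponential of $\Delta\langle M\rangle$, arranging the exponents so that the martingale factor contributes at most $1$ and applying the energy inequality directly under $\mathbb P$; optimizing over $r$ reproduces a reverse H\"older bound of the same qualitative shape, and I would fall back on the measure-change computation only to pin down the exact constant.
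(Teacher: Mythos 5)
Your reduction is correct as far as it goes, and it is worth noting first that the paper itself offers no proof of Lemma~\ref{lem:reverse}: the statement, including $\Phi$, $p_\Theta$ and the exact constant $K(p,N)$, is quoted verbatim from \cite[Theorem 3.1]{Kazamaki-2006}, so the paper's ``proof'' is the citation. Your main route faithfully reconstructs the first half of Kazamaki's argument: since $pM\in H^2_{BMO}$, $\mathcal E(pM)$ is a uniformly integrable martingale by \cite[Theorem 2.3]{Kazamaki-2006}; the completion of the square, the Bayes rule, and the Girsanov identification $M^{\mathbb Q}=M-p\langle M\rangle$ with $\langle M^{\mathbb Q}\rangle=\langle M\rangle$ are all correct, and the problem is indeed reduced to bounding $\mathbb E^{\mathbb Q}\bigl[\exp\bigl(\tfrac{p(p-1)}{2}\Delta\langle M\rangle\bigr)\,\big|\,\mathcal F_\tau\bigr]$.

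The genuine gap is the step you defer: you never establish the quantitative $BMO(\mathbb Q)$ bound for the shifted martingale, and this is exactly where all of the lemma's constants live. Without it the argument is circular in precisely the way you flag, since the natural route to $\mathbb E^{\mathbb Q}[\Delta\langle M\rangle\,|\,\mathcal F_\tau]=\mathbb E\bigl[\tfrac{\mathcal E_T(pM)}{\mathcal E_\tau(pM)}\Delta\langle M\rangle\,\big|\,\mathcal F_\tau\bigr]$ passes through an integrability property of the density ratio, i.e.\ the reverse H\"older inequality being proved; the non-circular device (Cauchy--Schwarz using $\bigl(\tfrac{\mathcal E_T(pM)}{\mathcal E_\tau(pM)}\bigr)^2e^{-p^2\Delta\langle M\rangle}=\tfrac{\mathcal E_T(2pM)}{\mathcal E_\tau(2pM)}$, whose conditional expectation is at most one) is the content of Kazamaki's Girsanov-invariance theorem for $BMO$ (his Theorem~3.3), which you neither prove nor cite. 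Moreover, your asserted final assembly does not match the target: a single John--Nirenberg application under $\mathbb Q$ produces a bound of the form $\bigl(1-\tfrac{p(p-1)}{2}\|M^{\mathbb Q}\|^2_{BMO,\mathbb Q}\bigr)^{-1}$, whereas the displayed $K(p,N)=2\bigl(1-\tfrac{2(p-1)}{2p-1}e^{p^2(N^2+2N)}\bigr)^{-1}$ carries a prefactor $2$, the factor $\tfrac{2(p-1)}{2p-1}$ (note $(p-1)(2p-1)=2(p-1)^2+(p-1)$), and an exponential of the $BMO$ norm --- fingerprints of an additional Cauchy--Schwarz splitting, in which a supermartingale factor of the type $\mathcal E(2(p-1)\,\cdot\,)^{1/2}$ is peeled off before John--Nirenberg is applied, so the claim that your route ``pins down the exact constant'' and recovers the sharp threshold $\|\Theta\|_{BMO}<\Phi(p)$ is unsubstantiated. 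Your fallback H\"older argument under $\mathbb P$ is sound but, as you concede, yields a strictly smaller range of $p$ than $1<p<p_\Theta$ and a different constant, so it does not prove the lemma as stated; for the purposes of this paper the correct move is simply to cite \cite[Theorem 3.1]{Kazamaki-2006}, as the authors do.
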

\end{appendix}

\bibliography{Fu}

\end{document}